\newtheorem{theorem}{Theorem}
\newtheorem{corollary}[theorem]{Corollary}
\newtheorem{lemma}[theorem]{Lemma}
\newtheorem{proposition}[theorem]{Proposition}
\newtheorem{definition}[theorem]{Definition}
\newtheorem{claim}[theorem]{Claim}
\numberwithin{equation}{section}   
\newcommand{\RR}{\mathbb R}
\newcommand{\ZZ}{\mathbb Z}
\newcommand{\QQ}{\mathbb Q}
\newcommand{\FF}{\mathbb F}
\newcommand{\GGG}{\mathcal G}
\newcommand{\CCC}{\mathcal C}
\newcommand{\FFF}{\mathcal F}
\newcommand{\MMM}{\mathcal M}
\newcommand{\NNN}{\mathcal N}
\newcommand{\III}{\mathcal I}
\newcommand{\AAA}{\mathcal A}
\newcommand{\SSS}{\mathcal S}
\newcommand{\TTT}{\mathcal T}
\newcommand{\PPP}{\mathcal P}
\newcommand{\QQQ}{\mathcal Q}
\newcommand{\OOO}{\mathcal O}
\newcommand{\LLL}{\mathcal L}
\newcommand{\WWW}{\mathcal W}
\newcommand{\closure}{\operatorname{closure}}
\newcommand{\In}{\operatorname{In}}
\newcommand{\Out}{\operatorname{Out}}
\newcommand{\MinCost}{\operatorname{mincost}}
\newcommand{\bd}{\operatorname{bd}}
\begin{document}
%
%
%
%
%
%
%
%

\title{On Network Coding Capacity - Matroidal Networks and Network Capacity Regions}

\author{Anthony Eli Kim}
       \prevdegrees{S.B., Electrical Engineering and Computer Science (2009), and \\
                    S.B., Mathematics (2009) \\
                    Massachusetts Institute of Technology}
\department{Department of Electrical Engineering and Computer Science}

\degree{Master of Engineering in Electrical Engineering and Computer Science}

\degreemonth{September}
\degreeyear{2010}
\thesisdate{August 5, 2010}


\supervisor{Muriel M\'{e}dard}{Professor of Electrical Engineering and Computer Science}

\chairman{Christopher J. Terman}{Chairman, Department Committee on Graduate Theses}

\maketitle



\cleardoublepage
\setcounter{savepage}{\thepage}
\begin{abstractpage}
%
%
%
One fundamental problem in the field of network coding is to determine the network coding capacity of networks under various network coding schemes. In this thesis, we address the problem with two approaches: matroidal networks and capacity regions.

In our matroidal approach, we prove the converse of the theorem which states that, if a network is scalar-linearly solvable then it is a matroidal network associated with a representable matroid over a finite field. As a consequence, we obtain a correspondence between scalar-linearly solvable networks and representable matroids over finite fields in the framework of matroidal networks. We prove a theorem about the scalar-linear solvability of networks and field characteristics. We provide a method for generating scalar-linearly solvable networks that are potentially different from the networks that we already know are scalar-linearly solvable.

In our capacity region approach, we define a multi-dimensional object, called the network capacity region, associated with networks that is analogous to the rate regions in information theory. For the network routing capacity region, we show that the region is a computable rational polytope and provide exact algorithms and approximation heuristics for computing the region. For the network linear coding capacity region, we construct a computable rational polytope, with respect to a given finite field, that inner bounds the linear coding capacity region and provide exact algorithms and approximation heuristics for computing the polytope. The exact algorithms and approximation heuristics we present are not polynomial time schemes and may depend on the output size. 
\end{abstractpage}


\cleardoublepage

\section*{Acknowledgments}
I would like to thank my advisor Muriel M\'{e}dard for her kind support, encouragement, and guidance throughout the course of this work. I am greatly indebted to her not only for introducing me to the field of network coding, but also for providing invaluable advices leading to my intellectual and personal developments. I would also like to thank Una-May O'Reilly for her support and guidance during the early developments leading up to this work and Michel Goemans for thoughtful discussions on parts of the work. I would like to thank members of the Network Coding and Reliable Communications Group of RLE for insightful discussions on network coding problems and for helpful personal advices on many subjects. Finally, I would like to thank my parents, Kihwan Kim and Guiyeum Kim, and sister Caroline for their overflowing love and support throughout my life; and, I would like to thank close friends and members of the MIT community for making my years at MIT unforgettable and, most importantly, enjoyable.

\pagestyle{plain}
\tableofcontents
\newpage
\listoffigures
\newpage
\listoftables
\newpage
\listofalgorithms

\chapter{Introduction}\label{chapter:intro}

Network coding is a field at the intersection of network information theory and coding theory. The central idea of the field of network coding is that increased capabilities of intermediate nodes lead to improvements in information throughput of the network. In the traditional routing model of information networks, intermediate nodes simply copy and forward incoming packets, and the information flows were modeled as source-to-sink paths and Steiner trees. In the network coding model, intermediate nodes are now allowed more complicated operations to code on incoming packets and forward packets that might differ significantly from the incoming packets. It has been shown numerous times that the network coding model allows greater information throughput than in the traditional routing model and its applicability has been widely researched. One fundamental problem in the field of network coding is to determine the network coding capacity, the maximum amount of information throughput, of networks under various network coding schemes. In this work, we address the problem with two approaches: matroidal networks and capacity regions. 


\section{Network Coding Model}
We give a network coding model that we will use in this work. Most of it is adapted from \cite{dougherty:matroid}. Further additional definitions are relegated to relevant chapters. Throughout the work, we assume that the networks are acyclic and the edges (or links) between nodes are delay-free and error-free.

\begin{definition}[Network]
A {\em network} $\NNN$ is a finite, directed, acyclic multigraph given by a 6-tuple $(\nu, \epsilon, \mu, \AAA, S, R)$ where  
\begin{enumerate}
\item $\nu$ is a node set,
\item $\epsilon$ is an edge set,
\item $\mu$ is a message set,
\item $\AAA$ is an alphabet,
\item $S:\nu \rightarrow 2^\mu$ is a source mapping, and
\item $R:\nu \rightarrow 2^\mu$ is a receiver mapping.
\end{enumerate}
\end{definition}

We use a pair of nodes $(x,y)$ to denote a directed edge from node $x$ to node $y$; $x$ is the {\em start node} and $y$ is the {\em end node}. For each node $x$, if $S(x)$ is nonempty then $x$ is a {\em source} and if $R(x)$ is nonempty then $x$ is a {\em receiver}. The elements of $S(x)$ are called the {\em messages generated by $x$} and the elements of $R(x)$ are called the {\em messages demanded by $x$}. An {\em alphabet} $\AAA$ is a finite set with at least two elements. Each instance of a message is a vector of elements from the alphabet. For each node $x$, let $\In(x)$ denote the set of messages generated by $x$ and in-edges of $x$. Let $\Out(x)$ denote the set of messages demanded by $x$ and out-edges of $x$. For each node $x$, we fix an ordering of $\In(x)$ and $\Out(x)$ such that all messages occur before the edges in the resulting lists. In our definition of networks, there could be multiple source nodes and multiple receiver nodes with arbitrary demands. 
	
There are several special classes  of networks: {\em unicast networks} where there are exactly one message, one source and one receiver; {\em multicast networks} where there are exactly one message and one source, but an arbitrary number of receivers that demand the message; {\em two-level multicast networks} where there are multiple messages and there are exactly one source node that generates all the network messages and two receivers where one demands all the messages and the other demands a subset of the messages; {\em multiple unicast networks} where there are multiple messages and, for each message $m$, we have the unicast condition; {\em multiple multicast networks} where there are multiple messages and, for each message $m$, we have the multicast condition. A general network has multiple source nodes, multiple receiver nodes, and arbitrary demands of messages; they are sometimes referred to as {\em multi-source multi-sink networks} in literature. A multicast network in literature usually has multiple messages, but we restrict multicast networks to those with a single message in this work.

We define edge function, decoding function, message assignment and symbol function with respect to a finite field $F$ of cardinality greater than or equal to $|\AAA|$. We choose such $F$ so that each element from $\AAA$ can be uniquely represented with an element from $F$. 

\begin{definition}[Edge and Decoding Functions]
Let $k$ and $n$ be positive integers. For each edge $e=(x,y)$, an {\em edge function} is a map 
\[
f_e: (F^k)^\alpha \times (F^n)^\beta \rightarrow F^n,
\]
where $\alpha$ and $\beta$ are number of messages generated by $x$ and in-edges of $x$, respectively. For each node $x\in \nu$ and message $m\in R(x)$, a {\em decoding function} is a map 
\[
f_{x,m}: (F^k)^\alpha \times (F^n)^\beta \rightarrow F^k,
\]
where $\alpha$ and $\beta$ are number of messages generated by $x$ and in-edges of $x$, respectively. We call $k$ and $n$ the {\em source dimension} and {\em edge dimension}, respectively.
\end{definition}

Each source sends a message vector of length $k$ and each edge carries a message vector of length $n$. We denote the collections of edge and decoding functions by $\FFF_e = \{f_e \: :\: e\in \epsilon\}$ and $\FFF_{d}=\{f_{x,m} \: :\: x\in \nu, m\in R(x)\}$. 

\begin{definition}[Message Assignment]
A {\em message assignment} is a map $a:\mu \rightarrow F^k$, i.e., each message is assigned with a vector from $F^k$. 
\end{definition}

\begin{definition}[Symbol Function]
A {\em symbol function} is a map $s:\epsilon \rightarrow F^n$ defined recursively, with respect to $\NNN$ and $\FFF_e$, such that for all $e=(x,y) \in \epsilon$,
\[
s(e) = f_e(a(m_1), \ldots, a(m_\alpha), s(e_{\alpha+1}), \ldots, s(e_{\alpha+ \beta})), 
\]	
where $m_1, \ldots, m_\alpha$ are the messages generated by $x$ and $e_{\alpha+1}, \ldots, e_{\alpha+\beta}$ are the in-edges of $x$. Note that the symbol function is well-defined as network $\NNN$ is a directed acyclic multigraph. 
\end{definition}

\begin{definition}[Network Code]
A {\em network code} on $\NNN$ is a 5-tuple $(F, k, n,\FFF_e, \FFF_d)$ where 
\begin{enumerate}
\item $F$ is a finite field, with $|F| \geq |\AAA|$,
\item $k$ is a source dimension,
\item $n$ is an edge dimension,
\item $\FFF_e$ is a set of edge functions on network $\NNN$,
\item $\FFF_d$ is a set of decoding functions on network $\NNN$.
\end{enumerate}
\end{definition}

We shall use the prefix $(k,n)$ before codes when we wish to be more specific on parameters $k$ and $n$. When $k$ and $n$ are clear from the context, we will sometimes omit them. There are several special classes of network codes: {\em routing network codes}, where edge and decoding functions simply copy input vector components to output vector components, {\em linear network codes}, where edge and decoding functions are linear over $F$, and {\em nonlinear network codes}, where edge and decoding functions are nonlinear over $F$. {\em Vector-linear network codes} are linear network codes with $k=n$. {\em Scalar-linear network codes} are linear network codes with $k=n=1$. 

\begin{definition}[Network Code Solution]
A network code $(F,k,n,\FFF_e, \FFF_d)$ is a {\em network code solution}, or {\em solution} for short, if for every message assignment $a:\mu \rightarrow F^k$,
\[
f_{x,m}(a(m_1), \ldots, a(m_\alpha), s(e_{\alpha+1}), \ldots, s(e_{\alpha+\beta})) = a(m),
\]
for all $x\in \nu$ and $m\in R(x)$. Note that $m_1, \ldots, m_\alpha$ are messages generated by $x$, and $e_{\alpha+1}, \ldots, e_{\alpha+\beta}$ are in-edges of $x$. If the above equation holds for a particular node $x\in \nu$ and message $m\in R(x)$, then we say {\em node $x$'s demand $m$ is satisfied}.
\end{definition}

A network $\NNN$ is {\em routing-solvable} if it has a routing network code solution. Similarly, we say that network $\NNN$ is {\em linearly solvable} ({\em scalar-linearly solvable, vector-linearly solvable, nonlinearly solvable}) if it has a {\em linear} ({\em scalar-linear, vector-linear, nonlinear}) network code solution.

\section{Previous Works}
In a seminal work in 2000, Ahlswede et al.~\cite{ahlswede:network} introduced the network coding model to the problem of communicating information in networks. They showed that the extended capabilities of intermediate nodes to code on incoming packets give greater information throughput than in the traditional routing model. They also showed that the capacity of any multiple multicast network of a certain class is equal to the minimum of min-cuts between the source node and receiver nodes.

Single source networks and linear network coding are comparatively well-understood. Li et al.~\cite{li:linear} showed that linear network coding is sufficient for certain multiple multicast networks. Koetter and M\'{e}dard \cite{koetter:algebra} reduced the problem of determining scalar-linear solvability to solving a set of polynomial equations over some finite field and suggested connections between scalar-linearly solvable networks and nonempty varieties in algebraic geometry. They showed that scalar-linear solvability of many special case networks, such as two-level multicasts, can be determined by their method. Dougherty et al.~\cite{dougherty:poly} strengthened the connection by demonstrating solvably equivalent pairs of networks and polynomial collections; for any polynomial collection, there exists a network that is scalar-linearly solvable over field $F$ if and only if the polynomial collection is solvable over $F$. It is known that scalar-linear network codes are not sufficient in general. The M-network due to Koetter in \cite{medard:mnetwork} is a network with no scalar-linear solution but has a vector-linear solution. Lehman and Lehman~\cite{lehman} using 3-CNF formulas also provided an example where a vector solution is necessary.

More recently, matroidal approaches to analyze networks have been quite successful. Dougherty et al.~\cite{dougherty:construction,dougherty:matroid} defined and studied matroidal networks and suggested connections between networks and matroids. They used matroidal networks constructed from well-known matroids to show  in \cite{dougherty:insuff} that not all solvable networks have a linear solution over some finite-field alphabet and vector dimension. They also constructed a matroidal network to show that Shannon-type information inequalities are not sufficient for computing network coding capacities in general. Recently, El Rouayheb et al.~\cite{rouayheb:matroid} strengthened the connection between networks and matroids by constructing ``solvably equivalent'' pairs of networks and matroids via index codes with their own construction method; the network has a vector-linear solution over a field if and only if the matroid has a multilinear representation over the same field. In another recent work \cite{sun:networkmatroid}, Sun et al. studied the matroid structure of single-source networks which they define as network matroid and showed connections between the network matroids and a special class of linear network codes.

The capacity regions of networks are less well-understood, but a few explicit outer bounds of capacity regions of networks exist. One easy set of outer bounds is the max-flow/min-cut bounds, which were sufficient in the case of certain multiple multicast networks. Harvey et al.\cite{harvey:capacity} combined information theoretic and graph theoretic techniques to provide a computable outer bound on the network coding capacity regions of networks. Yan et al.\cite{yan:outerbd} gave an explicit outer bound for networks that improved upon the max-flow/min-cut outer bound and showed its connection to a kind of  minimum cost network coding problem. They used their results to compute the capacity region of a special class of 3-layer networks. Thakor et al.\cite{thakor:capacity} gave a new computable outer bound, based on characterizations of all functional dependencies in networks, that is provably tighter than those given in \cite{harvey:capacity} and \cite{yan:outerbd}. 

Recently, explicit characterizations of capacity regions, albeit hard to compute, of networks were given using information theoretic approaches. Yan et al.\cite{yan:capacity} provided an exact characterization of the capacity regions for general multi-source multi-sink networks by bounding the constrained regions in the entropy space. However, they noted that explicitly evaluating the obtained capacity regions remains difficult in general. In a related work, Chan and Grant\cite{chan:capacity} showed that even the explicit characterization of capacity regions for single-source networks can be difficult since the computation of a capacity region reduces to the determination of the nonpolyhedral set of all entropy functions and that linear programming bounds do not suffice.

The routing capacity regions of networks are better understood via linear programming approaches. Cannons et al.\cite{cannons:routing} defined the notion of network routing capacity that is computable with a linear program and showed that every rational number in $(0,1]$ is the routing capacity of some solvable network. Yazdi et al.\cite{yazdi:capacity1, yazdi:capacity2} extended a special case of Farkas Lemma called the ``Japanese Theorem'' to reduce an infinite set of linear constraints to a set of finitely many linear constraints in terms of minimal Steiner trees and applied the results to obtain the routing capacity region of undirected ring networks. In a subsequent work, Kakhbod and Yazdi\cite{kakhbod:routing} provided the complexity results on the description size of the finitely many inequalities obtained in \cite{yazdi:capacity1, yazdi:capacity2} and apply them to the undirected ring networks. 

\section{Our Results}
We organize our contributions into two parts: matroidal networks and network capacity regions. In both approaches, we provide examples to demonstrate our main ideas.

\subsection{Matroidal Networks}
In our matroidal approach, we further study the matroidal networks introduced by Dougherty et al.~\cite{dougherty:matroid}. Our contributions can be summarized as follows and we refer to Chapter~\ref{chapter:matroidal} for details:
\begin{enumerate}
\item We prove the converse of a theorem in \cite{dougherty:matroid} which states that, if a network is scalar-linearly solvable then it is a matroidal network associated with a representable matroid over a finite field.
\item We prove a theorem about the scalar-linear solvability of networks and field characteristics. 
\item We provide a method for generating scalar-linearly solvable networks that are potentially different from the networks that we already know are scalar-linearly solvable.
\end{enumerate}

As a consequence, we obtain a correspondence between scalar-linearly solvable networks and representable matroids over finite fields in the framework of matroidal networks. It also follows that determining scalar-linear solvability of a network $\NNN$ is equivalent to determining the existence of a representable matroid $\MMM$ over a finite field and a valid network-matroid mapping between $\MMM$ and $\NNN$. We obtain a set of scalar-linearly solvable networks that are potentially different from the networks that are already known to be scalar-linearly solvable. 

\subsection{Network Capacity Regions}
In our work concerning the network capacity regions, we continue the research along the lines of work by Cannons et al.~\cite{cannons:routing}. Our contributions can be summarized as follows and we refer to Chapter~\ref{chapter:regions} for details:
\begin{enumerate}
\item We define the network capacity region of networks and prove its notable properties: closedness, boundedness and convexity.
\item We show that the network routing capacity region is a computable rational polytope and provide exact algorithms and approximation heuristics for computing the region.
\item We define the semi-network linear coding capacity region that inner bounds the corresponding network linear coding capacity region, show that it is a computable rational polytope and provide exact algorithms and approximation heuristics for computing it. 
\end{enumerate}

While we present our results for the general directed acyclic networks, they generalize to directed networks with cycles and undirected networks. We note that the algorithms and heuristics we provide do have not polynomial running time in the input size. As our notion of the multi-dimensional network capacity region captures the notion of the single-dimensional network capacity in \cite{cannons:routing}, our present work, in effect, addresses a few open problems proposed by Cannons et al.~\cite{cannons:routing}: whether there exists an efficient algorithm for computing the network routing capacity and whether there exists an algorithm for computing the network linear coding capacity. It follows from our work that there exist combinatorial approximation algorithms for computing the network routing capacity and for computing a lower bound of the network linear coding capacity.

\chapter{Matroidal Networks Associated with Representable Matroids}\label{chapter:matroidal}

In this chapter, we further study the matroidal networks introduced by Dougherty et al.~\cite{dougherty:matroid}. We prove the converse of a theorem in \cite{dougherty:matroid} which states that, if a network is scalar-linearly solvable then it is a matroidal network associated with a representable matroid over a finite field. From \cite{dougherty:matroid} and our present work, it follows that a network is scalar-linearly solvable if and only if it is a matroidal network associated with a representable matroid over a finite field. The main idea of our work is to construct a scalar-linear network code from the network-matroid mapping between the matroid and network. Thereby, we show a correspondence between scalar-linearly solvable networks and representable matroids over finite fields in the framework of matroidal networks. It follows that determining scalar-linear solvability of a network $\NNN$ is equivalent to determining the existence of a representable matroid $\MMM$ over a finite field and a valid network-matroid mapping between $\MMM$ and $\NNN$. We also prove a theorem about the scalar-linear solvability of networks and field characteristics. Using our result and the matroidal network construction method due to Dougherty et al., we note that networks constructed from representable matroids over finite fields are scalar-linearly solvable. The constructed networks are potentially different from the classes of networks that are already known to be scalar-linearly solvable. It is possible that our approach provides a superset, but this is unknown at this time.

\section{Definitions}\label{sec:globallinear}

\begin{definition}[Global Linear Network Code]
A {\em global linear network code} is a 5-tuple $(F$, $k$, $n$, $\phi_{msg}$, $\phi_{edge})$ where 
\begin{enumerate}
\item $F$ is a finite field, with $|F| \geq |\AAA|$, 
\item $k$ is a source dimension, 
\item $n$ is an edge dimension,
\item $\phi_{msg}$ is the global coding vector function on messages, $\phi_{msg}:\mu \rightarrow (F^{k \times k})^{\lvert \mu \rvert}$, such that for message $m$, $\phi_{msg}(m) = (M_1, \ldots, M_{\lvert \mu \rvert})^{\rm T}$ where $M_i$ is a $k \times k$ matrix over $F$, and 
\item $\phi_{edge}$ is the global coding vector function on edges, $\phi_{edge}: \epsilon \rightarrow (F^{n \times k})^{\lvert \mu \rvert}$, such that for each edge $e$, $\phi_{edge}(e) = (M_1, \ldots, M_{\lvert \mu \rvert})^{\rm T}$ where $M_i$ is a $n\times k$ matrix over $F$.
\end{enumerate}
\end{definition}

\begin{definition}[Global Linear Network Code Solution]\label{def:globallinear}
A global linear network code $(F$, $k$, $n$, $\phi_{msg}$, $\phi_{edge})$ is a {\em global linear network code solution}, if $|F| \geq |\AAA|$ and the following conditions are satisfied: 
\begin{enumerate}
\item For each message $m \in \mu$, $\phi_{msg}(m) =$ $ (0, \ldots, 0$,$ I^{k\times k}$, $0, \ldots, 0)^{\rm T}$ where $I^{k\times k}$ is the $k\times k$ identity matrix over $F$ and is in the coordinate corresponding to message $m$.
\item For each node $x \in \nu$ and edge $e\in \Out(x)$, if $\phi_{edge}(e) = (M_1, \ldots, M_{\lvert \mu \rvert})^{\rm T}$, then there exist matrices $C_1, \ldots, C_{\alpha+\beta}$ over $F$ such that $M_i = \sum_{j=1}^{\alpha+\beta} C_j M_i^j$, for $i=1,\ldots, \lvert \mu\rvert$.
\item For each node $x \in \nu$ and message $m\in \Out(x)$, if $\phi_{msg}(m) = (M_1, \ldots, M_{\lvert \mu \rvert})^{\rm T}$, then there exist matrices $C'_1, \ldots, C'_{\alpha+\beta}$ over $F$ such that $M_i = \sum_{j=1}^{\alpha+\beta} C'_j M_i^j$, for $i=1,\ldots, \lvert \mu\rvert$.
\end{enumerate}
Where, if $m_1, \ldots, m_\alpha$ are messages generated by $x$ and $e_{\alpha+1}, \ldots, e_{\alpha+\beta}$ are in-edges of $x$, $\phi_{msg}(m_j) = (M_1^j, \ldots, M_{\lvert \mu \rvert}^j)^{\rm T}$ for $j=1, \ldots ,\alpha$ and $\phi_{edge}(e_j) = (M_1^j, \ldots, M_{\lvert \mu\rvert}^j)^{\rm T}$ for $j=\alpha+1, \ldots ,\alpha+\beta$; $C_1, \ldots, C_\alpha$ are $n\times k$ matrices and $C_{\alpha+1}, \ldots, C_{\alpha+ \beta}$ are $n \times n$ matrices that would appear as coefficients in a linear edge function; and $C'_1, \ldots, C'_\alpha$ are $k\times k$ matrices and $C'_{\alpha+1}, \ldots, C'_{\alpha+ \beta}$ are $k \times n$ matrices that would appear as coefficients in a linear decoding function.
\end{definition}

As with the network codes, we shall sometimes use the prefix $(k,n)$ to emphasize the source and edge dimensions or omit $k$ and $n$ if they are clear from the context. It is straightforward to check that the notions of linear network code solution and global linear network code solution are equivalent, as noted in previous works in algebraic network coding (for instance, \cite{koetter:algebra} for the $k=n=1$ case).  
\begin{proposition}
Let $\NNN = (\nu, \epsilon, \mu, \AAA, S,R)$ be a network. Then, $\NNN$ has a $(k,n)$ linear network code solution if and only if it has a $(k,n)$ global linear network code solution. 
\end{proposition}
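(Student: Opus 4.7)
The plan is to prove both directions by making explicit the bijection between (i) the coefficient matrices that appear inside a linear edge/decoding function and (ii) the global coding vectors of the associated edges and messages. The key invariant to track is: for every message assignment $a:\mu \to F^k$ and every edge $e$, the symbol $s(e)$ equals $\sum_{i=1}^{\lvert \mu \rvert} M_i \, a(m_i)$, where $\phi_{edge}(e) = (M_1,\ldots,M_{\lvert \mu \rvert})^{\rm T}$ and $m_1,\ldots,m_{\lvert \mu \rvert}$ is a fixed ordering of $\mu$.

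For the forward direction, assume $\NNN$ has a $(k,n)$ linear network code solution $(F,k,n,\FFF_e,\FFF_d)$. Define $\phi_{msg}(m)$ to be the block vector whose only nonzero block is the $k \times k$ identity in the coordinate of $m$, so Condition~1 of Definition~\ref{def:globallinear} holds automatically. Since each edge function $f_e$ is linear, I can write it in the unique form $f_e(x_1,\ldots,x_{\alpha+\beta}) = \sum_{j=1}^{\alpha+\beta} C_j x_j$ for matrices $C_j$ of the appropriate sizes ($n \times k$ for message inputs, $n \times n$ for edge inputs). Processing edges in a topological order of the DAG, I define $\phi_{edge}(e) := \sum_{j=1}^{\alpha+\beta} C_j \, \phi(\text{input}_j)$, interpreted blockwise across the $\lvert \mu \rvert$ coordinates; this makes Condition~2 hold by construction. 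By a straightforward induction on the topological order, the invariant $s(e) = \sum_i M_i\, a(m_i)$ holds for every edge and every message assignment. Condition~3 then follows by applying the same linear-expansion trick to each decoding function $f_{x,m}$: linearity yields coefficient matrices $C'_j$, and since $f_{x,m}(\cdots) = a(m)$ for \emph{all} assignments $a$, the block vector $\sum_j C'_j\, \phi(\text{input}_j)$ must be identically equal to $\phi_{msg}(m)$.

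For the reverse direction, assume a global linear network code solution exists. For each edge $e$, take the matrices $C_1,\ldots,C_{\alpha+\beta}$ guaranteed by Condition~2 and define the linear edge function $f_e(x_1,\ldots,x_{\alpha+\beta}) := \sum_j C_j x_j$; similarly define each decoding function $f_{x,m}$ from the matrices $C'_j$ of Condition~3. Again by induction in topological order, I verify the invariant $s(e) = \sum_i M_i\, a(m_i)$: the base case at source-adjacent edges uses $\phi_{msg}(m) = I$-in-coordinate-$m$ (Condition~1), and the inductive step uses Condition~2 to identify the matrix $M_i$ for edge $e$ with $\sum_j C_j M_i^j$. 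Finally, for any demand $(x,m)$, the decoding output evaluates to
\[
\sum_{j=1}^{\alpha+\beta} C'_j \sum_{i=1}^{\lvert \mu \rvert} M_i^j\, a(m_i) = \sum_i \Bigl(\sum_j C'_j M_i^j\Bigr) a(m_i) = \sum_i M_i\, a(m_i) = a(m),
\]
where the middle equality uses Condition~3 and the last uses Condition~1.

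I do not expect any real obstacle; the content is essentially a careful bookkeeping exercise in block matrices, analogous to the standard $k=n=1$ correspondence in \cite{koetter:algebra}. The only place that requires a little care is the forward direction's argument that the implicit coefficient matrices $C'_j$ of a \emph{working} decoder combine with the inductively built global vectors to reproduce $\phi_{msg}(m)$; this uses the fact that two linear maps $F^{k\lvert \mu \rvert} \to F^k$ agreeing on all inputs must coincide coefficient-wise, which is where the hypothesis that $f_{x,m}$ satisfies the decoding equation for \emph{every} $a$ is essential.
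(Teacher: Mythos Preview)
Your proposal is correct and follows essentially the same approach as the paper's proof: both directions hinge on processing edges in a topological order, defining $\phi_{msg}$ as the block-identity vectors, building $\phi_{edge}$ inductively from the coefficient matrices of the linear edge functions, and tracking the invariant $s(e)=\sum_i M_i\,a(m_i)$. Your writeup of the reverse direction is in fact more explicit than the paper's, which merely sketches it by saying the construction is similar.
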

\begin{proof}
Let $(F,k,n, \FFF_e, \FFF_d)$ be a $(k,n)$ linear network code solution for $\NNN$. Since $\NNN$ is a directed acyclic graph, we can order nodes in $\nu$ with a topological sort so that each edge go from a lower-ranked node to a higher-ranked node. The ordering of nodes induces an ordering of edges $\hat{e}_1, \ldots, \hat{e}_{|\epsilon|}$ such that no path exists from $\hat{e}_i$ to $\hat{e}_j$ for $i>j$. We define $\phi_{msg}$ for all $m$ and $\phi_{edge}$ for $\hat{e}_1, \ldots, \hat{e}_{|\epsilon|}$ in that order: 

\begin{enumerate}
	\item For each message $m$, we define $\phi_{msg}(m) = (0, \ldots, 0, I^{k\times k}, 0, \ldots, 0)^{\rm T}$ where $I^{k\times k}$ is the $k\times k$ identity matrix and is in the coordinate corresponding to $m$.
	\item For each $\hat{e}_j=(x,y)$, the edge function $f_{\hat{e}_j}$ can be written as 
		\[
		f_{\hat{e}_j}(a(m_1), \ldots, a(m_\alpha), s(e_{\alpha+1}), \ldots, s(e_{\alpha+\beta})) = \sum_{l=1}^\alpha C_l \cdot a(m_l) + \sum_{l=\alpha+1}^{\alpha+\beta} C_l \cdot s(e_l),
		\]
		where $m_1, \ldots, m_\alpha$ are messages generated by $x$ and $e_{\alpha+1}, \ldots, e_{\alpha+\beta}$ are in-edges of $x$; and $C_1, \ldots, C_\alpha$ are $n\times k$ matrices and $C_{\alpha+1}, \ldots, C_{\alpha+\beta}$ are $n\times n$ matrices over $F$. Let $\phi_{msg}(m_j) = (M_1^j , \ldots, M_{\lvert \mu \rvert}^j)^{\rm T}$ for $j=1, \ldots, \alpha$ and $\phi_{edge}(e_j) = (M_1^j , \ldots, M_{\lvert \mu \rvert}^j)^{\rm T}$ for $j=\alpha+1, \ldots, \alpha+\beta$. We define $\phi_{edge}(\hat{e}_j) = (M_1,\ldots, M_{\lvert \mu \rvert})^{\rm T}$ where $M_i = \sum_{l=1}^{\alpha+\beta} C_l \cdot M_i^l$.
	\end{enumerate}

Note that $s(\hat{e}_j) = \sum_{i=1}^{\lvert \mu \rvert} M_i \cdot a(m_i)$. By construction, $(F, k, n, \phi_{msg}, \phi_{edge})$ is a valid $(k,n)$ global linear code that satisfies the first two properties of global linear network code solutions. We check the third property. For each $x\in \nu$ and $m\in R(x)$, the decoding function $f_{x,m}$ can be written as 
\[
f_{x,m}(a(m_1), \ldots, a(m_\alpha), s(e_{\alpha+1}), \ldots, s(e_{\alpha+\beta})) = \sum_{l=1}^\alpha C_l \cdot a(m_l) + \sum_{l=\alpha+1}^{\alpha+\beta} C_l \cdot s(e_l), 
\]
and $f_{x,m}(a(m_1), \ldots, a(m_\alpha), s(e_{\alpha+1}), \ldots, s(e_{\alpha+\beta})) = a(m)$. Note that $m_1, \ldots, m_{\alpha}$ are messages generated at $x$ and $e_{\alpha+1}, \ldots, e_{\alpha+\beta}$ are in-edges of $x$; $C_1, \ldots, C_{\alpha}$ are $k\times k$ matrices and $C_{\alpha+1},\ldots$, $C_{\alpha+\beta}$ are $k\times n$ matrices. Let $\phi_{msg}(m_j) = (M_1^j , \ldots, M_{\lvert \mu \rvert}^j)^{\rm T}$ for $j=1, \ldots, \alpha$ and $\phi_{edge}(e_j) = (M_1^j , \ldots, M_{\lvert \mu \rvert}^j)^{\rm T}$ for $j=\alpha+1, \ldots, \alpha+\beta$. It follows that $[\phi_{msg}(m)]_i = \sum_{l=1}^{\alpha+\beta} C_l \cdot M_i^l$ for all $i$, where $[\phi_{msg}(m)]_i$ denotes the $i$-th coordinate of $\phi_{msg}(m)$; $(F, k,n, \phi_{msg}, \phi_{edge})$ is a $(k,n)$ global linear network code solution.

The converse direction is similar and so we only sketch the proof. Let $(F,k,n,\phi_{msg}, \phi_{edge})$ be a $(k,n)$ global linear network code solution for $\NNN$. For each edge $e$, we define edge function $f_e$ by
\[
f_{e}(a(m_1), \ldots, a(m_\alpha), s(e_{\alpha+1}), \ldots, s(e_{\alpha+\beta})) = \sum_{l=1}^\alpha C_l \cdot a(m_l) + \sum_{l=\alpha+1}^{\alpha+\beta} C_l \cdot s(e_l),
\]
where $C_1, \ldots, C_{\alpha+\beta}$ are some matrices satisfying Definition~\ref{def:globallinear}. For each $x\in \nu$ and $m\in R(x)$, we define decoding function $f_{x,m}$ similarly using matrices $C_1, \ldots, C_{\alpha+\beta}$ from Definition~\ref{def:globallinear}.
\end{proof}

We have the following corollaries from the definitions:
\begin{corollary}
Let $\NNN = (\nu, \epsilon, \mu, \AAA, S,R)$ be a network. Then, $\NNN$ has a $(k,k)$ vector-linear network code solution if and only if it has a $(k,k)$ global vector-linear network code solution.
\end{corollary}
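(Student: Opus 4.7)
The plan is to observe that this corollary is an immediate specialization of the preceding Proposition to the case $n=k$, together with the definitional fact recalled earlier in the chapter that a vector-linear network code is simply a linear network code with $k=n$ (and analogously a global vector-linear network code is a global linear network code with $k=n$). So the corollary should require no new argument at all beyond invoking the proposition.

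More concretely, I would first note that if $\NNN$ has a $(k,k)$ vector-linear solution $(F,k,k,\FFF_e,\FFF_d)$, then by definition this is a $(k,n)$ linear network code solution with $n=k$, so by the preceding proposition there exists a $(k,k)$ global linear network code solution $(F,k,k,\phi_{msg},\phi_{edge})$. Since the edge dimension equals the source dimension, this global linear solution is by definition a global vector-linear solution. The converse direction is entirely symmetric: any $(k,k)$ global vector-linear solution is in particular a $(k,k)$ global linear solution, which by the proposition yields a $(k,k)$ linear solution, which is vector-linear by the $n=k$ condition.

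The only thing worth double-checking when specializing is that the matrix shapes in the construction of the proposition remain consistent under $n=k$: the coefficient matrices $C_l$ for messages are $n\times k = k\times k$ and those for edges are $n\times n = k\times k$ in the edge-function direction, and $k\times k$ and $k\times n = k\times k$ respectively in the decoding-function direction. All shapes collapse to $k\times k$, so both the forward and reverse constructions in the proposition's proof carry over verbatim.

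There is essentially no obstacle here; if anything, the only mild subtlety is making the definitional identification between the ``vector-linear'' qualifier at the level of edge/decoding functions and the same qualifier at the level of global coding matrices, but this is immediate from the constructions used in the proposition (they preserve the edge dimension). I would therefore keep the proof of the corollary to one or two short sentences citing the proposition with $n=k$.
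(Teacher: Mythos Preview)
Your proposal is correct and matches the paper's treatment: the paper states this result as an immediate corollary of the preceding proposition (specialized to $n=k$) together with the definition of vector-linear codes as linear codes with $k=n$, and gives no separate proof. Your brief check that all coefficient matrices collapse to $k\times k$ is a harmless elaboration of what the paper leaves implicit.
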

\begin{corollary}
Let $\NNN = (\nu, \epsilon, \mu,\AAA, S,R)$ be a network. Then, $\NNN$ has a scalar-linear network code solution if and only if it has a global scalar-linear network code solution.
\end{corollary}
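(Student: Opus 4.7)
The plan is to observe that this corollary is an immediate specialization of the preceding proposition. By definition, a scalar-linear network code is precisely a linear network code with $k=n=1$, and likewise a global scalar-linear network code is a global linear network code with $k=n=1$. So I would simply invoke the proposition with the parameters $k=n=1$: it asserts that $\NNN$ admits a $(k,n)$ linear network code solution iff it admits a $(k,n)$ global linear network code solution, and specializing this biconditional to $k=n=1$ yields exactly the corollary.

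Concretely, I would write one or two sentences stating the above, perhaps briefly noting that in the $(1,1)$ case the matrices $M_i$ assigned by $\phi_{msg}$ and $\phi_{edge}$ are scalars in $F$, and the coefficient matrices $C_j$ produced in the forward and reverse directions of the proposition's proof likewise reduce to scalars, so both constructions from the proof of the proposition go through verbatim. No new content is needed beyond pointing at the proposition.

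I do not anticipate any obstacle: the only thing to be careful about is making sure the definition of global scalar-linear network code is read consistently as the $k=n=1$ instance of Definition~\ref{def:globallinear}, which is immediate from the syntactic form of that definition. Thus the proof is a one-line deduction from the previous proposition.
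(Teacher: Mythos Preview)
Your proposal is correct and matches the paper's treatment: the paper states this result as an immediate corollary (with no separate proof) following the general $(k,n)$ proposition, so specializing that proposition to $k=n=1$ is exactly what is intended.
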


In this chapter, we will focus on scalar-linear network codes, that is linear network codes with $k=n=1$.
\section{Matroids}

We define matroids and three classes of matroids. See \cite{oxley:matroid} for more background on matroids.

\begin{definition}
A matroid $\MMM$ is an ordered pair $(\SSS, \III)$ consisting of a set $\SSS$ and a collection $\III$ of subsets of $\SSS$  satisfying the following conditions:
	\begin{enumerate}
	\item $\emptyset \in \III$;
	\item If $I \in \III$ and $I'\subseteq I$, then $I'\in \III$;
	\item If $I_1$ and $I_2$ are in $\III$ and $|I_1|<|I_2|$, then there is an element $e$ of $I_2 \setminus I_1$ such that $I_1 \cup \{e\} \in \III$. 
	\end{enumerate}
\end{definition}
The set $\SSS$ is called the {\em ground set} of the matroid $\MMM$. A subset $X$ of $\SSS$ is an {\em independent set} if it is in $\III$; $X$ is a {\em dependent set} if not. A {\em base} $B$ of $\MMM$ is a maximal independent set; for all elements $e\in \SSS \setminus B$, $B \cup \{e\} \notin \III$. It can be shown that all bases have the same cardinality. A {\em circuit} of $\MMM$ is a minimal dependent set; for all elements $e$ in $C$, $C \setminus \{e\} \in \III$. For each matroid, there is an associated function $r$ called {\em rank} that maps the power set $2^\SSS$ into the set of nonnegative integers. The rank of a set $X \subseteq \SSS$ is the maximum cardinality of an independent set contained in $X$. 

\begin{definition}[Matroid Isomorphism]
Two matroids $\MMM_1=(\SSS_1, \III_1)$ and $\MMM_2=(\SSS_2, \III_2)$ are isomorphic if there is a bijection map $\psi$ from $\SSS_1$ to $\SSS_2$ such that for all $X \subseteq \SSS_1$, $X$ is independent in $\MMM_1$ if and only if $\psi(X)$ is independent in $\MMM_2$. 
\end{definition}

\begin{definition}[Uniform Matroids]
Let $c,d$ be nonnegative integers such that $c\leq d$. Let $\SSS$ be a $d$-element set and $\III$ be the collection $\{X \subseteq \SSS \: : \: \lvert X \rvert \leq c\}$. We define the uniform matroid of rank $c$ on the $d$-element set to be $U_{c,d} = (\SSS, \III)$.
\end{definition}

\begin{definition}[Graphic Matroids]
Let $G$ be an undirected graph with the set of edges, $\SSS$. Let $\III=\{X\subseteq \SSS \medspace : \medspace X \text{ does not contain a cycle}\}$. We define the graphic matroid associated with $G$ as $\MMM(G) = (\SSS,\III)$.
\end{definition}

\begin{definition}[Representable/Vector Matroid]
Let $A$ be a $d_1 \times d_2$ matrix over some field $F$. Let $\SSS = \{1,\ldots, d_2\}$ where element $i$ in $\SSS$ corresponds to the $i$th column vector of $A$ and $\III = \{X\subseteq \SSS \medspace :$ $\medspace \text{corresponding}$ $\text{column}$ $\text{vectors}$ $\text{form}$ $\text{an}$ $\text{independent}$ $\text{set}\}$. We define the vector matroid associated with $A$ as $\MMM(A) = (\SSS, \III)$. A matroid $\MMM$ is $F$-representable if it is isomorphic to a vector matroid of some matrix over field $F$. A matroid is representable if it is representable over some field. Note that $F$ is not necessarily finite.
\end{definition}

The bases of $U_{c,d}=(\SSS,\III)$ are exactly subsets of $\SSS$ of cardinality $c$ and the circuits are subsets of $\SSS$ of cardinality $c+1$. Each base of $\MMM(G)$ is a spanning forest of $G$, hence an union of spanning trees in connected components of $G$, and each circuit is a single cycle within a connected component. It is known that the graphic matroids are representable over any field $F$. On the other hand, the uniform matroid $U_{2,4}$ is not representable over $GF(2)$. 

\section{Matroidal Networks}

We define matroidal networks and present a method for constructing matroidal networks from matroids; for more details and relevant results, we refer to \cite{dougherty:matroid}. 
\begin{definition}\label{def:matroidal} 
Let $\NNN$ be a network with message set $\mu$, node set $\nu$, and edge set $\epsilon$. Let $\MMM=(\SSS,\III)$ be a matroid with rank function $r$. The network $\NNN$ is a {\em matroidal network} associated with $\MMM$ if there exists a function $f:\mu \cup \epsilon\rightarrow \SSS$, called the {\em network-matroid mapping}, such that the following conditions are satisfied:
\begin{enumerate}
\item $f$ is one-to-one on $\mu$;
\item $f(\mu)\in \III$;
\item $r(f(\In(x))) = r(f(\In(x)\cup \Out(x)))$, for every $x\in \nu$.
\end{enumerate}
We define $f(A)$ to be $\{f(x) \:|\: x\in A\}$ for a subset $A$ of $\mu\cup \epsilon$.
\end{definition}

\begin{theorem}[Construction Method]\label{thm:construction}
Let $\MMM=(\SSS, \III)$ be a matroid with rank function $r$. Let $\NNN$ denote the network to be constructed, $\mu$ its message set, $\nu$ its node set, and $\epsilon$ its edge set. Then, the following construction method will construct a matroidal network $\NNN$ associated with $\MMM$. We do not address issues of complexity of the method. 

We choose the alphabet $\AAA$ to be any set with at least two elements. The construction will simultaneously construct the network $\NNN$, the network-matroid mapping $f:\mu\cup\epsilon \rightarrow \SSS$, and an auxiliary function $g:\SSS \rightarrow \nu$, where for each $x\in\SSS$, $g(x)$ is either
\begin{enumerate}
\item a source node with message $m$ and $f(m)=x$; or
\item a node with in-degree 1 and whose in-edge $e$ satisfies $f(e)=x$.
\end{enumerate}
The construction is completed in 4 steps and each step can be completed in potentially many different ways:

{\noindent}\textit{\underline{Step 1}}: Choose any base $B=\{b_1, \ldots, b_{r(\SSS)}\}$ of $\MMM$. Create network source nodes $n_1, \ldots, n_{r(\SSS)}$ and corresponding messages $m_1, \ldots, m_{r(\SSS)}$, one at each node. Let $f(m_i)=b_i$ and $g(b_i)=n_i$. 

{\noindent}\textit{\underline{Step 2}}: (to be repeated until no longer possible).\\
Find a circuit $\{x_0, \ldots, x_j\}$ in $\MMM$ such that $g(x_1),\ldots, g(x_j)$ have been already defined but not $g(x_0)$. Then we add:
\begin{enumerate}
\item a new node $y$ and edges $e_1, \ldots, e_j$ such that $e_i$ connects $g(x_i)$ to $y$. Let $f(e_i)=x_i$.
\item a new node $n_0$ with a single in-edge $e_0$ that connects $y$ to $n_0$. Let $f(e_0) =x_0$ and $g(x_0) =n_0$. 
\end{enumerate}

{\noindent}\textit{\underline{Step 3}}: (can be repeated arbitrarily many times).\\
If $\{x_0, \ldots, x_j\}$ is a circuit of $\MMM$ and $g(x_0)$ is a source node with message $m_0$, then add to the network a new receiver node $y$ which demands the message $m_0$ and has in-edges $e_1, \ldots, e_j$ where $e_i$ connects $g(x_i)$ to $y$. Let $f(e_i) =x_i$. 

{\noindent}\textit{\underline{Step 4}}: (can be repeated arbitrarily many times).\\
Choose a base $B=\{x_1, \ldots, x_{r(\SSS)}\}$ of $\MMM$ and create a receiver node $y$ that demands all the network messages and has in-edges $e_1, \ldots, e_{r(\SSS)}$ where $e_i$ connects $g(x_i)$ to $y$. Let $f(e_i)=x_i$. 
\end{theorem}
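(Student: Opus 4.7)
The plan is to verify the three conditions of Definition~\ref{def:matroidal} by induction on the number of construction steps executed, together with an auxiliary invariant on $g$. Conditions~(1) and~(2) are immediate from Step~1 alone: messages are introduced only there, in bijection with the chosen base $B$, and no subsequent step alters $\mu$ or $f|_\mu$, so $f$ is injective on $\mu$ and $f(\mu) = B \in \III$ throughout. Acyclicity of $\NNN$ is maintained because each new edge in Steps~2--4 has a freshly created node as its head, precluding any directed cycle.

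The substantive content is condition~(3). To handle it I would first establish the following $g$-invariant and maintain it through the induction: whenever $g(x) = n$, the unique ``input element'' of $n$ (the generated message if $n$ is a Step~1 source, or the single in-edge $e_0$ if $n$ is an $n_0$-node from Step~2) is mapped by $f$ to $x$, and every out-edge subsequently attached to $n$ in any application of Step~2, 3, or 4 is also mapped to $x$. This is built into the construction, since out-edges are exclusively created by the clause ``$e_i$ connects $g(x_i)$ to $y$'' with $f(e_i) := x_i$, which matches $g^{-1}(n) = x$.

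Condition~(3) then reduces to a case analysis over the four kinds of nodes produced. For any node $x$ in the image of $g$, the invariant yields $f(\In(x)) = f(\In(x) \cup \Out(x))$ as sets, so~(3) is trivial. For the intermediate node $y$ built in Step~2 from a circuit $\{x_0, \ldots, x_j\}$, we have $f(\In(y)) = \{x_1, \ldots, x_j\}$ (independent, rank $j$) and $f(\In(y) \cup \Out(y)) = \{x_0, \ldots, x_j\}$ (a circuit, also rank $j$). The same rank identity covers the Step~3 receiver, whose sole demand $m_0$ satisfies $f(m_0) = x_0$ by the $g$-invariant. For a Step~4 receiver demanding all messages along a base $B' = \{x_1, \ldots, x_{r(\SSS)}\}$, both $f(\In(y)) = B'$ and $f(\In(y) \cup \Out(y)) = B' \cup B$ have rank $r(\SSS)$ since $B'$ is itself a base.

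The main obstacle I anticipate is the bookkeeping needed to verify that the $g$-invariant survives an arbitrary interleaving of Step~2, 3, and 4 applications: the construction is highly nondeterministic, the same circuit may be available at many moments, and one must confirm that every node falls into exactly one of the four cases above and that $g$ is never required to be redefined on an element where it is already defined. Once the invariant and this classification are firmly established, the three matroidal-network conditions fall out from elementary properties of circuits and bases reviewed in the previous section.
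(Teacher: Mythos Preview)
The paper does not prove Theorem~\ref{thm:construction}; it is stated as a result imported from Dougherty, Freiling, and Zeger~\cite{dougherty:matroid}, and the paper moves directly from its statement to the next theorem. So there is no in-paper argument to compare against.

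Your proposed proof is correct in outline and in its details. The $g$-invariant you isolate is exactly the right mechanism: since every out-edge in the construction is created by the clause ``$e_i$ connects $g(x_i)$ to $y$ with $f(e_i)=x_i$,'' and since $g$ is injective (each value of $g$ is a freshly created node), any out-edge of a node $n=g(x)$ is forced to have $f$-image $x$. This collapses condition~(3) for nodes in the image of $g$, and your circuit/base rank computations handle the remaining three node types cleanly.

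Your anticipated obstacle is milder than you suggest. The construction is not fully interleaved: Step~2 is exhausted before Steps~3 and~4 begin, and the guard ``$g(x_0)$ not yet defined'' in Step~2 guarantees $g$ is never redefined. Moreover, the four node types are disjoint by construction (source nodes from Step~1, intermediate $y$ and leaf $n_0$ from Step~2, receivers from Steps~3 and~4 are all distinct freshly created nodes), so the classification is automatic rather than something requiring a separate argument. Once you observe this, the induction is essentially a one-pass verification with no real bookkeeping burden.
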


The following theorem is from \cite{dougherty:matroid}. The original theorem states with a representable matroid, but the same proof still works with a representable matroid over a finite field. 
\begin{theorem}\label{thm:scalarmatroidal}
If a network is scalar-linearly solvable over some finite field, then the network is matroidal. Furthermore, the network is associated with a representable matroid over a finite field.
\end{theorem}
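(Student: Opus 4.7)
The plan is to build the desired representable matroid directly from a scalar-linear solution by reading off the global coding vectors, and then to verify that the induced map from messages and edges to vectors satisfies the three axioms of a network-matroid mapping. By the proposition in Section~\ref{sec:globallinear}, we may replace the given scalar-linear solution on $\NNN$ by an equivalent global scalar-linear solution $(F,1,1,\phi_{msg},\phi_{edge})$. With $k=n=1$, each $\phi_{msg}(m)$ and each $\phi_{edge}(e)$ is a column vector in $F^{|\mu|}$. By Condition~1 of Definition~\ref{def:globallinear}, the vectors $\phi_{msg}(m)$ for $m\in\mu$ are precisely the standard basis vectors of $F^{|\mu|}$; by Conditions~2 and~3, the vector attached to any out-edge or demanded message at a node $x$ lies in the $F$-span of the vectors attached to $\In(x)$.

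Next, I would form the $|\mu|\times(|\mu|+|\epsilon|)$ matrix $A$ over $F$ whose columns are, in some fixed order, the vectors $\phi_{msg}(m)$ for $m\in\mu$ followed by the vectors $\phi_{edge}(e)$ for $e\in\epsilon$, and take $\MMM$ to be the vector matroid $\MMM(A)$. Because $F$ is finite, $\MMM$ is representable over a finite field. Define $f:\mu\cup\epsilon\to\SSS$ by sending each message or edge to the index of its column in $A$. Then $f$ is a natural candidate for the network-matroid mapping.

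The verification is the core of the argument and is short. Condition~1 of Definition~\ref{def:matroidal}, that $f$ is one-to-one on $\mu$, is immediate since messages are mapped to distinct standard basis columns. Condition~2, that $f(\mu)\in\III$, is immediate for the same reason: the standard basis vectors are linearly independent in $F^{|\mu|}$. Condition~3 requires that $r(f(\In(x)))=r(f(\In(x)\cup\Out(x)))$ for every node $x$. Recall that $\Out(x)$ consists of out-edges of $x$ together with messages demanded by $x$. For any out-edge $e\in\Out(x)$, Condition~2 of Definition~\ref{def:globallinear} expresses $\phi_{edge}(e)$ as an $F$-linear combination of the vectors assigned to the generated messages and in-edges at $x$, i.e.\ of the vectors indexed by $f(\In(x))$. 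For any demanded message $m\in R(x)\cap\Out(x)$, Condition~3 of Definition~\ref{def:globallinear} gives the same conclusion for $\phi_{msg}(m)$. Hence every column indexed by $f(\Out(x))$ lies in the span of the columns indexed by $f(\In(x))$, so adding them does not raise the rank, which is exactly Condition~3.

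The one place to be a bit careful is handling source nodes and receiver nodes correctly inside this unified statement: at a pure source, $\In(x)$ consists of the generated messages, and at a receiver, $\Out(x)$ includes the demanded messages, but in both cases Definition~\ref{def:globallinear} supplies the required linear combination. I expect this case bookkeeping, rather than any deep step, to be the main subtlety; once it is written out, all three conditions of Definition~\ref{def:matroidal} are verified and the conclusion that $\NNN$ is a matroidal network associated with the $F$-representable matroid $\MMM(A)$ follows.
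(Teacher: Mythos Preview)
Your argument is correct and is exactly the standard construction: assemble the global coding vectors of messages and edges as columns of a matrix over $F$, take the associated vector matroid, and let $f$ send each message/edge to its column index; Conditions~1--2 of Definition~\ref{def:matroidal} follow from the message columns being the standard basis, and Condition~3 follows from Conditions~2--3 of Definition~\ref{def:globallinear}. Note, however, that the paper does not supply its own proof of this theorem---it is quoted from Dougherty et al.~\cite{dougherty:matroid} with the remark that the original proof goes through when one restricts to finite fields---so there is no in-paper argument to compare against; your write-up is precisely the proof one would expect and matches the cited approach.
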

\section{Scalar-linear Solvability}
We prove the converse of Theorem~\ref{thm:scalarmatroidal} and that a network is scalar-linearly solvable over a finite field of characteristic $p$ if and only if the network is a matroidal network associated with a representable matroid over a finite field of characteristic $p$. In what follows, we assume that $d_2 \geq d_1$. 

\begin{lemma}\label{lem:rep}
Let $A$ be a $d_1 \times d_2$ matrix over a finite field $F$ and $\MMM(A)$ be the corresponding representable matroid. Then, there exists an arbitrarily large finite field $F'$ and a $d_1 \times d_2$ matrix $A'$ over $F'$ such that the corresponding matroid $\MMM(A')$ is isomorphic to $\MMM(A)$. 
\end{lemma}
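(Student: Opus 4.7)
The plan is to realize $F'$ as a field extension of $F$ and take $A'$ to be the same matrix $A$ regarded as having entries in $F'$. Then the bijection $\psi = \mathrm{id}$ on the common ground set $\{1,\ldots,d_2\}$ will be the required matroid isomorphism, and the whole proof reduces to checking that linear independence of a set of columns is invariant under base change from $F$ to $F'$.

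Concretely, write $F = GF(p^k)$ where $p$ is the characteristic of $F$. For any positive integer $l$, the field $F' = GF(p^{kl})$ contains $F$ as a subfield, and by choosing $l$ large we can make $|F'|$ exceed any prescribed bound, delivering the ``arbitrarily large'' clause. Define $A'$ to be $A$ with entries viewed inside $F'$, so $A'$ is literally the same array. It remains to show, for every subset $X \subseteq \{1,\ldots,d_2\}$, that the columns indexed by $X$ are $F$-linearly independent if and only if they are $F'$-linearly independent.

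One direction is immediate: any $F$-linear dependence is automatically an $F'$-linear dependence. For the converse, suppose the columns $\{v_i : i \in X\}$ admit a nontrivial $F'$-dependence $\sum_{i \in X} c_i v_i = 0$ with $c_i \in F'$ not all zero. This is a homogeneous linear system with coefficient matrix having entries in $F$, so its solution space over $F'$ is obtained from the solution space over $F$ by extension of scalars, i.e.\ $\{c \in (F')^{|X|} : A_X c = 0\} = \{c \in F^{|X|} : A_X c = 0\} \otimes_F F'$, where $A_X$ is the submatrix of $A'$ with columns in $X$. Hence the $F'$-solution space is nonzero if and only if the $F$-solution space is nonzero, so a nontrivial $F$-linear dependence among the $v_i$ also exists. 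This can be argued most cleanly by Gaussian elimination over $F$: the reduced row echelon form of $A_X$ computed entirely inside $F$ determines whether the columns are independent, and the same computation gives the same reduced form when carried out inside $F'$.

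Putting the pieces together, the identity map on $\{1,\ldots,d_2\}$ sends independent sets of $\MMM(A)$ to independent sets of $\MMM(A')$ and vice versa, so $\MMM(A') \cong \MMM(A)$ as required. No step looks like a genuine obstacle here; the only mild care is in expressing the ``linear independence is field-independent'' fact cleanly, which is why I would lean on the explicit Gaussian-elimination argument over $F$ rather than invoking tensor products.
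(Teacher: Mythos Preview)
Your proposal is correct and takes essentially the same approach as the paper: pass to an extension field $F'\supseteq F$, keep $A'=A$, and verify that linear (in)dependence of columns is unchanged under the base change. The only cosmetic difference is that the paper argues preservation of independence via a determinant (extend to a basis of $F^{d_1}$ and note the determinant is nonzero in both fields), whereas you use Gaussian elimination / extension of scalars on the solution space; both are standard and equally valid.
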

\begin{proof}
We show that any finite field $F'$ that contains $F$ as a subfield works; for instance, extension fields of $F$. We consider the same matrix $A$ over $F'$, so choose $A'=A$, and show that a set of column vectors of $A$ is independent over $F$ if and only if it is independent over $F'$. Assume columns $v_1, \ldots, v_k$ are dependent by some scalars $a_i$'s in $F$, $a_1 v_1 + \cdots + a_k v_k = 0$. Since $F'$ contains $F$, all operations with elements of the subfield $F$ stay in the subfield, and the same scalars still work in $F'$, i.e., $a_1 v_1 + \cdots + a_k v_k = 0$ in $F'$. Hence, the vectors are dependent over $F'$. Assume column vectors $v_1, \ldots, v_k$ are independent over $F$. We extend the set of vectors to a basis of $F^{d_1}$. Then, the matrix formed by the basis has a nonzero determinant over $F$. By similar reasons as before, the same matrix has a nonzero determinant when considered as a matrix over $F'$. Hence, the column vectors of the basis matrix are independent over $F'$ and, in particular, the column vectors $v_1, \ldots, v_k$ are independent over $F'$.
\end{proof}

\begin{theorem}\label{thm:converse}
If a network $\NNN$ is matroidal and is associated with a representable matroid over a finite field $F$, then $\NNN$ is scalar-linearly solvable.
\end{theorem}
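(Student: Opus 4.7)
The plan is to produce a global scalar-linear network code solution, which by the proposition proved earlier is equivalent to a scalar-linear solution. Since $\MMM$ is representable over the finite field $F$, it is isomorphic to $\MMM(A)$ for some matrix $A$ over $F$; by Lemma~\ref{lem:rep} we may replace $F$ by a large enough finite extension and assume $|F| \geq |\AAA|$ without changing any linear dependence among the columns. For each $s \in \SSS$, let $v_s \in F^{d_1}$ denote the column of $A$ corresponding to $s$ under the isomorphism, so that independence in $\MMM$ matches linear independence of the $v_s$.

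Write $\mu = \{m_1, \ldots, m_q\}$, index $F^q$ by $\mu$, and let $e_m \in F^q$ denote the standard basis vector in coordinate $m$. Since $f$ is one-to-one on $\mu$ and $f(\mu) \in \III$, the vectors $v_{f(m_1)}, \ldots, v_{f(m_q)}$ are linearly independent in $F^{d_1}$, so I can extend them to a basis of $F^{d_1}$ and define a linear map $L : F^{d_1} \to F^q$ by $L(v_{f(m_i)}) = e_{m_i}$, with $L$ sent to $0$ on the complementary basis vectors. Now set
\[
\phi_{msg}(m) := e_m \text{ for } m \in \mu, \qquad \phi_{edge}(e) := L(v_{f(e)}) \text{ for } e \in \epsilon.
\]
This defines a candidate global scalar-linear code.

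To verify the conditions of Definition~\ref{def:globallinear}, condition (1) holds by construction. For (2) and (3), fix $x \in \nu$. The third matroidal-network axiom $r(f(\In(x))) = r(f(\In(x) \cup \Out(x)))$ forces every element of $f(\Out(x))$ to lie in the matroid closure of $f(\In(x))$, which in the representation means $v_{f(z)} \in \Span\{v_{f(w)} : w \in \In(x)\}$ for every $z \in \Out(x)$. Expressing $v_{f(z)}$ as an explicit scalar combination of the $v_{f(w)}$ and applying $L$ (using $L(v_{f(m)}) = \phi_{msg}(m)$ for input messages and $L(v_{f(e)}) = \phi_{edge}(e)$ for in-edges), the same scalars express $\phi_{edge}(z)$ or $\phi_{msg}(z)$ as a linear combination of the input global vectors; since $k = n = 1$, the ``matrices'' $C_j, C'_j$ in Definition~\ref{def:globallinear} are exactly these scalars, giving (2) for out-edges and (3) for demanded messages.

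The only genuinely nontrivial step is passing to an extension field: a minimal representation of $\MMM$ could live over a field too small to accommodate $\AAA$, which is precisely why Lemma~\ref{lem:rep} was established first. Everything else is a clean dictionary between matroid closure and linear span, mediated by the linear map $L$, and the content of the proof is that the third axiom of a network-matroid mapping is exactly the linear-algebraic condition needed to compute every out-edge vector and every decoded message vector from the inputs at each node.
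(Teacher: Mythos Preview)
Your proof is correct and shares the paper's core strategy: enlarge $F$ via Lemma~\ref{lem:rep}, then use the rank condition $r(f(\In(x)))=r(f(\In(x)\cup\Out(x)))$ to write each out-vector in the representation as a linear combination of in-vectors, which furnishes the required global scalar-linear code.

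The execution differs in one respect worth noting. The paper extends $f(\mu)$ to a basis \emph{using further columns of $A$}, row-reduces $A$ to the form $[I_{d_1}\mid A']$, and introduces dummy messages $m_{|\mu|+1},\ldots,m_{d_1}$ so that global coding vectors can simply be the columns of the reduced matrix in $F^{d_1}$; the dummies are discarded at the end. Your linear map $L:F^{d_1}\to F^{|\mu|}$ collapses all of this into one step, landing the coding vectors directly in $F^{|\mu|}$ with no auxiliary messages. The two are equivalent---row reduction followed by projecting off the last $d_1-|\mu|$ coordinates is exactly such an $L$---but your formulation is a bit cleaner and makes the ``dictionary between matroid closure and linear span'' more transparent.
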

\begin{proof}
Let $\NNN =(\nu, \epsilon, \mu, \AAA,S,R)$ be a matroidal network. Let $A$ be the $d_1 \times d_2$ matrix over the finite field $F$ such that $\NNN$ is a matroidal network associated with the corresponding matroid $\MMM(A)=(\SSS,\III)$. By Lemma~\ref{lem:rep}, we assume that the finite field $F$ is large enough to represent all elements in $\AAA$, i.e., $\lvert F \rvert \geq \lvert \AAA \rvert$. By Definition~\ref{def:matroidal}, there exists a network-matroid mapping $f:\mu \cup \epsilon \rightarrow \SSS$. Assume $r(\SSS) = d_1$; otherwise, we remove redundant rows without changing the structure of the matroid. Let $f(\mu)=\{i_1, \ldots, i_{\lvert \mu \rvert}\}$. As $f(\mu)\in \III$, the columns indexed by $f(\mu)$ form an independent set.  We extend $f(\mu)$ to a basis $B$ of $F^{d_1}$, if necessary, by adding column vectors of $A$. Without loss of generality, assume the first $d_1$ columns of $A$ form the basis $B$ after reordering. By performing elementary row operations, we uniquely express $A$ in the form
\begin{equation*}
A = [ I_{d_1}\: | \: A']
\end{equation*}
where $A'$ is a $d_1 \times (d_2-d_1)$ matrix and such that $\{i_1, \ldots, i_{\lvert \mu \rvert}\}$ now corresponds to the first $\lvert \mu \rvert$ columns of $A$. Note that the structure of the corresponding matroid stays the same. We introduce dummy messages $m_{\lvert \mu \rvert +1}, \ldots, m_{d_1}$, if necessary, by adding a disconnected node that generates these messages. We assign global coding vectors on the resulting $\NNN$ as follows:
	\begin{enumerate}
	\item for each edge $e$, let $\phi_{edge}(e) = A_{f(e)}$; and 
	\item for each message $m$, let $\phi_{msg}(m) = A_{f(m)}$,
	\end{enumerate}
where $A_i$ denotes the $i$-th column of $A$. We show that the global linear network code defined above is valid and satisfies all the demands. For each node $x\in \nu$, we have $r(f(\In(x))) = r(f(\In(x) \cup \Out(x)))$. It follows that for each edge $e \in \Out(x)$, $A_{f(e)}$ is a linear combination of $\{ A_{f(e')} \: : \: e'\in \In(x)\}$. Equivalently, $\phi_{edge}(e)$ is a linear combination of coding vectors in $\{\phi_{msg}(m): m\in \In(x)\} \cup \{\phi_{edge}(e) : e\in \In(x)\}$. For each message $m\in \Out(x)$, $A_{f(m)}$ is a linear combination of $\{ A_{f(e')} \: : \: e'\in \In(x)\}$. Similarly, $\phi_{msg}(m)$ is a linear combination of coding vectors in $\{\phi_{msg}(m): m\in \In(x)\} \cup \{\phi_{edge}(e) : e\in \In(x)\}$. Note, furthermore, that $\phi_{msg}(m)$ is the standard basis vector corresponding to $m$. It follows that the global linear network code $(F,\FFF_e, \FFF_d)$ thus defined is a global linear network code solution. Removing the dummy messages, it follows that $\NNN$ is scalar-linearly solvable.
\end{proof}

Given an arbitrary matrix $A$, assigning its column vectors as global coding vectors will not give a global linear network code solution necessarily. In essence, the theorem shows that, while we cannot use column vectors of $A$ directly, we can do the described operations to produce an equivalent representation of $A$ from which we can derive a global linear network code solution. From Theorems \ref{thm:construction} and \ref{thm:converse}, we obtain a method for constructing scalar-linearly solvable networks: pick any representable matroid over a finite field $F$ and construct a matroidal network $\NNN$ using Theorem~\ref{thm:construction}. Combining Theorems~\ref{thm:scalarmatroidal} and \ref{thm:converse}, we obtain the following theorem. 

\begin{theorem}
A network is scalar-linearly solvable if and only if the network is a matroidal network associated with a representable matroid over a finite field.
\end{theorem}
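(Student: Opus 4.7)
The statement is a biconditional, and the two directions have already been established separately in the chapter, so my plan is simply to combine them cleanly. I would phrase the proof as two short paragraphs corresponding to the two implications, pointing at the existing theorems and making sure the quantification over the finite field matches on both sides.

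For the forward direction, I would invoke Theorem~\ref{thm:scalarmatroidal}: if $\NNN$ is scalar-linearly solvable over some finite field, then $\NNN$ is a matroidal network, and moreover the associated matroid is representable over a finite field. Nothing more is needed here — the statement of Theorem~\ref{thm:scalarmatroidal} is already in exactly the form required.

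For the reverse direction, I would invoke Theorem~\ref{thm:converse}: if $\NNN$ is matroidal and associated with a matroid representable over some finite field $F$, then $\NNN$ is scalar-linearly solvable. The only subtlety worth a sentence is that the field used for the network code need not equal the field $F$ over which the matroid is represented, but this is already handled inside the proof of Theorem~\ref{thm:converse} by Lemma~\ref{lem:rep}, which lets us pass to a larger finite field containing $F$ so that $|F'|\geq|\AAA|$. So I would just remark that Theorem~\ref{thm:converse} supplies a scalar-linear solution as required.

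Because both halves are already fully proved, there is no real obstacle; the ``proof'' is essentially a citation. If I wanted to add substance I would use a sentence or two to emphasize the conceptual upshot: scalar-linear solvability of $\NNN$ is equivalent to the combinatorial condition of exhibiting a representable matroid $\MMM$ over a finite field together with a network-matroid mapping $f:\mu\cup\epsilon\to\SSS$ satisfying the three conditions of Definition~\ref{def:matroidal}. This reframing is the real content of the theorem, and I would conclude the proof by noting that the construction in Theorem~\ref{thm:construction}, together with Theorem~\ref{thm:converse}, thus yields an explicit recipe for producing scalar-linearly solvable networks from any representable matroid over a finite field.
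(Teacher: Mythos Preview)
Your proposal is correct and matches the paper's own treatment: the paper proves this theorem in one line by combining Theorems~\ref{thm:scalarmatroidal} and~\ref{thm:converse}, exactly as you do. Your additional remarks about Lemma~\ref{lem:rep} and the conceptual reframing are accurate and harmless elaborations on what is essentially a citation.
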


One implication of the theorem is that the class of scalar-linearly solvable networks in the algebraic network coding problem corresponds to the class of representable matroids over finite fields in the framework of matroidal networks. In effect, our results show a connection between scalar-linearly solvable networks, which are tractable networks for network coding, and representable matroids over finite fields, which are also particularly tractable in terms of description size. 

In light of Dougherty et al.'s approach~\cite{dougherty:construction, dougherty:matroid}, relationships between field characteristics and linear solvability of matroidal networks are important. In the case of scalar-linear network codes, we fully characterize a relationship with the following theorem. Note that a network might be a matroidal network with respect to more than one representable matroids of different field characteristics and, thus, is possibly scalar-linearly solvable with respect to fields of different characteristics. 

\begin{theorem}
A network is scalar-linearly solvable over a finite field of characteristic $p$ if and only if the network is a matroidal network associated with a representable matroid over a finite field of characteristic $p$. 
\end{theorem}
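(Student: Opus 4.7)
The plan is to observe that this refined statement follows almost immediately by tracing through the proofs of Theorem~\ref{thm:scalarmatroidal} and Theorem~\ref{thm:converse} and checking that the field characteristic is preserved at each step. Both directions have already been done up to the characteristic bookkeeping, so the task reduces to confirming that each construction stays in the same characteristic class of fields.

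For the forward direction, suppose $\NNN$ has a scalar-linear solution over a finite field $F$ with $\mathrm{char}(F)=p$. The proof of Theorem~\ref{thm:scalarmatroidal} (from \cite{dougherty:matroid}) builds a matroid whose ground set consists of the messages together with the edges, declaring a subset to be independent when the corresponding global coding vectors are linearly independent in a vector space over $F$. This is by construction the vector matroid $\MMM(A)$ of a matrix $A$ over $F$ itself, so the associated matroid is representable over $F$, a finite field of characteristic $p$. No change to that argument is needed.

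For the reverse direction, suppose $\NNN$ is a matroidal network associated with a matroid $\MMM$ that is representable over a finite field $F$ of characteristic $p$. Theorem~\ref{thm:converse} already guarantees scalar-linear solvability, but its proof first invokes Lemma~\ref{lem:rep} to enlarge $F$ to a field $F'$ with $\lvert F'\rvert \geq \lvert \AAA\rvert$. I would check that the enlargement can be taken to be a finite extension of $F$, for example $F' = \mathrm{GF}(p^{mn})$ where $F=\mathrm{GF}(p^m)$ and $n$ is chosen large enough that $\lvert F'\rvert \geq \lvert \AAA\rvert$; such an $F'$ contains $F$ as a subfield, so Lemma~\ref{lem:rep} applies and yields a matrix $A'$ over $F'$ whose vector matroid is isomorphic to $\MMM(A)$. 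Since $F'$ is a finite extension of $F$, we have $\mathrm{char}(F') = \mathrm{char}(F) = p$. The rest of the proof of Theorem~\ref{thm:converse} then produces a scalar-linear solution over $F'$, completing the reverse direction.

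There is essentially no obstacle beyond this bookkeeping: the only nontrivial ingredient is the well-known fact that every finite extension of a field of characteristic $p$ has characteristic $p$. I would therefore state the proof as a one-paragraph argument combining the two observations above, citing Theorems~\ref{thm:scalarmatroidal} and \ref{thm:converse} and Lemma~\ref{lem:rep}, and remarking that both the matroid constructed from a solution and the solution constructed from a matroid live over fields in the same characteristic class as the original.
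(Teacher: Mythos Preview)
Your proposal is correct and matches the paper's approach exactly: the paper's proof simply says to extend Theorems~\ref{thm:scalarmatroidal} and~\ref{thm:converse} and Lemma~\ref{lem:rep} to track the field characteristic, which is precisely the bookkeeping you have carried out. Your explicit observation that the enlarged field $F'$ in Lemma~\ref{lem:rep} may be taken as a finite extension of $F$ (hence of the same characteristic) is the only point that needs checking, and you have handled it correctly.
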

\begin{proof}
We extend Theorems \ref{thm:scalarmatroidal} and \ref{thm:converse} and Lemma \ref{lem:rep} to include field characteristic $p$, and the statement follows straightforwardly. 
\end{proof}

\begin{corollary}
Any matroidal network $\NNN$ associated with an uniform matroid is scalar-linearly solvable over a sufficiently large finite field of any characteristic. The same holds for the graphic matroids.
\end{corollary}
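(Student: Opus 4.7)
The plan is to reduce the corollary to the preceding theorem (scalar-linear solvability over characteristic $p$ iff the associated matroid is representable over a field of characteristic $p$), so that the work becomes purely matroid-theoretic: exhibiting, for each fixed characteristic $p$ and each uniform or graphic matroid, a representation over some finite field of characteristic $p$. Once such a representation is produced, Theorem~\ref{thm:converse} (combined with Lemma~\ref{lem:rep} to enlarge the field enough to accommodate the alphabet $\AAA$) yields the scalar-linear solution.

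For the uniform matroid $U_{c,d}$, I would exhibit an explicit representation by a Vandermonde-style matrix. Fix a prime $p$, and pick any $k$ with $p^k \geq d$; choose $d$ distinct elements $\alpha_1,\dots,\alpha_d$ of $GF(p^k)$. Let $A$ be the $c\times d$ matrix whose $i$th column is $(1,\alpha_i,\alpha_i^2,\dots,\alpha_i^{c-1})^{\mathrm T}$. Any $c$ columns form a square Vandermonde matrix with nonzero determinant (since the $\alpha_i$ are distinct), so every $c$-subset of columns is independent and no larger subset is independent in the rank-$c$ ambient space; hence $\MMM(A)\cong U_{c,d}$. Enlarging $k$ further via Lemma~\ref{lem:rep} ensures $|GF(p^k)|\geq|\AAA|$, and then Theorem~\ref{thm:converse} produces the desired scalar-linear solution.

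For the graphic matroid $\MMM(G)$, the paper already observes that graphic matroids are representable over any field, so there is nothing substantive to prove here; I would simply recall the standard signed-incidence representation. Orient the edges of $G$ arbitrarily and form the vertex-edge incidence matrix $A$ over $GF(p^k)$ whose $(v,e)$-entry is $+1$ if $v$ is the head of $e$, $-1$ if $v$ is the tail, and $0$ otherwise. A set of columns is linearly dependent over any field exactly when the corresponding edges contain a cycle, so $\MMM(A)\cong\MMM(G)$ for every characteristic $p$ and every sufficiently large $k$. Invoking Theorem~\ref{thm:converse} (after using Lemma~\ref{lem:rep} to reach $|GF(p^k)|\geq|\AAA|$) completes the case.

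There is no real obstacle here beyond knowing the two standard representation constructions; the only subtlety is to make sure the field is simultaneously \emph{large enough} for two reasons — large enough ($p^k\geq d$) for the Vandermonde argument in the uniform case, and large enough ($|GF(p^k)|\geq|\AAA|$) for the hypothesis of Theorem~\ref{thm:converse}. Both constraints are satisfied by taking $k$ sufficiently large, which is exactly what Lemma~\ref{lem:rep} guarantees, and this is where the phrase ``sufficiently large finite field'' in the statement is used.
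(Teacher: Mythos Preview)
Your proposal is correct and follows the same approach as the paper: reduce to the preceding theorem by showing that uniform and graphic matroids are representable over finite fields of every characteristic. The paper's own proof merely asserts that this representability is ``straightforward to show'' without giving the constructions, so your explicit Vandermonde and signed-incidence arguments are a welcome fleshing-out rather than a deviation.
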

\begin{proof}
It is straightforward to show that for any uniform matroid $\MMM$ and a prime $p$, there is a sufficiently large finite field $F$ of characteristic $p$ and a matrix $A$ such that $\MMM$ is a representable matroid associated with $A$ over $F$. The same is true for graphic matroids. 
\end{proof}

As a consequence, any matroidal networks constructed from uniform or graphic matroids will not have interesting properties like those constructed from the Fano and non-Fano matroids in Dougherty et al.~\cite{dougherty:construction, dougherty:matroid}.

\section{Examples}

In this section, we provide examples of scalar-linearly solvable networks that follow from Theorem \ref{thm:converse}. As mentioned before, we get a method for constructing scalar-linearly solvable networks from Theorems \ref{thm:construction} and \ref{thm:converse}: pick any representable matroid over a finite field $F$ and construct a matroidal network. We assume $\AAA=\{0,1\}$ throughout this section.

The Butterfly network $\NNN_1$ in Fig.~\ref{fig:butterfly} is a matroidal network that can be constructed from the uniform matroid $U_{2,3}$. The ground set $\SSS$ of $U_{2,3}$ is $\{a,b,c\}$. Nodes 1-2 are the source nodes and nodes 5-6 are the receiver nodes. See Fig.~\ref{fig:butterfly} and Table~\ref{table:butterfly} for details of the construction and a global scalar-linear network code solution. Note that the sets under `Variables' column are order-sensitive. 

Network $\NNN_2$ in Fig.~\ref{fig:uniform24} is a matroidal network constructed from the uniform matroid $U_{2,4}$. The ground set $\SSS$ of $U_{2,4}$ is $\{a,b,c,d\}$. Nodes 1 and 2 are the source nodes and nodes 7-9 are the receiver nodes. $U_{2,4}$ is a representable matroid associated with $A=\left[\begin{array}{cccc}1& 0 &1&2\\0&1&1& 1\\\end{array}\right]$ over $\FF_3$ and, hence, it has a scalar-linear network code solution over $\FF_3$. See Fig.~\ref{fig:uniform24} and Table~\ref{table:uniform24} for details. 

\begin{figure}[!t]
\parbox[!t]{2.5in}{
\includegraphics[width=2.5in]{./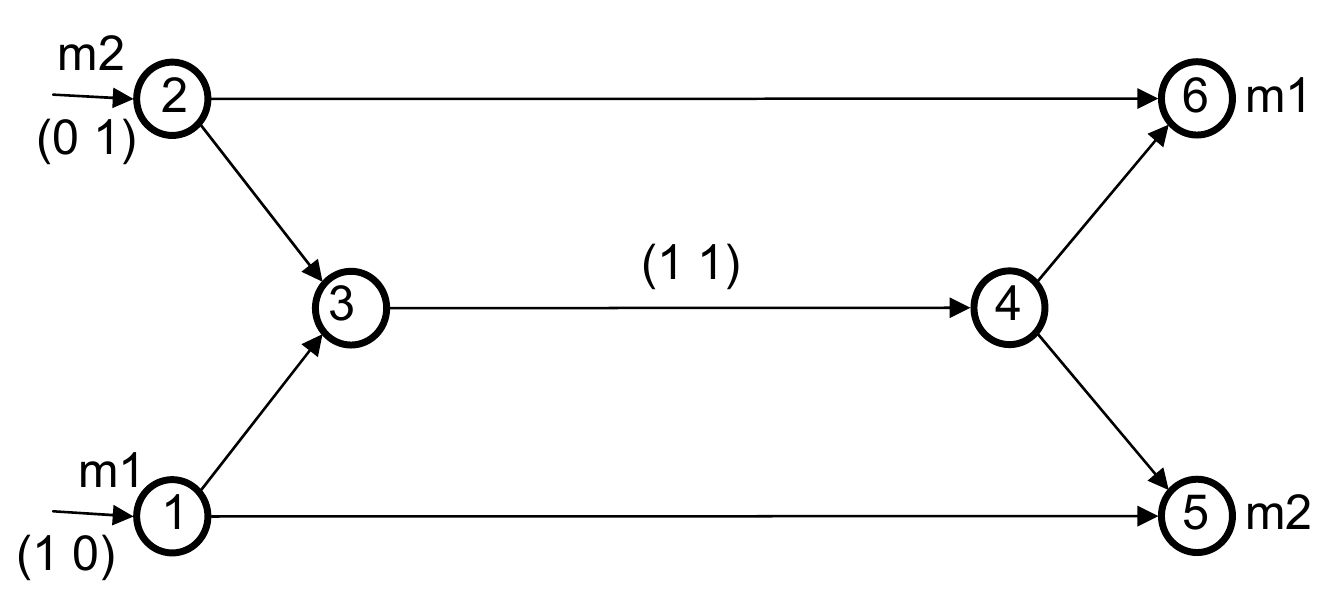}
\captionof{figure}{The Butterfly network $\NNN_1$ constructed from $U_{2,3}$}
\label{fig:butterfly}
}
\hfill
\parbox[!t]{0.45\textwidth}{
\centering
\begin{tabular}{|c|c|c|c|c|c|}
\hline
Step & Variables & Nodes & $x$ &$g(x)$\\
\hline
\hline
1 & $\{a, b\}$& $n_1,n_2$ &  $a$ & $n_1$ \\
  &           &           &  $b$ & $n_2$\\
2 & $\{c,a,b\}$ & $n_3, n_4$ &  $c$ & $n_4$ \\
3 & $\{b,a,c\}$ & $n_5$ &   &  \\
3 & $\{a,b,c\}$ & $n_6$ &  & \\
4 & none & not used & & \\
\hline
\end{tabular}
\captionof{table}{Construction of the Butterfly network $\NNN_1$ from $U_{2,3}$.}
\label{table:butterfly}
}
\end{figure}

%

\begin{figure}[!t]
\parbox[!t]{2.5in}{
\includegraphics[width=2.5in]{./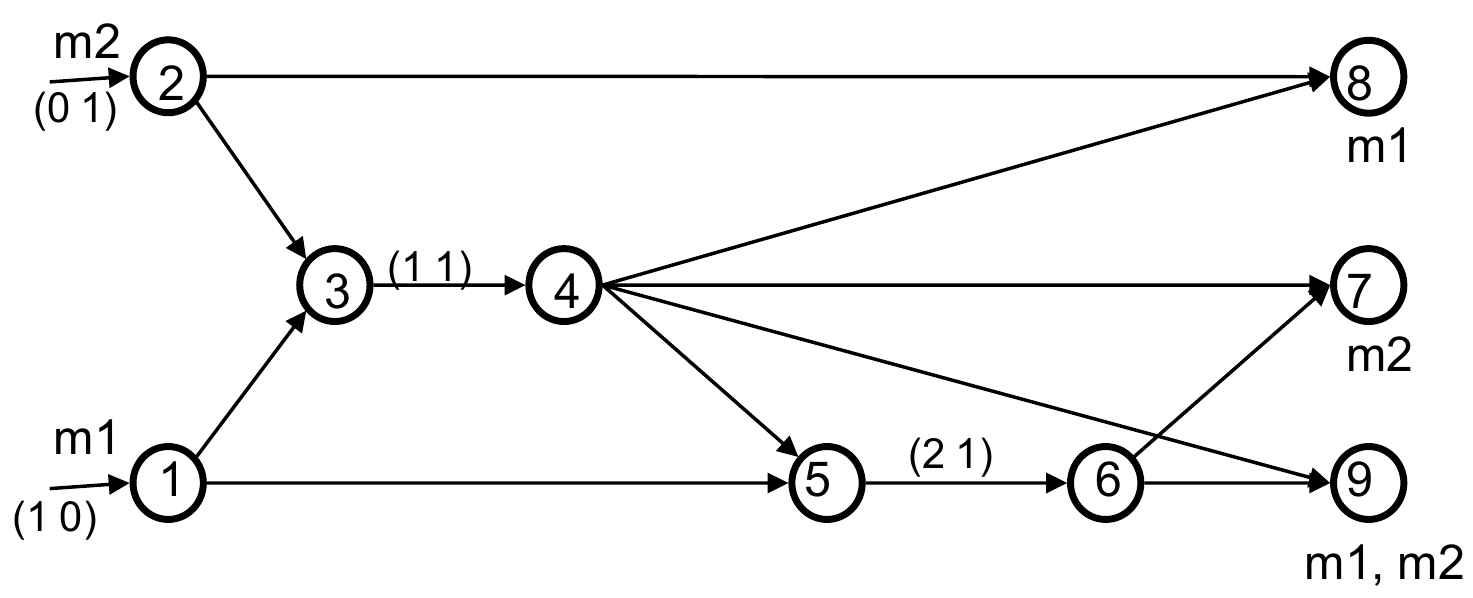}
\captionof{figure}{Network $\NNN_2$ constructed from $U_{2,4}$}
\label{fig:uniform24}
}
\hfill
\parbox[!t]{0.45\textwidth}{
\centering
\begin{tabular}{|c|c|c|c|c|c|}
\hline
Step & Variables & Nodes & $x$ &$g(x)$\\
\hline
\hline
1 & $\{a, b\}$& $n_1,n_2$ &  $a$ & $n_1$ \\
  &           &           &  $b$ & $n_2$\\
2 & $\{c,a,b\}$ & $n_3, n_4$ &  $c$ & $n_4$ \\
2 & $\{d,a,c\}$ & $n_5, n_6$ &  $d$ & $n_6$ \\
3 & $\{b,c,d\}$ & $n_7$ &   &   \\
3 & $\{a,b,c\}$ & $n_8$ &  & \\
4 & $\{c,d\}$ & $n_9$ & & \\
\hline
\end{tabular}
\captionof{table}{Construction of $\NNN_2$ from $U_{2,4}$}
\label{table:uniform24}
}
\end{figure}

%

Consider the graph $G$ and the matroidal network $\NNN_3$ constructed from $\MMM(G)$ in Fig.~\ref{fig:graphic}. The ground set $\SSS$ of $\MMM(G)$ is $\{1,\ldots,7\}$, representing the edges of $G$. Nodes 1-4 are the source nodes and nodes 11-13 are the receiver nodes. $\MMM(G)$ is a representable matroid over field $\FF_2$ and, by Theorem~\ref{thm:converse}, the network has a scalar-linear network code solution over $\FF_2$, as shown by the global coding vectors on $\NNN_3$ in Fig.~\ref{fig:graphic}. This example shows that our results provide networks which are different from the networks previously known to be scalar-linearly solvable such as multicast, 2-level multicast and disjoint multicast networks. It is possible that network $\NNN_3$ can be constructed from a set of polynomials as in Dougherty et al.~\cite{dougherty:poly} or via index codes as in El Rouayheb et al.~\cite{rouayheb:matroid}.

\begin{figure}[!t]
\parbox[!t]{3.0in}{
\includegraphics[width=3.0in]{./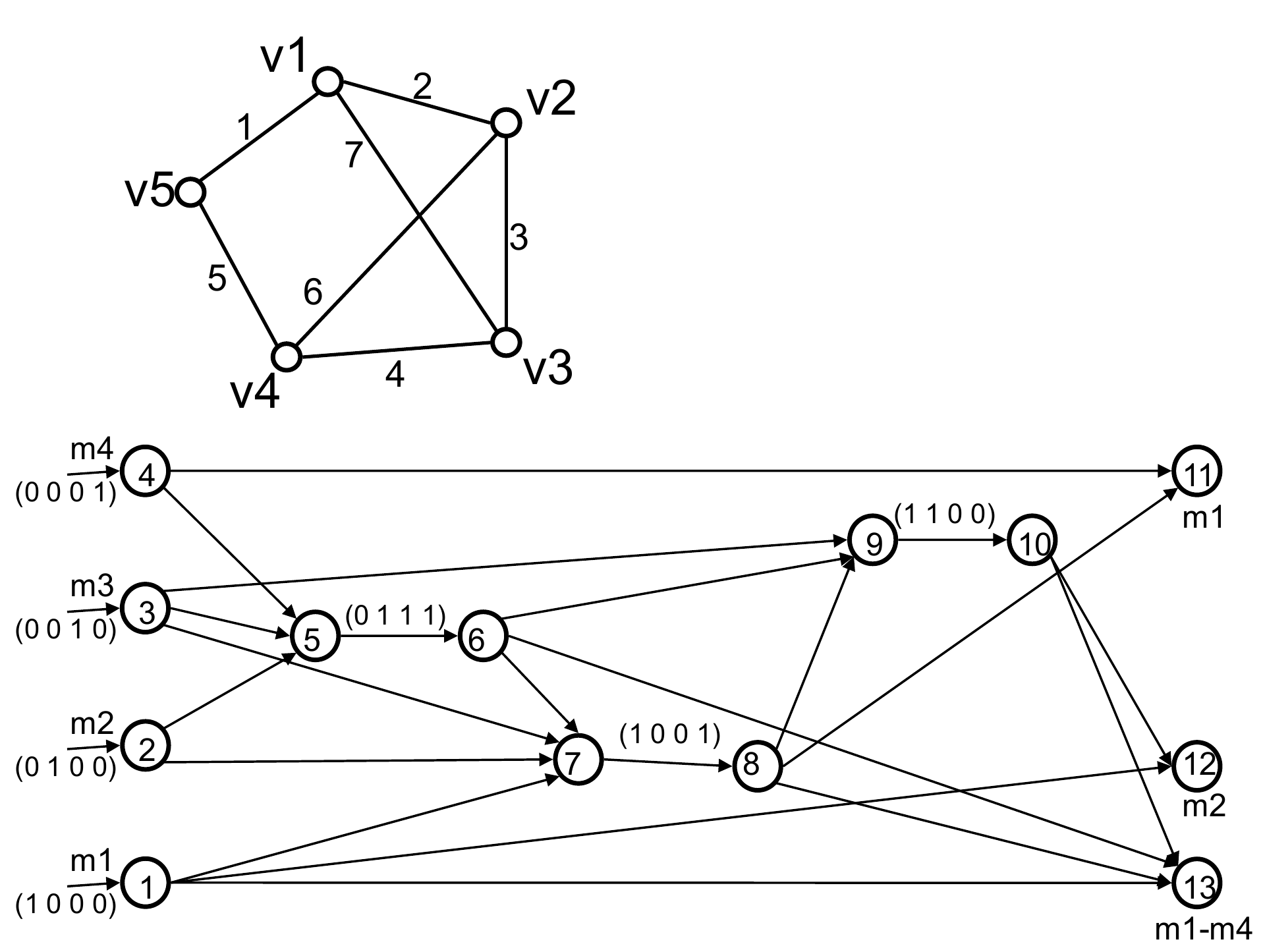}
\captionof{figure}{Graph $G$ and network $\NNN_3$ constructed from $\MMM(G)$}
\label{fig:graphic}
}
\hfill
\parbox[!t]{0.45\textwidth}{
\centering
\begin{tabular}{|c|c|c|c|c|c|}
\hline
Step & Variables & Nodes & $x$ &$g(x)$\\
\hline
\hline
1 & $\{3, 4,5,7\}$& $n_1-n_4$ &  3 & $n_1$ \\
  &               &                     &  4 & $n_2$ \\
  &               &                     &  5 & $n_3$ \\
  &               &                     &  7 & $n_4$ \\
2 & $\{1,4,5,7\}$ & $n_5, n_6$          &  1 & $n_6$\\
2 & $\{2,1,3,4,5\}$ & $n_7, n_8$        &  2 & $n_8$\\
2 & $\{6,1,2,5\}$ & $n_9, n_{10}$       &  6 & $n_{10}$\\
3 & $\{3,2,7\}$   & $n_{11}$            &    &       \\
3 & $\{4,3,6\}$   & $n_{12}$            &    &       \\
4 & $\{1,2,3,6\}$ & $n_{13}$            &    &      \\
\hline
\end{tabular}
\captionof{table}{Construction of $\NNN_3$ from $\MMM(G)$ in Fig.~\ref{fig:graphic}}
\label{table:graphic}
}
\end{figure}

%

\section{Discussion and Conclusion}

In this chapter, we showed that any matroidal network associated with a representable matroid over a finite field is scalar-linearly solvable. Combined with an earlier result of Dougherty et al., it follows that a network is scalar-linearly solvable if and only if it is a matroidal network associated with a representable matroid over a finite field. It also follows that determining scalar-linear solvability of a network is equivalent to finding a representable matroid over a finite field and a valid network-matroid mapping. We also showed a relationship between scalar-linear solvability of networks and field characteristics. Moreover, we obtained a method for generating scalar-linearly solvable networks from representable matroids over finite fields and a set of scalar-linearly solvable networks that are possibly different from those networks that we already know are scalar-linearly solvable.

Unfortunately, the results presented in this chapter do not seem to generalize to vector-linear network coding or more general network coding schemes. The difficulty is that the matroid structure requires that a subset of the ground set of a matroid is either independent or dependent, but what this corresponds to in vector-linear codes, for instance, is not clear. Instead of vectors over fields, we now have vectors over rings (matrices over a field, to be more specific) in vector-linear network coding, and we are unaware of suitable matroids on vectors over rings for our purpose. In fact, El Rouayheb et al.~\cite{rouayheb:matroid} also made a similar observation and suggested that FD-relations are more related to networks than are matroids. 
\chapter{Network Capacity Regions}\label{chapter:regions}

In this chapter, we study the network capacity region of networks along the lines of work by Cannons et al.~\cite{cannons:routing}. We define the network capacity region of networks analogously to the rate regions in information theory and show its notable properties: closedness, boundedness and convexity. In the case of routing, we prove that the network routing capacity region $\CCC_r$ is a computable rational polytope and provide exact algorithms and approximation heuristics for computing the region. In the case of linear network coding, we define an auxiliary region $\CCC'_l$, called the semi-network linear coding capacity region, which is a computable rational polytope that inner bounds the network linear coding capacity region, and provide exact algorithms and approximation heuristics for computing $\CCC'_l$. More specifically, we show the network routing capacity region $\CCC_r$ ($\CCC'_l$) is an image of a higher-dimensional rational polytope under an affine map and consider the computation of $\CCC_r$ ($\CCC'_l$) as the polytope reconstruction problem with a ray oracle. Our results generalize to directed networks with cycles and undirected networks. We note that some underlying problems associated with the polytope reconstruction problem are NP-hard, such as the minimum cost directed Steiner tree problem, and that algorithms and heuristics we present are not polynomial time schemes. Rather, the algorithms and heuristics may have exponential running time in the input size, depending on the intermediate computations and the resulting output's description size. 

As our notion of the multi-dimensional network capacity region captures the notion of the single-dimensional network capacity in \cite{cannons:routing}, our present work, in effect, addresses a few open problems proposed by Cannons et al.~\cite{cannons:routing}: whether there exists an efficient algorithm for computing the network routing capacity and whether there exists an algorithm for computing the network linear coding capacity. Computing the single-dimensional network capacity is equivalent to computing a point on the boundary of the multi-dimensional network capacity region and a ray starting at the origin and reduces to solving associated linear programs in the case of the network routing capacity and (a lower bound of) network linear coding capacity. It follows from our work that there exist combinatorial approximation algorithms for computing the network routing capacity and for computing a lower bound of the network linear coding capacity.

A polytope has two equivalent descriptions; the vertex description in terms of vertices, or the extreme points, of the polytope and the hyperplane description in terms of linear inequalities defining facets of the polytope. In this work, we do not distinguish the two descriptions and use them interchangeably, but we note that converting one description into another can be computationally expensive; we use vertex enumeration algorithms to convert a hyperplane description into a vertex one and facet enumeration algorithms (essentially, convex hull algorithms) for the conversion in the other direction. See \cite{fukuda, handbook, kaibel} for more details on polytopes and relevant algorithms. 

In this chapter, we consider nondegenerate networks where, for each demand of a message at a receiver node, there is a path from a source node generating the message to the receiver node and where no message is both generated and demanded by the same node.
\section{Fractional Network Coding Model}
We define a fractional network coding model. Most definitions are adapted from Cannons et al.\cite{cannons:routing}. We use the notations where we write vectors or points in a multi-dimensional space with a hat as in $\hat{k}$ and let $k_i$ denote the $i$-th coordinate of the vector $\hat{k}$. We also use $[\hat{k}]_i$ and $\hat{k}(i)$ to denote the $i$-th coordinate to avoid confusions when necessary. When it is clear from the context, we omit the hat to avoid cluttering symbols. 

\begin{definition}[Capacitated Network]
A {\em capacitated network} $\NNN$ is a finite, directed, acyclic multigraph given by a 7-tuple $(\nu, \epsilon, \mu, c, \AAA, S, R)$ where
	\begin{enumerate}
		\item $\nu$ is a node set,
		\item $\epsilon$ is an edge set,
		\item $\mu$ is a message set,
		\item $c:\epsilon \rightarrow \ZZ^+$ is an edge capacity function,
		\item $\AAA$ is an alphabet,  
		\item $S:\nu \rightarrow 2^\mu$ is a source mapping, and
		\item $R:\nu \rightarrow 2^\mu$ is a receiver mapping.
	\end{enumerate}
\end{definition}

As we shall use only capacitated networks in this chapter, we use $\NNN$ to denote a capacitated network. We refer to networks defined in Chapter~\ref{chapter:intro} as {\em ordinary networks}. We assume that the messages in $\mu$ are indexed as $m_1,\ldots, m_{|\mu|}$. We define fractional edge function, fractional decoding function, and fractional message assignment with respect to a finite field $F$, where $|F| \geq |\AAA|$, a source dimension vector $\hat{k}$, and an edge dimension $n$:

\begin{definition}[Fractional Edge and Fractional Decoding Functions]
Let $\NNN = (\nu, \epsilon, \mu, c, \AAA, S, R)$ be a capacitated network and $m_1, \ldots, m_{\lvert \mu \rvert}$ be the messages. Let $\hat{k}=(k_1, \ldots, k_{\lvert \mu \rvert})$ be a vector of positive integers and $n$ be a positive integer. For each edge $e=(x,y)$, a {\em fractional edge function} is a map
\[
f_e:(F^{k_{i_1}})\times \cdots \times (F^{k_{i_\alpha}}) \times (F^{nc(e_{\alpha+1})})\times \cdots \times (F^{nc(e_{\alpha+\beta})})\rightarrow F^{nc(e)},
\]
where $m_{i_1}, \ldots, m_{i_\alpha}$ are $\alpha$ messages generated by $x$ and $e_{\alpha+1}, \ldots, e_{\alpha+\beta}$ are $\beta$ in-edges of $x$. For each node $x\in \nu$ and message $m_j\in R(x)$, a {\em fractional decoding function} is a map
\[
f_{x,m_j}: (F^{k_{i_1}})\times \cdots \times (F^{k_{i_\alpha}}) \times (F^{nc(e_{\alpha+1})})\times \cdots \times (F^{nc(e_{\alpha+\beta})}) \rightarrow F^{k_j},
\]
where $m_{i_1}, \ldots, m_{i_\alpha}$ are $\alpha$ messages generated by $x$ and $e_{\alpha+1}, \ldots, e_{\alpha+\beta}$ are $\beta$ in-edges of $x$.

We call $\hat{k} = (k_1, \ldots, k_{\lvert \mu \rvert})$ the {\em source dimension vector}, where $k_i$ is the source dimension for message $m_i$, and $n$ the {\em edge dimension}. We denote the collections of fractional edge and fractional decoding functions by $\FFF_e = \{f_e \: :\: e\in \epsilon\}$ and $\FFF_{x,m}=\{f_{x,m} \: :\: x\in \nu, m\in R(x)\}$, respectively.
\end{definition}

\begin{definition}[Fractional Message Assignment]
Let $\NNN = (\nu, \epsilon, \mu, c, \AAA, S, R)$ be a capacitated network and $m_1, \ldots, m_{\lvert \mu \rvert}$ be the messages. A {\em fractional message assignment} is a collection of maps $\overline{a} = (a_1, \ldots, a_{\lvert \mu \rvert})$ where $a_i$ is a message assignment for $m_i$, $a_i:m_i \rightarrow F^{k_i}$. 
\end{definition}

\begin{definition}[Fractional Network Code]
Let $\NNN=(\nu, \epsilon, \mu, c, \AAA, S, R)$ be a capacitated network and $m_1, \ldots, m_{\lvert \mu \rvert}$ be the messages in $\mu$. A {\em fractional network code} on $\NNN$ is a 5-tuple $(F, \hat{k}, n, \FFF_e, \FFF_d)$ where
	\begin{enumerate}
		\item $F$ is a finite field, with $|F| \geq |\AAA|$, 
		\item $\hat{k} = (k_1, \ldots, k_{\lvert \mu \rvert})$ is a source dimension vector, 
		\item $n$ is an edge dimension,
		\item $\FFF_e$ is a collection of fractional edge functions on $\NNN$,
		\item $\FFF_d$ is a collection of fractional decoding functions on $\NNN$.
	\end{enumerate}
\end{definition}

As with the ordinary network codes in Chapter~\ref{chapter:intro}, we have different kinds of fractional network codes defined analogously: {\em fractional routing network codes}, {\em fractional linear network codes}, and {\em fractional nonlinear network codes}. We shall use the prefix $(\hat{k},n)$ before codes to emphasize the parameters $\hat{k}$ and $n$.

\begin{definition}[Fractional Network Code Solution]
Let $\NNN=(\nu, \epsilon, \mu, c, \AAA, S,R)$ be a capacitated network and $m_1, \ldots, m_{\lvert \mu \rvert}$ be the messages. A fractional network code $(F, \hat{k}, n, \FFF_e, \FFF_d)$ is a {\em fractional network code solution}, or {\em fractional solution} for short, if for every fractional message assignment $\overline{a} = (a_1, \ldots, a_{\lvert \mu\rvert})$,
\[
f_{x,m_j}(a_{i_1}(m_{i_1}),\ldots, a_{i_\alpha}(m_{i_\alpha}), s(e_{\alpha+1}), \ldots, s(e_{\alpha+\beta})) = a_j(m_j),
\]
for all $x\in \nu$ and $m_j\in R(x)$. Note that $m_{i_1}, \ldots, m_{i_\alpha}$ are $\alpha$ messages generated by $x$ and $e_{\alpha+1}, \ldots, e_{\alpha+\beta}$ are $\beta$ in-edges of $x$. If the above equation holds for a particular $x\in \nu$ and message $m\in R(x)$, then we say {\em node $x$'s demand $m$ is satisfied}. 
\end{definition}

As with network code solutions for ordinary networks, we have special classes of fractional network code solutions: {\em fractional routing network code solutions}, {\em fractional linear network code solutions}, and {\em fractional nonlinear network code solutions}. When it is clear from the context, we refer to them by appropriate abridged versions from time to time.

If $(F, \hat{k}, n, \FFF_e, \FFF_d)$ is a fractional network code solution for $\NNN=(\nu, \epsilon, \mu, c, \AAA, S, R)$, source node $x\in \nu$ sends a vector of $k_i$ symbols from $F$ for each message $m_i \in S(x)$; each receiver node $x \in \nu$ demands the original vector of $k_i$ symbols corresponding to message $m_i$ for each $m_i\in R(x)$; and each edge $e$ carries a vector of $c(e)n$ symbols. We refer to coordinates of the symbol vector of length $k_i$ corresponding to message $m_i$ as {\em message $m_i$'s coordinates}. Note that each coordinate of a message is independent from others. We use coordinates and symbols for messages interchangeably; the $i$-th symbol of message $m$ refers to the $i$-th coordinate of the message. We refer to coordinates of the symbol vector of length $c(e)n$ on edge $e$ as {\em edge $e$'s coordinates}. Note that a coordinate of edge $e$ can be {\em active}, meaning it actively carries a symbol in the fractional network code solution, or {\em inactive}, meaning it is not used in the solution. For instance, if an edge with $5$ available coordinates has to send $2$ independent symbols through the edge, then it suffices to use only $2$ coordinates; in this case, the coordinates that carry symbols are active and the other 3 coordinates are inactive. 

We define a notion of {\em minimal network code solutions} as follows:

\begin{definition}[Minimal Fractional Network Code Solution]
A fractional network code solution $(F, \hat{k}, n, \FFF_e, \FFF_d)$ for $\NNN$ is {\em minimal} if the set $A$ of all active coordinates of edges in the solution is minimal, i.e., there exists no $(\hat{k},n)$ fractional network code solution for $\NNN$ with the set of active coordinates that is a strict subset of $A$. 
\end{definition}

\section{Network Capacity Regions}
\subsection{Definitions}
\begin{definition}[Achievable Coding Rate Vector]
Let $\NNN = (\nu, \epsilon, \mu, c, \AAA, S,R)$ be a capacitated network. A vector of positive numbers $\left(\frac{k_1}{n}, \ldots, \frac{k_{\lvert \mu \rvert}}{n}\right) \in \QQ_+^{\lvert \mu \rvert}$ is an {\em achievable coding rate vector} if there exists a fractional network code solution $(F, \hat{k}, n, \FFF_e, \FFF_d)$ for $\NNN$ where $\hat{k} = (k_1, \ldots, k_{\lvert \mu \rvert})$.
\end{definition}

\begin{definition}[Network Capacity Region]
Let $\NNN=(\nu, \epsilon, \mu, c, \AAA, S,R)$ be a capacitated network and $m_1, \ldots, m_{\lvert \mu \rvert}$ be the messages. The {\em network capacity region} $\CCC$ of $\NNN$ is the closure of all achievable coding rate vectors in $\RR^{\lvert \mu \rvert}$,
\[
\CCC = \closure\left\{\frac{\hat{k}}{n} \: :\: \frac{\hat{k}}{n} = \left(\frac{k_1}{n} , \ldots, \frac{k_{\lvert \mu\rvert}}{n}\right) \text{ is an achievable coding rate vectors}\right\}.
\]
By definition, a network capacity region is a set of points in the Euclidean space $\RR_+^{\lvert \mu \rvert}$. 
\end{definition}

There are different classes of achievable coding rate vectors and, hence, corresponding classes of network capacity regions: the {\em network routing capacity region, $\CCC_r$,} which is the closure of all achievable routing rate vectors; the {\em network linear coding capacity region, $\CCC_l$,} which is the closure of all achievable linear coding rate vectors; and the {\em network nonlinear coding capacity region, $\CCC$,} which is the closure of all achievable nonlinear coding rate vectors (or the network capacity region, equivalently).

\subsection{Properties}
We show that the network capacity regions are closed, bounded and convex sets and satisfy an additional property.

\begin{theorem}
Let $\NNN=(\nu, \epsilon, \mu, c, \AAA, S,R)$ be a capacitated network and $m_1, \ldots, m_{\lvert \mu\rvert}$ be the messages. The corresponding network capacity region $\CCC$ is a closed, bounded and convex set in $\RR_+^{|\mu|}$.
\end{theorem}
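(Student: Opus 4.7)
The plan is to establish the three properties separately: closedness is immediate from the definition, boundedness follows from edge-capacity constraints, and convexity is proved by a time-sharing (parallel composition) construction, extended to all of $\CCC$ by approximation.

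\textbf{Closedness.} This is tautological: by definition, $\CCC$ is the topological closure of the set of achievable coding rate vectors in $\RR^{|\mu|}$, hence closed.

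\textbf{Boundedness.} Fix a message $m_i \in \mu$ and let $x_i$ be its (unique) generating source. For any achievable vector $\hat{k}/n$ witnessed by a fractional solution $(F,\hat{k},n,\FFF_e,\FFF_d)$, consider the map that sends $a_i(m_i) \in F^{k_i}$ to the tuple of symbols carried by the out-edges of $x_i$, with the remaining messages held fixed. This map must be injective: otherwise two distinct values of $a_i(m_i)$ would induce identical symbols on every edge reachable from $x_i$, hence identical inputs to every receiver demanding $m_i$ (which exists by nondegeneracy), contradicting the solution property. Comparing cardinalities gives $k_i \leq n \sum_{e \in \Out(x_i)\cap\epsilon} c(e)$, so $k_i/n \leq C_i := \sum_{e \in \Out(x_i)\cap\epsilon} c(e)$. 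Consequently $\CCC \subseteq \prod_{i=1}^{|\mu|}[0,C_i]$, which is a compact box.

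\textbf{Convexity.} Let $p = \hat{k}/n$ and $q = \hat{k}'/n'$ be two achievable rate vectors, witnessed by solutions $\Sigma_1$ and $\Sigma_2$ (taken over a common sufficiently large finite field $F$). For positive integers $a,b$, build a solution with parameters $(a\hat{k}+b\hat{k}',\,an+bn')$ by partitioning, for each edge $e$, its $c(e)(an+bn')$ coordinates into one block of size $c(e)\cdot an$ that carries $a$ independent copies of $\Sigma_1$ and another of size $c(e)\cdot bn'$ that carries $b$ independent copies of $\Sigma_2$; each source similarly splits its $ak_i + bk'_i$ symbols into $a$ blocks of size $k_i$ and $b$ blocks of size $k'_i$, and each decoder reconstructs its output blockwise from the corresponding subcode. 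Since the two sub-computations act on disjoint edge-coordinate blocks, the construction inherits validity from $\Sigma_1$ and $\Sigma_2$. The rate thus achieved is
\[
\frac{a\hat{k} + b\hat{k}'}{an + bn'} \;=\; \lambda p + (1-\lambda)q, \qquad \lambda = \frac{an}{an+bn'},
\]
and as $a,b$ range over positive integers $\lambda$ covers all of $(0,1)\cap\QQ$, so every rational convex combination of two achievable vectors is achievable. For arbitrary $p,q \in \CCC$ and $\lambda \in [0,1]$, pick achievable approximants $p_m \to p$, $q_m \to q$ and rationals $\lambda_m \to \lambda$; then $\lambda_m p_m + (1-\lambda_m)q_m \in \CCC$ by the above, and closedness forces $\lambda p + (1-\lambda)q \in \CCC$.

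\textbf{Main obstacle.} The one genuinely technical step is the parallel-composition construction: one must match the signatures of the edge and decoding functions on the two coordinate sub-blocks and check against the formal definition that the concatenated object is a fractional network code solution. A minor side point is unifying the fields of $\Sigma_1$ and $\Sigma_2$, which is handled by passing to a common sufficiently large $F$ (appealing to Lemma~\ref{lem:rep} in the linear case). Once parallel composition is settled, closedness and boundedness are routine.
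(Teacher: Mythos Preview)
Your proof is correct and mirrors the paper's almost exactly: closedness by definition, boundedness by a cut at the source(s), and convexity by the standard time-sharing construction extended to $\CCC$ via rational approximation and closedness. One minor slip: the model permits a message to be generated at several nodes, so replace ``its (unique) generating source $x_i$'' by the set of all such nodes and sum the out-edge capacities over that set---the paper's proof does exactly this, and your injectivity argument carries over verbatim.
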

\begin{proof}
(Closedness) By definition, $\CCC$ is a closure of a set and, hence, closed.

(Boundedness) We show that $\frac{k_i}{n}$ is bounded for all $i$ in the achievable coding rate vector $\left(\frac{k_1}{n}, \ldots, \frac{k_{\lvert \mu \rvert}}{n}\right)$. By symmetry, it suffices to show for $\frac{k_1}{n}$. Let $n$ be the edge dimension, $\nu_1$ be the set of nodes in $\nu$ that generate message $m_1$ and $\gamma$ be the sum of capacities of out-edges of nodes in $\nu_1$. Then, $k_1 \leq \gamma n$ as we cannot send more than $\gamma n$ independent coordinates of message $m_1$ and expect receivers to recover all the information.  Hence, $\frac{k_1}{n} \leq \gamma$. It follows that $\CCC$ is bounded.

(Convexity) Let $x_0, x_1 \in \CCC$ and $\lambda \in [0,1]$. We show that $x= (1-\lambda)x_0 + \lambda x_1 \in \CCC$. We write $x = x_0 + \lambda (x_1 - x_0)$. There exists sequences of achievable coding rate vectors converging to $x_0$ and $x_1$, say $\{y_{0,j}\}$ and $\{y_{1,j}\}$ respectively. Let $\{\lambda_j\}$ be a sequence of rationals converging to $\lambda$. Then, $y_j = y_{0,j} + \lambda_j (y_{1,j} - y_{0,j})$ is an achievable coding rate vector for $j=1,2, \ldots$. Let $\lambda_j = \frac{p}{q}$, $y_{0,j} = \left(\frac{k_1}{n}, \ldots, \frac{k_{\lvert \mu\rvert}}{n}\right)$ and $y_{1,j} = \left(\frac{k'_1}{n'}, \ldots, \frac{k'_{\lvert \mu\rvert}}{n'}\right)$. Then, 
\[
y_j = \left(\frac{(q-p)k_1 n' + p k'_1 n}{q n n'}, \ldots, \frac{(q-p)k_{\lvert \mu \rvert} n' + p k'_{\lvert \mu \rvert} n}{q n n'}\right).
\]
There exists a fractional network code solution $(F, \hat{k}, qnn', \FFF_e, \FFF_d)$ where $\hat{k} = ((q-p)k_1 n'+p k_1' n, \ldots, (q-p)k_{|\mu|} n' + pk'_{|\mu|}n)$; if $NC_1$ and $NC_2$ are two fractional network code solutions with rate vectors $y_{0,j}$ and $y_{1,j}$, then for first $(q-p)nn'$ coordinates we employ $(q-p)n'$ copies of $NC_1$ and for the remaining $pnn'$ coordinates we employ $pn$ copies of $NC_2$. Then,
\begin{align*}
\lvert x - y_j\rvert & = \lvert x_0 + \lambda(x_1 - x_0) - y_{0,j} - \lambda_j(y_{1,j} - y_{0,j}) \rvert \\
		&= \lvert (1-\lambda) x_0  - (1-\lambda_j) y_{0,j} + \lambda x_1 - \lambda_j y_{1,j} \rvert\\
		&\leq \lvert(1-\lambda)x_0 - (1- \lambda_j)y_{0,j}\rvert + \lvert \lambda x_1 - \lambda_j y_{1,j} \rvert.
\end{align*}

Since $\lambda_j \rightarrow \lambda$ and $y_{0,j}\rightarrow x_0$, $(1-\lambda_j)y_{0,j}$ converges to $(1-\lambda)x_0$ and $\lvert(1-\lambda)x_0 - (1- \lambda_j)y_{0,j}\rvert$ can be made arbitrarily small for sufficiently large $j$. Similarly, $\lvert \lambda x_1 - \lambda_j y_{1,j} \rvert$ can be made arbitrarily small for sufficiently large $j$. It follows that the sequence of achievable coding rate vectors $\{y_j\}$ converges to $x$ and that $x\in \CCC$.
\end{proof}

\begin{corollary}
Let $\NNN=(\nu, \epsilon, \mu, c, \AAA, S,R)$ be a capacitated network and $m_1, \ldots, m_{\lvert \mu\rvert}$ be the messages. The corresponding network routing capacity region, $\CCC_r$, and network linear coding capacity region, $\CCC_l$, are closed, bounded and convex regions in $\RR_+^{|\mu|}$.
\end{corollary}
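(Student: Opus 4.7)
The plan is to adapt, essentially verbatim, the three-part argument used in the preceding theorem, verifying at each step that the construction stays within the routing (respectively linear) class of codes.

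For closedness, both $\CCC_r$ and $\CCC_l$ are by definition closures of subsets of $\RR^{|\mu|}$, so there is nothing to do. For boundedness, the inequality $k_i/n \le \gamma$ derived in the theorem uses only the fact that a source node cannot emit more independent coordinates of its message than its total outgoing edge capacity permits, and this bound is agnostic to the code class. Equivalently, $\CCC_r \subseteq \CCC_l \subseteq \CCC$, so boundedness of the larger region $\CCC$ is inherited by both subregions.

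For convexity, I would mimic the interleaving construction of the theorem. Given two achievable rate vectors $y_{0,j}, y_{1,j}$ realized by codes $NC_1, NC_2$ and a rational convex coefficient $\lambda_j = p/q$, I build a combined code on edge dimension $qnn'$ whose coordinates on each edge split into two disjoint blocks: one block of $(q-p)n'$ parallel copies of $NC_1$ and another block of $pn$ parallel copies of $NC_2$. The two observations to verify are: (i) a disjoint parallel combination of routing codes is itself a routing code, since each output coordinate is still a direct copy of some input coordinate; and (ii) a disjoint parallel combination of linear codes is itself a linear code, since the combined edge and decoding functions are block-diagonal and hence linear. Once (i) and (ii) are established, the rational-approximation device at the end of the theorem's proof produces the required sequence of achievable rate vectors converging to the prescribed convex combination.

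The step I expect to be the main obstacle is (ii): $NC_1$ and $NC_2$ may a priori be defined over different finite fields $F_1, F_2$, whereas a linear network code requires a single ambient field. I would resolve this exactly in the spirit of Lemma~\ref{lem:rep}: when $F_1$ and $F_2$ share a characteristic $p$, both codes lift to a common extension $\FF_{p^\ell}$ without altering linearity or the rate vector they achieve, and the block-diagonal interleaving then yields a bona fide linear code over $\FF_{p^\ell}$. The remaining mixed-characteristic case I would handle by invoking the closure built into the definition of $\CCC_l$ and approximating the desired convex combination by rate vectors drawn from codes of a single, fixed characteristic. This field-matching step is the only delicate piece; everything else is a direct transcription of the preceding theorem's proof.
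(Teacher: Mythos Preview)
Your plan—rerun the theorem's argument and check that the interleaving construction stays within the routing/linear class—is exactly what the paper intends; the corollary is recorded with no separate proof. Closedness, boundedness, and the routing case go through just as you describe.

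For $\CCC_l$ you are in fact more careful than the paper. You correctly observe that the block-diagonal combination of two linear codes is linear only once both codes live over a common field, and your same-characteristic fix is sound: a linear solution over $\FF_{p^a}$ lifts to one over any extension because the solution condition is a matrix identity, so two codes of equal characteristic combine over a common overfield. The mixed-characteristic step, however, is not an argument. ``Approximate by codes of a single, fixed characteristic'' presupposes that a rate achievable linearly in characteristic $p_1$ can be approached linearly in characteristic $p_2$, and the Fano/non-Fano constructions cited elsewhere in the thesis show precisely that this can fail. So as written that last sentence is a genuine gap. The paper does not close it either—it simply asserts the corollary—so your proposal is already at least as rigorous as the source; but you should not regard convexity of $\CCC_l$ across characteristics as established by the sketch you give.
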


We note that the network capacity regions are of very special kind by definition:
\begin{proposition}
The network capacity region $\CCC$ is a region such that if $\hat{r} = (r_1, \ldots, r_{|\mu|}) \in \CCC$, then the parallelepiped $[0, r_1]\times \ldots \times [0,r_{|\mu|}]$ is contained in $\CCC$. The same holds for the network routing capacity region $\CCC_r$ and the network linear coding capacity region $\CCC_l$.
\end{proposition}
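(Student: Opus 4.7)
The plan is to establish a ``downward closure'' property of the achievable rate vectors via a padding-and-copying construction, and then extend it to the full capacity region by exploiting the closedness of $\CCC$. Since padding source symbols with zeros preserves both the routing and the linear structure of a code, and running independent copies likewise preserves both, the construction will go through verbatim for $\CCC_r$ and $\CCC_l$ as well as for $\CCC$, handling all three cases uniformly.

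The key combinatorial step I would prove first is: if $(\hat{k}, n)$ is achievable and $\hat{k}' \in \ZZ_+^{|\mu|}$ satisfies $1 \leq k'_i \leq k_i$ for each $i$, then $(\hat{k}', n)$ is also achievable. Given any $(\hat{k}', n)$ message assignment, I would pad each message $m_i$ with $k_i - k'_i$ zero symbols and apply the original $(\hat{k}, n)$ solution; every decoder recovers the padded vector correctly, and truncating to the first $k'_i$ coordinates recovers the genuine source symbols. Combining this with the elementary scaling observation that running $q$ independent copies of a $(\hat{k}, n)$ solution produces a $(q\hat{k}, qn)$ solution of the same rate, I obtain the following rational downward closure: if $\hat{r} = \hat{k}/n$ is achievable and $\hat{s} = \hat{p}/q$ is a rational vector with $0 < s_i \leq r_i$ for all $i$, then $\hat{s}$ is achievable, realized by first scaling to a $(q\hat{k}, qn)$ solution and then reducing its source dimensions to $n\hat{p}$ (legal since $np_i \leq qk_i$ is equivalent to $p_i/q \leq k_i/n$).

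To finish, I would define $D = \{\hat{r} \in \CCC : [0, r_1] \times \cdots \times [0, r_{|\mu|}] \subseteq \CCC\}$ and prove $D = \CCC$. Closedness of $D$ in $\CCC$ follows from the observation that if $\hat{r}^{(j)} \in D$ converges to $\hat{r}$ and $\hat{s}$ lies in the $\hat{r}$-box, then the coordinatewise minima $\hat{s}^{(j)} := (\min(s_1, r_1^{(j)}), \ldots, \min(s_{|\mu|}, r_{|\mu|}^{(j)}))$ each lie in $\CCC$ (being dominated by $\hat{r}^{(j)} \in D$) and converge to $\hat{s}$, which therefore lies in $\CCC$. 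On the other hand, every achievable rational rate vector $\hat{r}$ belongs to $D$: its entries are strictly positive by definition, so any $\hat{s}$ in the $\hat{r}$-box can be approximated by strictly positive rational $\hat{s}^{(\ell)}$ with $0 < s_i^{(\ell)} \leq r_i$ (taking $s_i^{(\ell)} = 1/\ell$ in coordinates where $s_i = 0$ and rational approximations from below otherwise), each of which is achievable by the rational downward closure step. Since achievable rationals are dense in $\CCC$ by definition, the closed set $D$ contains a dense subset and therefore equals $\CCC$. The main subtlety I expect is handling boundary $\hat{s}$ with a zero coordinate, because achievable rate vectors are required to have strictly positive entries; the closure-based argument via $D$ sidesteps this cleanly by approximating zero coordinates with small positive rationals and appealing to the closedness of $\CCC$ already established.
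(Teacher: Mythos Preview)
Your proof is correct. The paper actually states this proposition without proof, prefacing it only with the remark that the capacity regions ``are of very special kind by definition''; no argument is given. So there is nothing to compare against, and your write-up supplies exactly the details the paper omits.

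The padding-plus-copying construction is the natural mechanism, and your closure argument via the set $D$ is a clean way to absorb the boundary cases where some $s_i = 0$. One small remark on the routing case: the literal padding-with-zeros construction can produce edge coordinates that are identically zero rather than copies of an input component, which arguably falls just outside the paper's informal description of routing codes (``simply copy input vector components to output vector components''). This is harmless---you can instead observe that in a routing solution each message coordinate is carried independently, so dropping the coordinates $k'_i+1,\ldots,k_i$ of message $m_i$ directly yields a $(\hat{k}',n)$ routing solution with some edge coordinates now inactive---but it is worth phrasing that case this way to avoid any definitional quibble.
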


We use $\bd \CCC$ to denote the boundary of the network capacity regions. Similarly, we use $\bd \CCC_r$ and $\bd \CCC_l$ to denote the boundaries of corresponding regions. 

\section{Network Routing Capacity Regions}\label{sec:routing}

In this section, we prove that the network routing capacity region is a bounded rational polytope, and provide exact algorithms and approximation heuristics for computing it. Since multi-source multi-sink networks can be reduced to multiple multicast networks, it suffices to show the results with respect to the multiple multicast networks; for each message $m$, we add a ``super source node'' that generates the message $m$ and connects to source nodes that generate $m$ via edges of infinite, or sufficiently large, capacities. We assume that the given networks in this section are multiple multicast networks for simpler presentation of results.

\subsection{Properties}

\begin{theorem}\label{thm:routingregion}
The network routing capacity region $\CCC_r$ is a bounded rational polytope in $\RR_+^{|\mu|}$ and is computable.
\end{theorem}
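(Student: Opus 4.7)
The plan is to realize $\CCC_r$ as the linear image of an explicit rational polytope indexed by Steiner trees. Since we may assume $\NNN$ is a multiple multicast, for each message $m_i$ let $\TTT_i$ denote the collection of minimal directed Steiner trees rooted at the (super) source of $m_i$ that span every receiver demanding $m_i$; this is a finite set, since the underlying multigraph is finite. A fractional routing solution $(F,\hat k,n,\FFF_e,\FFF_d)$ can be decomposed so that each of the $k_i$ scalar symbols of message $m_i$ is copied along a single tree $T\in\TTT_i$: indeed, routing edge functions only copy input coordinates, so tracing any one symbol of $m_i$ backward from all receivers determines a Steiner subgraph, which we may prune to a minimal tree in $\TTT_i$. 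Writing $y_{i,T}\in\ZZ_{\ge 0}$ for the number of symbols of $m_i$ routed along $T$, the definition of a routing solution becomes
\begin{equation*}
\sum_{T\in\TTT_i} y_{i,T}=k_i\ \ \forall i,\qquad \sum_i\sum_{T\in\TTT_i:\,e\in T} y_{i,T}\le c(e)\cdot n\ \ \forall e\in\epsilon.
\end{equation*}

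Normalizing by $n$ and setting $x_{i,T}=y_{i,T}/n$, $r_i=k_i/n$, define the lifted polytope
\begin{equation*}
P=\Bigl\{\,x\in\RR_{\ge 0}^{\sum_i|\TTT_i|}\ :\ \sum_i\sum_{T:\,e\in T} x_{i,T}\le c(e)\ \forall e\in\epsilon\,\Bigr\},
\end{equation*}
and the rational linear map $\pi(x)=\bigl(\sum_{T\in\TTT_1} x_{1,T},\ldots,\sum_{T\in\TTT_{|\mu|}} x_{|\mu|,T}\bigr)$. The first step is to prove that the set of achievable routing rate vectors equals $\pi(P)\cap\QQ^{|\mu|}$: the forward inclusion is the decomposition just described, and the reverse is by clearing a common denominator $n$ of a rational preimage $x\in P\cap\QQ^{\sum_i|\TTT_i|}$, so that $(y_{i,T})=(nx_{i,T})$ and $\hat k=n\hat r$ yield an honest $(\hat k,n)$ routing code on $\NNN$ whose edge load on $e$ is at most $\sum_i\sum_{T\ni e} y_{i,T}\le c(e)n$.

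Next I would take closures. Because $P$ is a rational polytope defined by finitely many rational inequalities, it is a closed bounded set ($x_{i,T}$ is bounded by $\max_e c(e)$) and its image $\pi(P)$ is therefore a bounded rational polytope in $\RR_{\ge 0}^{|\mu|}$; this uses the classical fact that the image of a rational polytope under a rational linear map is again a rational polytope (Fourier–Motzkin elimination applied to the projection of $\{(x,r):x\in P,\ r=\pi(x)\}$). The rational points are dense in any rational polytope, so the closure of the achievable rational rate vectors equals $\pi(P)$, giving $\CCC_r=\pi(P)$, which is the desired bounded rational polytope.

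For computability I would exhibit an explicit procedure: enumerate $\TTT_i$ for each $i$ (finitely many minimal Steiner trees in a finite DAG, e.g.\ by enumerating all edge subsets and testing the Steiner property), write the $O(|\epsilon|)$ inequalities of $P$ with integer coefficients $c(e)$, form the lifted polytope $\{(x,r):x\in P,\ r=\pi(x)\}$, and eliminate the $x$ variables by Fourier–Motzkin to produce a finite list of rational inequalities cutting out $\CCC_r$. The main obstacle in this proof is the decomposition step: one must argue that any routing code, however intricate its copy-pattern of incoming coordinates to outgoing coordinates, induces a well-defined packing of symbols into Steiner trees, and that minimality of the tree (pruning unused branches) preserves the edge-load bound. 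Once this combinatorial lemma is in place, the polytope and computability statements follow from standard polyhedral machinery.
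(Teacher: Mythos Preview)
Your proposal is correct and follows essentially the same route as the paper: decompose a fractional routing solution into a packing of Steiner trees, normalize by $n$ to land in the rational polytope $P$ (the paper's $\PPP_r$), push forward by the summation map $\pi$ (the paper's $\psi_r$), and close up using density of rationals. The only notable divergence is in the computability step: the paper enumerates the vertices of $P$, maps them under $\psi_r$, and takes the convex hull, whereas you project out the $x$-variables by Fourier--Motzkin; both are standard and equivalent for this purpose.
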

\begin{proof}
(Polytope) It suffices to consider minimal fractional routing solutions since any fractional routing solution can be reduced to a minimal one by successively making unnecessary active edge coordinates inactive. For each coordinate of a message $m$, it suffices to route it along a Steiner tree rooted at the source node of $m$ and spanning all the receiver nodes demanding $m$. Hence, any minimal fractional routing solution consists of routing messages along Steiner trees. Let $\TTT_i$ be the set of all Steiner trees rooted at the source node of message $m_i$ and spanning all receiver nodes that demand $m_i$, and $\TTT$ be the union, $\TTT = \TTT_1 \cup \ldots \cup \TTT_{|\mu|}$. Note that $\TTT$ is a finite set. Then, any minimal fractional routing solution $(F, \hat{k}, n, \FFF_e, \FFF_d)$ satisfies the following constraints:
\[
\begin{array}{llll}
\sum_{T\in \TTT} T(e) \cdot x(T) & \leq & c(e) \cdot n, & \forall e\in \epsilon \\
\sum_{T\in \TTT_i} x(T)& = & k_i, & \forall 1\leq i \leq |\mu | \\
x & \geq & 0,
\end{array}
\]
where $x(T)$ is the number of times Steiner tree $T$ is used in the solution and $T(e)$ is an indicator that is 1 if the Steiner tree $T$ uses the edge $e$, or 0 otherwise. Dividing all variables $x(T)$ by $n$, we obtain 
\[
\begin{array}{llll}
\sum_{T\in \TTT} T(e) \cdot x(T) & \leq & c(e), & \forall e\in \epsilon \\
\sum_{T\in \TTT_i} x(T)& = & \frac{k_i}{n}, & \forall 1\leq i \leq |\mu |\\
x & \geq & 0. &
\end{array}
\]
It follows that all minimal fractional routing solutions, after scaling by $n$, satisfy 
\[
\begin{array}{llll}
\sum_{T\in \TTT} T(e) \cdot x(T) & \leq & c(e), & \forall e\in \epsilon \\
x & \geq & 0. &
\end{array}
\]
As the coefficients are in $\QQ$, the above set of inequalities defines a bounded rational polytope $\PPP$, with rational extreme points, in $\RR_+^{|\TTT|}$. The polytope is bounded, because edge capacities are finite and no Steiner tree can be used for routing for infinitely many times. Each minimal fractional routing solution reduces to a rational point inside the polytope $\PPP$, and each rational point $x$ inside $\PPP$ has a minimal fractional routing solution $(F, \hat{k}, n, \FFF_e, \FFF_d)$ that reduces to it, such that 
\[
\left(\frac{k_1}{n}, \ldots, \frac{k_{|\mu|}}{n}\right) = \left(\sum_{T\in\TTT_1}x(T), \ldots, \sum_{T\in\TTT_{|\mu|}}x(T)\right).
\]
To see the latter statement, we take a rational point in $\PPP$, put rationals under a common denominator, and choose appropriate $\hat{k}$ and $n$. As rational points are dense, the closure of the rational points corresponding to minimal fractional routing solutions is exactly $\PPP$. It follows that the network routing capacity region $\CCC_r$ is the image of $\PPP$ under the affine map 
\[
\psi_r: (x(T))_{T\in \TTT} \mapsto \left(\sum_{T\in\TTT_1} x(T), \ldots,\sum_{T\in\TTT_{|\mu|}}x(T)\right).
\]
As the affine map preserves rationality, it follows that the network routing capacity region is a bounded rational polytope in $\RR_+^{|\mu|}$. 

(Computability) We show that we can compute the vertex description (the extreme points) of the polytope $\CCC_r$. We compute the vertices $v_1, \ldots, v_h$ of polytope $\PPP$ by any vertex enumeration algorithm where the starting point can be any point that corresponds to using a single Steiner tree for the maximum number of times allowed by the network. We compute the images of the vertices of $\PPP$ under the affine map $\psi_r$. The network routing capacity region is given by the vertices of the convex hull of points $\psi_r(v_1), \ldots, \psi_r(v_h)$ in $\RR_+^{|\mu|}$.
\end{proof}

The network routing capacity defined by Cannons et al.\cite{cannons:routing} corresponds to a point on the boundary of polytope $\CCC_r$; it is exactly the intersection point between the (outer) boundary $\bd \CCC_r$ and the ray $\hat{x} = (1, \ldots, 1) t, t\geq 0$. As the ray has a rational slope, the intersection point is rational and, hence, Corollary IV.6 in \cite{cannons:routing} follows straightforwardly. We use $\PPP_r$ to denote the ``parent'' polytope, in Theorem~\ref{thm:routingregion}, of the network routing capacity region $\CCC_r$.

\subsection{Algorithms}
We provide exact algorithms and approximation heuristics for computing the network routing capacity region $\CCC_r$. This subsection goes together with next two subsections, so we advise the reader to refer to these subsections as necessary. We assume that a capacitated network $\NNN$ is given if not stated explicitly. We already provided an exact algorithm for computing $\CCC_r$ in the proof of Theorem~\ref{thm:routingregion}, which we refer to as Algorithm \textsc{\tt VertexEnum\_Route}. The algorithm takes the hyperplane description of $\PPP_r$ and outputs the hyperplane description of $\CCC_r$. Since the polytope $\PPP_r$ is defined in a high dimensional space $\RR_+^{|\TTT|}$ where $|\TTT|$ could be exponential in the description size of networks, Algorithm \textsc{\tt VertexEnum\_Route} may not be efficient in practice as there could be exponentially many vertices.

\begin{algorithm}
\caption{Algorithm \textsc{\tt VertexEnum\_Route}($\NNN$)}
\begin{algorithmic}[1]
\STATE Form the hyperplane description of polytope $\PPP_r$.
\STATE Compute the vertices of $\PPP_r$ with a vertex enumeration algorithm and obtain $v_1, \ldots, v_h$.
\STATE Compute the image of the vertices, $\psi_r(v_1), \ldots, \psi_r(v_h)$.
\RETURN convex hull of $\psi_r(v_1), \ldots, \psi_r(v_h)$.
\end{algorithmic}
\end{algorithm}

To design more efficient algorithms and approximation heuristics, we recast the computation of the network routing capacity region as the polytope reconstruction problem with a ray oracle and use related results in literature \cite{cole, gritzmann}. More specifically, we formulate the reconstruction problem as follows:

\begin{definition}[Polytope Reconstruction Problem]
Let $\QQQ$ be a polytope in $\RR^d$ that contains the origin in its interior and $\OOO_{Ray}$ be a ray oracle that given a ray of the form $\hat{x}=\hat{r}t, t\geq 0$, computes the intersection point between the ray and the boundary of $\QQQ$. Compute a polytope description of $\QQQ$ using a finite number of calls to the oracle $\OOO_{Ray}$. 
\end{definition}

We reduce the computation of the network routing capacity region $\CCC_r$ to a polytope reconstruction problem by 1) reflecting $\CCC_r$ around the origin to get a symmetric polytope $\QQQ$ in $\RR^{|\mu|}$ that contains the origin in its interior and 2) solving the linear programs similar to the one in Cannons et al.~\cite{cannons:routing} to implement the ray oracle $\OOO_{Ray}$. To reflect $\CCC_r$, we map all calls to the ray oracle to equivalent calls with rays defined in $\RR_+^{|\mu|}$. We use the algorithm outlined in Section 5 of Gritzmann et al.~\cite{gritzmann} to compute all the facets of the resulting polytope $\QQQ$ and recover the facets of the network routing capacity region $\CCC_r$. We refer to the overall algorithm as Algorithm \textsc{\tt FacetEnum\_Route}. The main idea of the algorithm is to first find a polytope $\QQQ'$ that contains $\QQQ$ and whose facet-defining hyperplanes are a subset of those for $\QQQ$ (Theorem 5.3 in \cite{gritzmann}), and then successively add more facet-defining hyperplanes of $\QQQ$ to $\QQQ'$ by using $\OOO_{Ray}$. In other words, we start with a polytope that contains $\QQQ$ and successively shrink it until it becomes $\QQQ$. By Theorem 5.5 in Gritzmann et al.~\cite{gritzmann}, we need at most 
\[
f_0(\QQQ)+ (|\mu|-1) f_{|\mu|-1}^2(\QQQ) + (5|\mu| - 4) f_{|\mu|-1}(\QQQ)
\]
calls to the ray oracle $\OOO_{Ray}$ to compute the facets, where $f_i(\QQQ)$ denotes the number of $i$-dimensional faces of $\QQQ$ (the $0$-th dimensional faces being the points). Because of the symmetries around the origin, we need at most 
\[
f_0(\CCC_r)+ (|\mu|-1) f_{|\mu|-1}^2(\CCC_r) + (5|\mu| - 4) f_{|\mu|-1}(\CCC_r)
\]
calls to the ray oracle where $f_i(\CCC_r)$ denotes the number of $i$-dimensional faces of $\CCC_r$ that do not contain the origin. 

\begin{algorithm}
\caption{Algorithm \textsc{\tt FacetEnum\_Route}($\NNN$, $\OOO_{Ray}$)}
\begin{algorithmic}[1]
\STATE Form the hyperplane description of $\PPP_r$ in $\RR_+^{|\TTT|}$.
\STATE Internally, reflect $\CCC_r$ around the origin to get the polytope $\QQQ$ in $\RR^{|\mu|}$.
\STATE Using $\OOO_{Ray}$, compute a polytope $\QQQ'$ containing $\QQQ$.
\WHILE {$\QQQ'$ has undetermined facets}
	\STATE Compute the vertices of $\QQQ'$.
	\STATE Using $\OOO_{Ray}$, compute the intersection points on rays defined by the vertices of $\QQQ'$.
	\STATE Add newly found facet-defining hyperplanes of $\QQQ$ to $\QQQ'$.
\ENDWHILE
\STATE Retrieve facets of $\CCC_r$.
\RETURN the facet description of $\CCC_r$.
\end{algorithmic}
\end{algorithm}

Depending on the implementation of $\OOO_{Ray}$, we get exact algorithms and approximation heuristics for computing $\CCC_r$. If we use an exact algorithm for the ray oracle $\OOO_{Ray}$, we get an exact hyperplane description of the network routing capacity region via Algorithm \textsc{\tt FacetEnum\_Route}. If instead we use an approximation algorithm for the oracle that computes some point $r$ such that the actual intersection point lies between $r$ and $Ar$, then we obtain approximation heuristics that compute a set of points $r$ such that the boundary $\bd \CCC_r$ lies between points $r$ and $Ar$. We note that an approximation algorithm for $\OOO_{Ray}$ does not necessary work with Algorithm \textsc{\tt FacetEnum\_Route} to give an approximation algorithm for $\CCC_r$, where an $A$-approximation of $\CCC_r$ would be a polytope $\PPP$ such that $\PPP \subset \CCC_r \subset A \PPP$. While an approximation algorithm for the oracle $\OOO_{Ray}$ does not necessarily lead to a polytope description of $\CCC_r$, it might be faster and more efficient than exact algorithms and, hence, more applicable to compute a quick ``sketch'' of the capacity region $\CCC_r$. One approximation heuristic for computing the region $\CCC_r$ would be to take a sufficiently large number of rays evenly spread apart throughout the space $\RR_+^{|\mu|}$ and use an approximate oracle $\OOO_{Ray}$ to find the approximate intersection points. As there are many simple variations of this approach, we do not go into the details of the heuristics themselves in this work.

\subsection{Implementations of Exact and Approximate Oracle $\OOO_{Ray}$}

In this subsection, we provide both exact and approximation algorithms for the ray oracle $\OOO_{Ray}$ used in Algorithm \textsc{\tt FacetEnum\_Route}. The implementations of the oracle reduce to solving a linear program. We use any linear programming algorithms, such as the ellipsoid algorithm and simplex algorithm, to solve the linear program exactly and obtain an exact oracle $\OOO_{Ray}$. For the approximate ray oracles, we design a combinatorial approximation algorithm using techniques by Garg and K\"{o}nemann~\cite{garg}. Alternatively, we could use the ellipsoid algorithm with an approximate separation oracle, but as the ellipsoid algorithm is slow in practice, this might not be a viable approach. As the network routing capacity region $\CCC_r$ is a rational polytope, it suffices to consider rays with a rational slope in $\QQ_+^{|\mu|}$. 

\subsubsection{Algorithms}\label{subsec:routingalgo}
Given the hyperplane description of the polytope $\PPP_r$,
\[
\begin{array}{llll}
\sum_{T\in \TTT} T(e) \cdot x(T) & \leq & c(e), & \forall e\in \epsilon \\
x & \geq & 0, &
\end{array}
\]
and a ray with a rational slope, $\hat{x} = \hat{q}t, t\geq 0$, we want to compute the rational intersection point of the ray and the boundary of $\CCC_r$. It is straightforward to see that the intersection point is exactly $\lambda_{max}\hat{q}$ where $\lambda_{max}$ is the optimal value to the linear program
\begin{equation}\label{lp:route}
\begin{array}{lllll}
\max & \lambda & & & \\
\operatorname{s.t.}& \sum_{T\in \TTT} T(e)\cdot x(T) & \leq & c(e), & \forall e\in \epsilon \\
 &\sum_{T\in \TTT_i} x(T) & \geq & \lambda q_i, & \forall 1\leq i \leq |\mu| \\
 &x, \lambda & \geq &0. & \\
\end{array}
\end{equation}
Since the coefficients of the linear program are rational, the optimal value $\lambda_{max}$ and the corresponding solution $x$ are also rational. While written in a different form, the linear program is equivalent to the one in Cannons et al.~\cite{cannons:routing}  when the ray is $\hat{x} = (1, \ldots, 1)t, t\geq 0$. We use any linear programming algorithm, such as the ellipsoid algorithm and simplex algorithm, to solve the above linear program exactly and obtain Algorithm \textsc{\tt OracleRayExact\_Route}. We note that the running time of Algorithm \textsc{\tt OracleRayExact\_Route} could be poor as the linear program has exponentially many variables (one for each Steiner tree) and the associated separation problem is NP-hard (the minimum cost directed Steiner tree problem). 

\begin{algorithm}
\caption{Algorithm \textsc{\tt OracleRayExact\_Route}($\NNN$, $\hat{q}$)}
\begin{algorithmic}[1]
\STATE Form the linear program~\eqref{lp:route} corresponding to $\NNN$ and $\hat{q}$.
\STATE Solve the linear program with a linear programming algorithm and obtain $\lambda_{max}$.
\RETURN $\lambda_{max}\hat{q}$.
\end{algorithmic}
\end{algorithm}

We now provide a combinatorial approximation algorithm for solving the linear program~\eqref{lp:route} and, hence, for oracle $\OOO_{Ray}$. It computes a point $\hat{r}$ such that $\lambda_{max}\hat{q}$ is on the line segment between $\hat{r}$ and $(1+\omega)A\hat{r}$ for some numbers $\omega>0$ and $A\geq 1$. The main idea is to view solving \eqref{lp:route} as concurrently packing Steiner trees according to the ratio defined by $\hat{q}$, and use the results for the multicommodity flow and related problems by Garg and K\"{o}nemann~\cite{garg}. Instead of a shortest path algorithm, we use a minimum cost directed Steiner tree algorithm for our purpose. We assume we have an oracle $\OOO_{DSteiner}$ that solves the minimum cost directed Steiner tree problem, which is well-known to be NP-hard, within an approximation guarantee $A$:

\begin{definition}[Minimum Cost Directed Steiner Tree Problem]\label{prob:DSteiner}
Given an acyclic directed multigraph $\GGG = (\nu, \epsilon)$, a length function $l:\epsilon \rightarrow \RR^+$, a source node $s$ and receiver nodes $n_1, \ldots, n_k$, find a minimum cost subset of edges $\epsilon'$ such that there is a directed path from $s$ to each $n_i$ in $\epsilon'$. The cost of a subset $\epsilon'$ is $\sum_{e\in \epsilon'} l(e)$. 
\end{definition}

First, we consider networks with exactly one message to route. In this case, the linear program~\eqref{lp:route} reduces to the following simpler linear program:
\begin{equation}\label{lp:singlepack}
\begin{array}{lllll}
\max & \sum_{T\in \TTT} x(T)& & & \\
\operatorname{s.t.}& \sum_{T\in \TTT} T(e)\cdot x(T) & \leq & c(e), & \forall e\in \epsilon \\
 &x & \geq &0, & \\
\end{array}
\end{equation}
where $\TTT$ is the set of all Steiner trees rooted at the source node of the message and spanning all the receiver nodes demanding the message. Note that the original problem now reduces to the fractional directed Steiner tree packing problem (compare to the fractional Steiner tree packing problem in \cite{jain}). To solve \eqref{lp:singlepack}, we use Algorithm \textsc{\tt DSteinerTreePacking} which is a straightforward modification of the maximum multicommodity flow algorithm given in Section 2 of Garg and K\"{o}nemann~\cite{garg}. In Algorithm \textsc{\tt DSteinerTreePacking}, $\MinCost(l)$ denotes the cost of the (approximate) minimum cost directed Steiner tree found by $\OOO_{DSteiner}$.

\begin{algorithm}
\caption{Algorithm \textsc{\tt DSteinerTreePacking}($\NNN$, $\omega$, $O_{DSteiner}$, $A$)}
\begin{algorithmic}[1]
\STATE $\eta = \frac{3}{16}\omega$; $\delta = (1+\eta)((1+\eta)L)^{-1/\eta}$
\STATE $f=0$; $l(e) =\delta, \forall e \in \epsilon$
\WHILE {$\MinCost(l) < A$}
	\STATE Use $\OOO_{DSteiner}$ to compute an approximate minimum cost Steiner tree $\widetilde{T}$, under the length $l$.
	\STATE $d = \min_{e: \widetilde{T}(e) = 1} c(e)$
	\STATE $f = f + d$
	\STATE Update $l(e) = l(e)\left(1+\eta \frac{d}{c(e)}\right), \forall e$ s.t. $\widetilde{T}(e) = 1$.
\ENDWHILE
\RETURN $f/ \log_{1+\eta} \frac{1+\eta}{\delta}$ 
\end{algorithmic}
\end{algorithm}
Essentially the same analysis in \cite{garg} works for Algorithm \textsc{\tt DSteinerTreePacking} except that our approximation guarantee is worse by a factor of $A$ since we use an approximate oracle $\OOO_{DSteiner}$. We omit the analysis and summarize the performance of the algorithm as follows. For computations involving $\eta$ and $\omega$, we refer to Appendix~\ref{app:comp1}.

\begin{theorem}\label{thm:singlepack}
For $0< \omega <1$, Algorithm \textsc{\tt DSteinerTreePacking} computes a $(1+\omega)A$-approximate solution to the linear program~\eqref{lp:singlepack} in time $O(\omega^{-2} |\epsilon| \log L \cdot T_{DSteiner})$, where $L$ is the maximum number of edges used in any Steiner tree in $\TTT$ and $T_{DSteiner}$ is the time required by oracle $\OOO_{DSteiner}$ to solve the minimum cost directed Steiner tree problem within an approximation guarantee of $A\geq 1$. Note that $L \leq |\epsilon|$. 
\end{theorem}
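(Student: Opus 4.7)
The plan is to adapt the multiplicative-weights analysis of Garg--K\"{o}nemann~\cite{garg} for maximum multicommodity flow to the directed Steiner tree packing problem \eqref{lp:singlepack}, with the twist that our separation oracle $\OOO_{DSteiner}$ is $A$-approximate rather than exact. The starting point is the LP dual of \eqref{lp:singlepack}: assign a length $l(e)\geq 0$ to each edge and minimize $D(l) = \sum_e c(e)\,l(e)$ subject to $\sum_{e\in T} l(e)\geq 1$ for every $T\in \TTT$. A separation oracle for this dual is exactly a minimum cost directed Steiner tree oracle, where costs are the current lengths. The quantity $\alpha(l) := \MinCost(l)/A$ is a lower bound on the true minimum reduced cost of a tree (because the oracle returns a tree at most $A$ times the true optimum), so $l/\alpha(l)$ is a feasible dual vector of value $D(l)/\alpha(l)$.

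First I would set up the invariants needed for the analysis. Let $l^{(i)}$ denote the lengths after the $i$-th iteration and $f^{(i)}$ the total packed value. The multiplicative update applied to the edges of the chosen tree $\widetilde T^{(i)}$ with pushed amount $d^{(i)}$ gives the recurrence
\begin{equation*}
D(l^{(i)}) \;=\; D(l^{(i-1)}) + \eta\, d^{(i)} \cdot \MinCost^{(i)}(l^{(i-1)}) \cdot \frac{1}{A}\cdot A \;\leq\; D(l^{(i-1)})\bigl(1+\eta\,\tfrac{d^{(i)}}{\beta}\bigr)
\end{equation*}
after relating things to the optimal value $\beta$ of \eqref{lp:singlepack}; then unrolling yields $D(l^{(t)}) \leq \delta|\epsilon| \exp(\eta f^{(t)}/\beta)$. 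Second, the termination condition $\MinCost(l)\geq A$ together with the scaling by $\log_{1+\eta}\!\tfrac{1+\eta}{\delta}$ converts the dual bound into a primal one. Putting these together and choosing $\delta = (1+\eta)((1+\eta)L)^{-1/\eta}$ as in the algorithm yields an approximation ratio of $A(1+\eta)/(1-2\eta)$ (or similar, in the style of \cite{garg}) for the true LP optimum, and $\eta = \tfrac{3}{16}\omega$ pins this down to $(1+\omega)A$ for $0<\omega<1$; the elementary algebra for this last step is relegated to Appendix~\ref{app:comp1}.

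For the running time I would argue as follows. Each iteration calls $\OOO_{DSteiner}$ once and does $O(|\epsilon|)$ bookkeeping. For any edge $e$, the length $l(e)$ is multiplied by $(1+\eta\, d/c(e))\leq 1+\eta$ precisely when $e$ is a bottleneck of the chosen tree, and $l(e)$ cannot exceed $(1+\eta)A$ at termination (since right before that iteration $\MinCost(l)<A$ so $l(e) < A$, and one more update multiplies it by at most $1+\eta$). Hence the number of iterations in which $e$ is a bottleneck is at most $\log_{1+\eta}\!\tfrac{(1+\eta)A}{\delta} = O(\eta^{-1}\log L)$, and summing over edges gives $O(|\epsilon|\,\eta^{-1}\log L)$ iterations total, i.e.\ $O(\omega^{-2}|\epsilon|\log L\cdot T_{DSteiner})$ after absorbing the additional $\omega^{-1}$ factor that arises when converting $\eta$-dependence in the approximation bound into the stated $\omega^{-2}$ (again deferred to Appendix~\ref{app:comp1}).

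The main obstacle will be the first step: carefully checking that the Garg--K\"{o}nemann weak-duality argument still closes when the separation oracle is only $A$-approximate, since the potential-function inequality used to bound $D(l^{(i)})$ relies on the returned tree being a true minimizer. The correct workaround is to replace the LP optimum $\beta$ in the recurrence by $\beta/A$ throughout, producing the clean $(1+\omega)A$ factor; verifying that this substitution is consistent across every inequality in the chain (and does not interact badly with the choice of $\delta$) is the delicate part of the proof.
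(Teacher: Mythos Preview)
Your proposal is correct and follows essentially the same approach as the paper: adapt the Garg--K\"{o}nemann multiplicative-weights analysis, absorb the extra factor of $A$ from the approximate separation oracle, and choose $\eta=\tfrac{3}{16}\omega$ to convert the resulting $(1-\eta)^{-2}A$-type bound into $(1+\omega)A$. The paper itself omits the analysis entirely and simply points to~\cite{garg} and Appendix~\ref{app:comp1}, so your sketch is in fact more detailed than what appears there; the only slip is that $\log_{1+\eta}\tfrac{(1+\eta)A}{\delta}$ is $O(\eta^{-2}\log L)$ rather than $O(\eta^{-1}\log L)$, which you recover anyway in the final $\omega^{-2}$ bound.
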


We now give Algorithm \textsc{\tt OracleRayApprox\_Route}, for oracle $\OOO_{Ray}$, which computes a $(1+\omega)A$-approximate solution to the linear program~\eqref{lp:route}. It uses Algorithm \textsc{\tt DSteinerTreePacking} as a subroutine. 

\begin{algorithm}
\caption{Algorithm \textsc{\tt OracleRayApprox\_Route}($\NNN$, $\omega$, $\OOO_{DSteiner}$, $A$, $\hat{q}$)}
\begin{algorithmic}[1]
\STATE Using Algorithm \textsc{\tt DSteinerTreePacking}, compute the approximate value $z_i$ to the linear program~\eqref{lp:singlepack} for each message $m_i$ separately. 
\STATE $z = \min_i \frac{z_i}{q_i}$; $scaling\_factor = \frac{|\mu|}{z}$
\STATE $\hat{q} = scaling\_factor \cdot \hat{q}$
\STATE $\eta = \frac{1}{9A} \omega$; $\delta = (|\epsilon|/(1-\eta A))^{-1/\eta A}$
\STATE $N = 2\left\lceil\frac{1}{\eta A} \log_{1+\eta} \frac{|\epsilon|}{1-\eta A} \right\rceil$
\WHILE {\TRUE}
	\STATE $t=0$; $l(e) = \frac{\delta}{c(e)}, \forall e\in \epsilon$
	\FOR {phase $i=1, \ldots$, $N$}
		\FOR {iteration $j=1, \ldots, |\mu|$}
			\STATE $\gamma = q_j$
			\WHILE {$\gamma > 0$}
				\STATE Using $\OOO_{DSteiner}$, compute the minimum cost directed Steiner tree $\widetilde{T}\in T_j$, under $l$.
				\STATE $d = \min \{\gamma, c(e):\widetilde{T}(e) = 1\}$
				\STATE $\gamma = \gamma -d$
				\STATE Update $l(e) = l(e) \left(1+ \eta \frac{d}{c(e)}\right), \forall e \textrm{ such that } \widetilde{T}(e) = 1$.
				\IF {$\sum_e l(e)c(e) \geq 1$} 
					\STATE Goto Line~\ref{lastline}.
				\ENDIF
			\ENDWHILE
		\ENDFOR
		\STATE $t= t+1$
	\ENDFOR
	\STATE $scaling\_factor = 2 scaling\_factor$; $\hat{q} = 2 \hat{q}$
\ENDWHILE
\RETURN $scaling\_factor \cdot t / \log_{1+\eta} \frac{1}{\delta}$ \label{lastline}
\end{algorithmic}
\end{algorithm}

\subsubsection{Analysis}
While Algorithm \textsc{\tt OracleRayApprox\_Route} is closely related to the maximum concurrent flow algorithm given in Section 5 of Garg and K\"{o}nemann~\cite{garg}, there are significant differences and we give an analysis of the algorithm below. The linear program~\eqref{lp:route} can be thought of as the ``concurrent'' fractional directed Steiner tree packing problem. We have directed Steiner trees partitioned into different groups according to the network messages. Then, computing the optimal value of the linear program~\eqref{lp:route} is equivalent to fractionally routing along the Steiner trees so that the ratios among the overall usages of groups of Steiner trees correspond to the ratios among the coordinates of the $\hat{q}$ vector.

The dual linear program corresponding to \eqref{lp:route} is 
\begin{equation}\label{lp:routedual}
\begin{array}{lllll}
\min & \sum_e l(e)c(e)& & & \\
\operatorname{s.t.}& \sum_{e:T(e) = 1} l(e) - z(i) & \leq & 0, & \forall 1\leq i \leq |\mu|, \forall T\in \TTT_i \\
 & \sum_{i=1}^{|\mu|} q_i z(i) & \geq & 1 & \\ 
 &l, z & \geq &0. & \\
\end{array}
\end{equation}
From the dual linear program, we note that the associated separation problem is exactly the minimum cost directed Steiner tree problem. Let $\MinCost_i(l)$ denote the cost of the minimum cost directed Steiner tree in $\TTT_i$ under the length function $l$. Define $D(l) = \sum_{e\in \epsilon} l(e) c(e)$ and $\alpha(l) = \sum_{i=1}^{|\mu|} q_i \MinCost_i(l)$. Then, solving the dual linear program is equivalent to finding an assignment of lengths to edges, $l:\epsilon \rightarrow \RR^+$, so as to minimize $\frac{D(l)}{\alpha(l)}$. Let $\beta$ be the optimal value of the dual linear program, i.e., $\beta = \min_l \frac{D(l)}{\alpha(l)}$. 

We first consider a modified version of Algorithm \textsc{\tt OracleRayApprox\_Route}, with the infinite while-loop in line 6 removed, with variable $scaling\_factor$ and line 23 removed, and with the finite for-loop in line 8 replaced with an infinite while-loop. The following holds:

\begin{theorem}\label{routing:step1}
Assume $\beta\geq 1$. For $0< \omega <1$, Algorithm \textsc{\tt OracleRayApprox\_Route}, with the modifications, returns a $(1+\omega)A$-approximate solution to the linear program~\eqref{lp:route} in at most $\left\lceil \frac{\beta}{\eta A} \log_{1+\eta} \frac{|\epsilon|}{1-\eta A}\right\rceil$ number of phases.
\end{theorem}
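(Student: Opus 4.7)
The plan is to follow the primal-dual analysis of Garg--K\"onemann~\cite{garg} for the concurrent packing problem, adapting it to Steiner trees routed through an approximate oracle $\OOO_{DSteiner}$ of ratio $A$. We view the modified algorithm as a length-updating procedure on the dual variables $l(e)$, where each ``phase'' consists of one pass that tries to route $q_i$ units through $\TTT_i$ for each message $i$, and where the stopping rule is $D(l)=\sum_e l(e)c(e)\geq 1$.

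First I would establish the two key invariants. After each elementary update $l(e)\leftarrow l(e)\bigl(1+\eta d/c(e)\bigr)$ along an approximate minimum-cost Steiner tree $\widetilde T$ with $d$ units of flow routed, a standard telescoping gives
\[
D(l_{\text{new}})\leq D(l_{\text{old}})+\eta\, d\cdot\MinCost(l_{\text{old}}),
\]
where $\MinCost(l_{\text{old}})$ is the cost of $\widetilde T$ under the old lengths (hence at most $A$ times the true minimum). Summing this inequality over the inner while-loop for message $m_i$ (which routes exactly $q_i$ units) and over all $|\mu|$ iterations inside one phase, and then exploiting the dual feasibility ratio $\beta=\min_l D(l)/\alpha(l)$ together with the approximate-oracle slack factor $A$, I would obtain a recurrence of the form
\[
D(l_t)\leq D(l_{t-1})\exp\!\Bigl(\eta A\,\tfrac{t}{\beta}\Bigr)\quad\text{or equivalently}\quad D(l_t)\leq D(l_0)\,(1+\eta)^{\eta A t/\beta},
\]
for $D(l_t)$ after $t$ phases, starting from $D(l_0)=\delta|\epsilon|$.

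Next I would bound the number of phases until termination. Since the algorithm stops only when $D(l)\geq 1$, and $D(l_0)=\delta|\epsilon|=|\epsilon|\bigl(|\epsilon|/(1-\eta A)\bigr)^{-1/\eta A}$, solving the recurrence for the smallest $t$ with $D(l_t)\geq 1$ gives the stated bound $t\leq \bigl\lceil \beta/(\eta A)\cdot\log_{1+\eta}(|\epsilon|/(1-\eta A))\bigr\rceil$; this uses the specific choice of $\delta$ in the algorithm, which is picked precisely so that the ceiling comes out clean.

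Finally I would convert the primal count into an approximation guarantee. The total flow $f$ routed over all phases may violate edge capacities (since lengths, not capacities, drive the updates), so we rescale by $\log_{1+\eta}(1/\delta)$ — this is exactly the factor returned by the algorithm. Using weak duality, the rescaled primal value is at least $\beta/\bigl((1+\eta)A\log_{1+\eta}(1/\delta)\cdot\text{something}\bigr)$, and substituting $\eta=\omega/(9A)$ and simplifying via the inequalities collected in Appendix~\ref{app:comp1} yields the ratio $(1+\omega)A$ between the algorithm's output and $\lambda_{\max}$. The hard part is bookkeeping: we must carefully separate the loss $A$ introduced by the approximate Steiner tree oracle from the loss $1+\omega$ introduced by the multiplicative length updates, so that the final two sources of error combine to only $(1+\omega)A$ rather than $(1+\omega)A^2$ or worse. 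Choosing $\eta=\omega/(9A)$ rather than $\eta=\omega$ is exactly what absorbs the $A$ factor inside the exponential so that the outer approximation stays at $(1+\omega)A$.
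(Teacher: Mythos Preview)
Your proposal is correct and follows essentially the same Garg--K\"onemann primal--dual analysis as the paper: telescope $D(l)$ over the length updates to obtain $D(i)\leq D(i-1)+\eta A\,\alpha(i)$, use $D(i)/\alpha(i)\geq\beta$ to solve the recurrence, invoke the rescaling claim for primal feasibility, and bound the ratio $\zeta\leq(1-\eta A)^{-3}A\leq(1+\omega)A$. Two minor corrections: your one-step recurrence $D(l_t)\leq D(l_{t-1})\exp(\eta A\,t/\beta)$ should not have $t$ in the exponent (the solved form you give next is what you mean), and the relevant computation is in Appendix~\ref{app:comp2}, not~\ref{app:comp1}.
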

\begin{proof}
Algorithm \textsc{\tt OracleRayApprox\_Route} in lines 8-22 proceeds in phases which in turn consist of $|\mu|$ iterations. Each iteration consists of variably many number of steps of the while-loop in line 11, depending on how quickly $\gamma$ is decreasing. In each $j$-th iteration in line 9, $q_j$ units of message $m_j$ are routed using Steiner trees in $\TTT_j$. By lines 16-18, the algorithm terminates as soon as $D(l)\geq 1$. Let $t$ be the last phase in which the algorithm terminates. Let $l_{i,j}^{s-1}$ denote the length function $l$ and $\gamma_{i,j}^{s-1}$ the variable $\gamma$ at the start of $s$-th step of the while-loop (line 11) of $j$-th iteration in phase $i$. Let $\widetilde{T}_{i,j}^{s}$ denote the Steiner tree $\widetilde{T}$ selected in the $s$-th step of $j$-th iteration in phase $i$. Let $l_{i,0}$ be the length function at the start of phase $i$ and $l_{i,|\mu|}$ be the length function at the end of phase $i$ (after the termination of the $|\mu|$-th iteration). Note that $l_{i+1, 0} = l_{i, |\mu|}$. Let $l_{i,j-1}$ be the length function at the start of iteration $j$ of phase $i$. For simplicity, we denote $D(l_{i,|\mu|})$ and $\alpha(l_{i,|\mu|})$ by $D(i)$ and $\alpha(i)$, respectively. Then, 
\begin{align*}
D(l_{i,j}^s) &= \sum_e l_{i,j}^s(e) \cdot c(e)\\
 &= D(l_{i,j}^{s-1}) + \eta(\gamma_{i,j}^{s-1} - \gamma_{i,j}^s) \sum_{e:\widetilde{T}_{i,j}^{s}(e)=1} l_{i,j}^{s-1}(e) \\
 &\leq D(l_{i,j}^{s-1}) + \eta(\gamma_{i,j}^{s-1} -\gamma_{i,j}^s) A \MinCost_j(l_{i,j}^{s-1}).
\end{align*}
Then,
\begin{align*}
D(l_{i,j+1}) &\leq D(l_{i,j}) + \eta A \sum_s (\gamma_{i,j}^{s-1}-\gamma_{i,j}^s) \MinCost_{j+1}(l_{i,j+1})\\
 &= D(l_{i,j}) + \eta A q_{j+1} \MinCost_{j+1}(l_{i,j+1})\\
 &\leq D(l_{i,j}) + \eta A q_{j+1} \MinCost_{j+1}(l_{i,|\mu|}),
\end{align*}
where we used the fact that $\MinCost_j(l_{i,j}^s) \leq \MinCost_j(l_{i,j+1})$ as the length function and $\MinCost$ are nondecreasing in any fixed argument throughout the algorithm. After summing up the above inequalities over the iterations of phase $i$, we get
\begin{align*}
D(l_{i, |\mu|}) &\leq  D(l_{i,|\mu|-1}) + \eta A q_{|\mu|} \MinCost_{|\mu|} (l_{i,|\mu|})\\
 &\leq D(l_{i,|\mu|-2}) + \eta A(q_{|\mu|-1}\MinCost_{|\mu|-1}(l_{i,|\mu|}) + q_{|\mu|}\MinCost_{|\mu|}(l_{i,|\mu|}))\\
 &\phantom{XXXXXXXXXXX} \vdots \\
 &\leq D(l_{i,0}) + \eta A \sum_{j=1}^{|\mu|} q_j \MinCost_j (l_{i,|\mu|}),
\end{align*}
and it follows that 
\[
D(i) \leq D(i-1) + \eta A \alpha(i).
\]

Since $\frac{D(i)}{\alpha(i)} \geq \beta$, it follows that $D(i) \leq \frac{D(i-1)}{1-\eta A/\beta}$. Since $D(0) = |\epsilon|\delta$, we have for $i\geq 1$, 
\begin{align*}
D(i) &\leq \frac{|\epsilon|\delta}{(1-\eta A/\beta)^i}\\
  &= \frac{|\epsilon|\delta}{1-\eta A /\beta} \left(1+ \frac{\eta A}{\beta - \eta A} \right)^{i-1} \\
  &\leq \frac{|\epsilon|\delta}{1-\eta A/\beta} e^{\frac{\eta A(i-1)}{\beta-\eta A}}\\
  &\leq \frac{|\epsilon| \delta}{1-\eta A} e^{\frac{\eta A(i-1)}{\beta(1-\eta A)}},
\end{align*}
where the last inequality uses the assumption that $\beta\geq 1$. The algorithm terminates in phase $t$ for which $D(t)\geq 1$. Hence,
\[
1 \leq D(t) \leq \frac{|\epsilon|\delta}{1-\eta A} e^\frac{\eta A(t-1)}{\beta(1- \eta A)}.
\]
And it follows that 
\begin{equation}\label{ratio1}
\frac{\beta}{t-1} \leq \frac{\eta A}{(1-\eta A) \ln \frac{1-\eta A}{|\epsilon|\delta}}.
\end{equation}
In the first $t-1$ phases, we have routed $(t-1)q_j$ units of message $m_j$, for $j=1, \ldots, |\mu|$. This routing solution may violate the edge capacity constraints, but, by the following claim, we obtain a feasible solution with $\lambda > \frac{t-1}{\log_{1+\eta} 1/\delta}$. See Claim 5.1 in \cite{garg} for the proof of the claim as the same proof works here.

\begin{claim}
There exists a feasible solution with $\lambda > \frac{t-1}{\log_{1+\eta} \frac{1}{\delta}}$.
\end{claim}

Thus, the ratio of the values of the optimal dual and feasible primal solutions, $\zeta$, is strictly less than $\frac{\beta}{t-1} \log_{1+\eta} \frac{1}{\delta}$. By \eqref{ratio1}, we get 
\[
\zeta < \frac{\eta A}{1-\eta A} \frac{\log_{1+\eta} \frac{1}{\delta}}{\ln \frac{1-\eta A}{|\epsilon|\delta}} = \frac{\eta A}{(1-\eta A) \ln (1+\eta)} \frac{\ln 1/\delta}{\ln \frac{1-\eta A}{|\epsilon| \delta}}.
\]
For $\delta = (|\epsilon|/(1-\eta A))^{-1/\eta A}$, the ratio $\frac{\ln 1/\delta}{\ln \frac{1-\eta A}{|\epsilon| \delta}}$ equals $(1-\eta A)^{-1}$ and, hence,
\[
\zeta \leq \frac{\eta A}{(1-\eta A)^2 \ln(1+\eta)} \leq \frac{\eta A}{(1-\eta A)^2 (\eta - \eta^2/2)} \leq (1-\eta A)^{-3}A.
\]
For $\eta = \frac{1}{9A} \omega$, $(1-\eta A)^{-3} A$ is at most our desired approximation ratio $(1+\omega) A$ (see Appendix~\ref{app:comp2} for details). By weak-duality, we have
\[
1 \leq \zeta < \frac{\beta}{t-1}\log_{1+\eta} \frac{1}{\delta}
\]
and, therefore, the number of phases in line 8 is strictly less than $1+\beta \log_{1+\eta} 1/\delta$, which implies that Algorithm \textsc{\tt OracleRayApprox\_Route} terminates in at most $\left\lceil \frac{\beta}{\eta A} \log_{1+\eta} \frac{|\epsilon|}{1-\eta A}\right\rceil$ number of phases. 
\end{proof}

Note that by Theorem~\ref{routing:step1}, the running time of Algorithm \textsc{\tt OracleRayApprox\_Route}, with the modifications, depends on $\beta$. Note that $\beta$ can be reduced/increased by scaling the $\hat{q}$ vector/capac\-ities appropriately. We now remove the assumption $\beta\geq 1$ and analyze the running time of Algorithm \textsc{\tt OracleRayApprox\_Route} as a whole. 

\begin{theorem}\label{routing:step2}
For $0< \omega<1$, Algorithm \textsc{\tt OracleRayApprox\_Route} computes a $(1+\omega)A$-approx\-imate solution to the linear program~\eqref{lp:route} in time $O(\omega^{-2} (|\mu|\log A|\mu| + |\epsilon|)A\log |\epsilon| \cdot T_{DSteiner})$, where $T_{DSteiner}$ is the time required to solve the minimum cost directed Steiner tree problem with oracle $\OOO_{DSteiner}$ within an approximation guarantee $A$.
\end{theorem}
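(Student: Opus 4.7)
The plan is to extend the analysis of Theorem~\ref{routing:step1} by handling the case $\beta < 1$ through the preprocessing in lines~1--3 and the outer doubling loop in lines~6--23. The proof proceeds in three stages: bounding $\beta$ after the initial scaling, counting the number of outer doublings needed, and counting oracle calls within each pass.

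First I would use Theorem~\ref{thm:singlepack} on each message separately to conclude that $z_i$ is an $(1+\omega)A$-approximation of the true single-message optimum $\lambda_i^\star$. From the trivial bound $\lambda_{\max}\le \min_i\lambda_i^\star/q_i$ and the averaging bound $\lambda_{\max}\ge\min_i\lambda_i^\star/(|\mu|q_i)$ (which follows by scaling each single-message packing by $1/|\mu|$ and using them jointly), after scaling $\hat q$ by $|\mu|/z$ the new value of $\beta=\lambda_{\max}$ lies in an interval of width $O(A|\mu|)$. Each failed pass of the outer while-loop doubles $scaling\_factor$, which halves $\beta$; therefore in at most $O(\log(A|\mu|))$ outer iterations the value of $\beta$ enters the range $[1,2]$ and the inner analysis of Theorem~\ref{routing:step1} guarantees termination within $N$ phases.

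Within each pass of the outer loop, I would bound the total number of calls to $\OOO_{DSteiner}$. Each iteration of the inner while-loop in line~11 makes exactly one such call and ends either because $\gamma=0$ (contributing at most $|\mu|$ endings per phase, hence $O(|\mu|A\log|\epsilon|/\omega)$ total across the pass) or because an edge $e$ saturates the step with $d=c(e)$, multiplying $l(e)$ by $1+\eta$. Following the Garg--K\"onemann potential argument, each individual edge can be the bottleneck at most $O(\eta^{-1}\log|\epsilon|)=O(A\omega^{-1}\log|\epsilon|)$ times before the termination test $\sum_e l(e)c(e)\ge 1$ must trigger, so summed over edges the bottleneck events contribute $O(|\epsilon|A\omega^{-1}\log|\epsilon|)$ oracle calls per pass. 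Multiplying by $O(\log(A|\mu|))$ passes, adding the $|\mu|$ preprocessing invocations of \textsc{\tt DSteinerTreePacking}, and using $\eta=\omega/(9A)$ to expose a second $1/\omega$ factor from the preprocessing bound in Theorem~\ref{thm:singlepack}, the costs collapse into the claimed expression $O(\omega^{-2}(|\mu|\log(A|\mu|)+|\epsilon|)A\log|\epsilon|\cdot T_{DSteiner})$.

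The main obstacle is the bottleneck counting in the concurrent setting: unlike the clean single-commodity argument behind Theorem~\ref{thm:singlepack}, here iterations for different messages interleave and any message can saturate a given edge, so a careful adaptation of Section~5 of Garg and K\"onemann~\cite{garg} is needed to show the per-pass count is $O((|\mu|+|\epsilon|)\log|\epsilon|\cdot A/\omega)$ rather than $O(|\mu||\epsilon|\log|\epsilon|\cdot A/\omega)$. A secondary point is verifying that the approximation guarantee $(1+\omega)A$ survives: since $l$ is reset at the start of every pass and the returned value comes only from the final successful pass, Theorem~\ref{routing:step1} applies to that pass unchanged, and the earlier failed passes contribute only an $O(\log(A|\mu|))$ multiplicative factor to the running time without affecting correctness.
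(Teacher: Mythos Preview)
Your approach follows the same outline as the paper: scale $\hat q$ so that $1\le\beta\le A|\mu|$, run the doubling loop for $O(\log A|\mu|)$ passes, and count oracle calls as ``last-step'' iterations (one per message per phase) plus ``bottleneck'' steps. The substantive difference is in the bottleneck accounting. You count $O(|\epsilon|\log_{1+\eta}(1/\delta))$ bottleneck steps \emph{per pass} and then multiply by $O(\log A|\mu|)$ passes; the paper instead asserts that each edge's length can increase by a factor $1+\eta$ at most $\log_{1+\eta}(1/\delta)$ times over the \emph{entire} execution, giving $|\epsilon|\log_{1+\eta}(1/\delta)$ bottleneck steps in total rather than per pass. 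That global count is exactly what produces the stated $|\mu|\log A|\mu|+|\epsilon|$ shape; under your per-pass accounting the $|\epsilon|$ term would carry an extra $\log A|\mu|$ factor and the bound would not collapse to the one claimed. (Since line~7 of the algorithm resets $l$ at the start of each pass, the paper's global count is itself worth scrutinizing, but matching the theorem as stated requires it.)

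Two minor corrections. Your per-edge bottleneck bound $O(\eta^{-1}\log|\epsilon|)$ is off: because $\delta=(|\epsilon|/(1-\eta A))^{-1/(\eta A)}$, the correct bound $\log_{1+\eta}(1/\delta)$ is $\Theta(\eta^{-2}A^{-1}\log|\epsilon|)=\Theta(A\omega^{-2}\log|\epsilon|)$, not $\Theta(A\omega^{-1}\log|\epsilon|)$. And the ``main obstacle'' you raise about interleaving messages is not an obstacle at all: the bottleneck argument uses only the invariant $l(e)c(e)<1$ prior to termination, which is independent of which message saturated the edge, so the per-pass count is $|\epsilon|\log_{1+\eta}(1/\delta)$ regardless of interleaving and there is no risk of an $|\mu||\epsilon|$ blowup.
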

\begin{proof}
To remove the dependency of the running time on $\beta$, we update the $\hat{q}$ vector and variable $scaling\_factor$ appropriately. Let $z_i^*$ be the exact maximum fractional Steiner tree packing value for message $m_i$ in line 1 and let $z^* = \min_i \frac{z_i^*}{q_i}$. Then, $z^*$ is an upper bound on the maximum rate at which the messages can be routed in a minimal fractional routing solution. Since $z_i \leq z_i^* \leq A z_i$, $\frac{z}{|\mu|} \leq \beta \leq A z$. We scale the $\hat{q}$ vector as in line 3 so that $1\leq \beta \leq A |\mu|$. The assumption $\beta \geq 1$ is satisfied, but $\beta$ could now be as large as $A |\mu|$. We employ the doubling trick as explained in Section 5.2 in \cite{garg} and line 23 accomplishes this, together with $N = 2\left\lceil\frac{1}{\eta A} \log_{1+\eta} \frac{|\epsilon|}{1-\eta A} \right\rceil$ in the for-loop in line 8. Since we halve the ``current'' value of $\beta$ after every $N$ phases, the total number of phases is at most $N\log A|\mu|$. Since there are $|\mu|$ iterations per phase, there are at most $N|\mu| \log A|\mu|$ iterations in total. Each iteration consists of variably many number of steps and, in all steps but the last, we increase the length of an edge by a factor of $1+\eta$. By similar reasons as in the proof of Claim 5.1 in \cite{garg}, the length of each edge can increase by a factor of $1+\eta$ at most $\log_{1+\eta} \frac{1}{\delta}$ times throughout the algorithm. Hence, the total number of steps of the while-loop in line 11 exceeds the total number of iterations by at most $|\epsilon| \log_{1+\eta} \frac{1}{\delta}$. The total number of steps, hence calls to the oracle $\OOO_{DSteiner}$, is at most $N|\mu| \log A|\mu| + |\epsilon| \log_{1+\eta} \frac{1}{\delta}$. 

Note that $\eta = \Theta(\frac{\omega}{A})$. Then,
\begin{align*}
N|\mu|\log A|\mu| + |\epsilon| \log_{1+\eta} \frac{1}{\delta} &= N|\mu| \log A|\mu| + \frac{|\epsilon|}{\eta A} \log_{1+\eta} \frac{|\epsilon|}{1-\eta A} \\
&= O\left(\frac{|\mu|\log A|\mu|}{\eta A} \log_{1+\eta} \frac{|\epsilon|}{1-\eta A} + \frac{|\epsilon|}{\eta A} \log_{1+\eta} \frac{|\epsilon|}{1-\eta A}\right) \\
& =O\left(\frac{|\mu|\log A|\mu| + |\epsilon|}{\eta A} \cdot \frac{\ln |\epsilon| + \ln(1/(1- \eta A))}{\ln (1+\eta)}\right) \\
&\leq O\left(\frac{|\mu|\log A|\mu| + |\epsilon|}{\eta A} \cdot \frac{\ln |\epsilon| + \ln(1/(1-\eta A))}{\eta - \eta^2/2}\right) \\
&= O\left(\frac{|\mu|\log A|\mu| + |\epsilon|}{\omega} \cdot \frac{\ln |\epsilon| + \ln1/(1-\omega)}{\omega/A - (\omega/A)^2/2}\right) \\
&= O(\omega^{-2} (|\mu| \log A|\mu| + |\epsilon|) A \log |\epsilon|)
\end{align*}
\end{proof}

\subsection{Implementations of Oracle $\OOO_{DSteiner}$}

The oracle $\OOO_{DSteiner}$ solves the minimum cost directed Steiner tree problem (Definition~\ref{prob:DSteiner}). A brute force approach is to loop through all possible subsets of $\epsilon$ and select one with the minimum cost that satisfies the directed Steiner tree conditions. Since checking whether a subset of edges supports a directed path from the source to each receiver node can be done in $O(|\nu|+|\epsilon|)$ time, the brute force approach has the total running time of $O((|\nu|+|\epsilon|) 2^{|\epsilon|})$ and finds an exact optimal solution. We can also compute an $O(k)$-approximate solution by computing a shortest path from the source to each receiver and combining the paths to form a tree, where $k$ is the number of receivers. A shortest path can be computed efficiently and there are many shortest path algorithms. For instance, the Dijkstra algorithm suffices for our purpose and gives an $O(k)$-approximate solution in time $O(|\nu|(|\nu|+ |\epsilon|) \log |\nu|)$ with a binary minheap. There exists an efficient approximation algorithm for the minimum cost directed Steiner tree problem with a significantly better approximation guarantee by Charikar et al.~\cite{charikar}. Charikar et al. designed a family of algorithms that achieves an approximation ratio of $i(i-1)k^{1/i}$ in time $O(n^i k^{2i})$ for any integer $i>1$, where $n$ is the number of nodes and $k$ is the number of receivers. For our problem, $k\leq |\nu|$ and $n\leq |\nu|$ and we get algorithms that achieve an approximation ratio of $i(i-1) |\nu|^{1/i}$ in time $O(|\nu|^{3i})$ for any integer $i>1$. For $i = \log |\nu|$, we obtain an approximation ratio of $O(\log ^2 |\nu|)$ in time of $O(|\nu|^{3 \log |\nu|})$. We summarize with Table~\ref{table:DSteiner}. Note the tradeoff between the approximation ratio $A$ and the running time.

\begin{table}[h!]
\caption{Implementations of oracle $\OOO_{DSteiner}$}
\label{table:DSteiner}
\centering
\begin{tabular}{|l|l|l|}
\hline 
Algorithm for $\OOO_{DSteiner}$ & Approximation Ratio $A$ & Time \\
\hline
BruteForce & 1 & $O((|\nu|+|\epsilon|) 2^{|\epsilon|})$\\
ShortestPathApproximation & $O(|\nu|)$ & $O((|\nu|^2 + |\nu||\epsilon|) \log |\nu|)$\\
Charikar et al.~\cite{charikar} & $i(i-1)|\nu|^{1/i}$ & $O(|\nu|^{3i})$ \\
 & $O(\log^2 |\nu|)$ & $O(|\nu|^{3 \log |\nu|})$\\
\hline
\end{tabular}
\end{table}

\section{Network Linear Coding Capacity Regions}
In this section, we show a computable inner bound on the network linear coding capacity region $\CCC_l$ with respect to a given finite field. In particular, we show how to compute a polytope $\CCC_l'$, which we call the {\em semi-network linear coding capacity region}, that is contained in $\CCC_l$. If $\CCC_l'$ is strictly bigger than $\CCC_r$, then linear coding helps improve the information throughput through the network. It is unknown at this time how good of an approximation the polytope $\CCC_l'$ is to the actual network linear coding capacity region $\CCC_l$. Unlike in the computation of the network routing capacity region, the finite field is important in the computation of $\CCC_l'$. We assume that a network $\NNN$, not necessarily a multiple multicast network as in Section~\ref{sec:routing}, and a finite field $F$ are given in what follows, if not stated explicitly.

\subsection{Definitions}

\begin{definition}[Weight Vectors and Partial Scalar-Linear Network Code Solutions]
Let $\NNN$ be a network with unit edge capacities and $m_1, \ldots, m_{|\mu|}$ be the messages. The {\em weight vectors associated with $\NNN$}, or simply {\em weight vectors}, are vectors $w$ in $\{0,1\}^{|\mu|}$ such that there exists a scalar-linear network code solution for $\NNN$ when only messages $m_i$ with $w_i=1$ are considered, i.e., for $\NNN$ with the new message set $\mu' = \{m_i: w_i = 1\}$. We refer to the scalar-linear network code solutions corresponding to these weight vectors as {\em partial scalar-linear network code solutions}, or {\em partial scalar-linear solutions} for short. 
\end{definition}

Note that by definition, Steiner trees are also partial scalar-linear network code solutions.
 
\begin{definition}[Simple Fractional Linear Network Code Solution]
Let $\NNN=(\nu, \epsilon, \mu, c, \AAA, S,R)$ be a capacitated network and $m_1, \ldots, m_{\lvert \mu \rvert}$ be the messages. A fractional network code $(F, \hat{k}, n, \FFF_e, \FFF_d)$ is a {\em simple fractional linear network code solution}, or {\em simple fractional linear solution} for short, if the fractional network code is linear over the finite field $F$ and can be decomposed into a set of partial scalar-linear solutions of $\NNN$ (when considered with unit edge capacities).
\end{definition}

\begin{definition}[Semi-Network Linear Coding Capacity Region]
Let $\NNN=(\nu, \epsilon, \mu, c, \AAA, S,R)$ be a capacitated network and $m_1, \ldots, m_{\lvert \mu \rvert}$ be the messages. The {\em semi-network linear coding capacity region} $\CCC'_l$ of $\NNN$ is the closure of all coding rate vectors achievable by simple fractional linear network code solutions. Note $\CCC'_l \subset \RR_+^{\lvert \mu \rvert}$. 
\end{definition}

Clearly, the network linear coding capacity region $\CCC_l$ contains the semi-network linear coding capacity region $\CCC'_l$ as the set of fractional linear code solutions is a superset of the set of simple fractional linear code solutions.

\subsection{Properties}

\begin{theorem}
Assume a finite field $F$ is given. The semi-network linear coding capacity region $\CCC'_l$, with respect to $F$, is a bounded rational polytope in $\RR_+^{|\mu|}$ and is computable.
\end{theorem}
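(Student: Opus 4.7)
The plan is to mirror the proof of Theorem~\ref{thm:routingregion}, replacing the finite set of Steiner trees $\TTT$ by a finite set $\Pi$ of all partial scalar-linear network code solutions of $\NNN$ (viewed with unit edge capacities). By the definition of simple fractional linear solutions, every such solution decomposes into elements of $\Pi$. To each $p \in \Pi$, I associate its weight vector $w^p \in \{0,1\}^{|\mu|}$ (indicating which messages it carries) and its edge usage indicator $p(e) \in \{0,1\}$ for $e \in \epsilon$. The set $\Pi$ is finite and effectively enumerable: a scalar-linear code over $F$ on $\NNN$ is specified by the global coding vector in $F^{|\mu|}$ assigned to each edge, so there are at most $|F|^{|\mu||\epsilon|}$ candidate codes, which can be listed by brute force and tested one by one to obtain $\Pi$ together with the weight vectors and edge indicators.

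With $\Pi$ in hand, every minimal simple fractional linear solution $(F,\hat{k},n,\FFF_e,\FFF_d)$ corresponds to nonnegative integers $x(p)$ satisfying
\[
\sum_{p \in \Pi} p(e)\, x(p) \leq c(e)\, n \;\;\forall e \in \epsilon, \qquad \sum_{p \in \Pi} w^p_i\, x(p) = k_i \;\;\forall\, 1 \leq i \leq |\mu|, \qquad x\geq 0.
\]
Dividing by $n$, the rational points $(x(p)/n)_{p\in\Pi}$ reachable in this way are dense in the bounded rational polytope
\[
\PPP_l \;=\; \left\{\, y \in \RR_+^{|\Pi|} \;:\; \sum_{p\in\Pi} p(e)\, y(p) \leq c(e),\ \forall e \in \epsilon \,\right\},
\]
which is bounded because edge capacities are finite and rational because all coefficients are integer. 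Then $\CCC'_l$ is exactly the image of $\PPP_l$ under the rational affine map
\[
\psi_l:\ (y(p))_{p\in\Pi} \;\longmapsto\; \left(\sum_{p\in\Pi} w^p_1 y(p),\ \ldots,\ \sum_{p\in\Pi} w^p_{|\mu|} y(p)\right),
\]
and, since affine maps preserve rationality and boundedness, $\CCC'_l$ is a bounded rational polytope in $\RR_+^{|\mu|}$. Computability follows by enumerating the vertices $v_1,\ldots,v_h$ of $\PPP_l$ with any vertex enumeration algorithm, applying $\psi_l$, and returning the convex hull of $\psi_l(v_1),\ldots,\psi_l(v_h)$ in $\RR_+^{|\mu|}$.

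The main obstacle is the density direction, namely, verifying that every rational $y \in \PPP_l$ arises from a genuine simple fractional linear solution. Given such a $y$, I would clear denominators to obtain an integer $n$ with $n y(p) \in \ZZ_{\geq 0}$ for every $p$, then allocate $n y(p)$ of the $c(e)n$ available coordinates of each edge $e$ to $p$ on pairwise disjoint coordinate blocks, which is feasible precisely because of the edge capacity inequalities defining $\PPP_l$. On the block reserved for $p$ we run the partial scalar-linear solution $p$ on its supported messages, and set $k_i = \sum_p w^p_i \cdot n y(p)$. The technical crux is then to check that the concatenation of these independent partial solutions on disjoint edge-coordinate blocks constitutes a valid $(F,\hat{k},n,\FFF_e,\FFF_d)$ satisfying the fractional solution conditions; this follows from the independence of the blocks and the correctness of each $p$ on its own message set, exactly analogous to the Steiner tree argument in Theorem~\ref{thm:routingregion}.
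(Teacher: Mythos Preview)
Your proposal is correct and takes essentially the same approach as the paper: you build the parent polytope over the finite set of partial scalar-linear solutions, project by the rational affine map recording which messages each partial solution carries, and conclude by the same vertex enumeration argument as in Theorem~\ref{thm:routingregion}. The only cosmetic difference is that the paper first partitions the partial solutions $\WWW = \WWW_1 \cup \cdots \cup \WWW_{k'}$ by their weight vectors $w_1,\ldots,w_{k'}$ and writes the affine map as $\psi_l:(x(W))_{W\in\WWW}\mapsto\bigl(\sum_i\sum_{W\in\WWW_i}[w_i]_j x(W)\bigr)_j$, whereas you index directly by $p\in\Pi$ and attach $w^p$ to each $p$; these are the same map.
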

\begin{proof}
(Polytope) The proof is similar to the proof of Theorem~\ref{thm:routingregion}. Let $\NNN =(\nu, \epsilon, \mu, c, \AAA,S,R)$ be a network. Let $w_1, \ldots, w_{k'}$ be all the weight vectors associated with $\NNN$. Let $\WWW_i$ be the set of all partial scalar-linear network code solutions that satisfy the demands corresponding to the weight vector $w_i$ and $\WWW$ be the union, $\WWW = \WWW_1 \cup \ldots \cup \WWW_{k'}$. Note that $\WWW_i$'s are finite nonempty disjoint sets. Then, any simple fractional linear code solution $(F, \hat{k}, n, \FFF_e, \FFF_d)$ can be decomposed into partial scalar-linear solutions in $\WWW$ and satisfies the following constraints:
\[
\begin{array}{llll}
\sum_{W\in \WWW} W(e) \cdot x(W) & \leq & c(e)\cdot n, & \forall e\in \epsilon \\
\sum_{i=1}^{k'}\sum_{W\in \WWW_i} [w_i]_j x(W)& = & k_j, & \forall 1\leq j \leq |\mu |\\
 x &\geq & 0, &
\end{array}
\]
where $x(W)$ is the number of times the partial scalar-linear solution $W$ is used in the fractional linear solution and $W(e)$ is an indicator that is 1 if the solution $W$ uses edge $e$, or 0 otherwise. After dividing all the variables $x(W)$ by $n$, it follows that all simple fractional linear code solutions satisfy
\begin{equation}\label{parentpolytopel}
\begin{array}{llll}
\sum_{W\in \WWW} W(e) \cdot x(W) & \leq & c(e), & \forall e\in \epsilon \\
 x &\geq & 0. &
\end{array}
\end{equation}
Using the affine map 
\[
\psi_l:(x(W))_{W\in \WWW} \mapsto \left(\sum_{i=1}^{k'}\sum_{W\in\WWW_i} [w_i]_1 x(W), \ldots, \sum_{i=1}^{k'}\sum_{W\in\WWW_{i}} [w_i]_{|\mu|} x(W)\right),
\]
we follow the similar lines of reasoning as in Theorem~\ref{thm:routingregion} to show that the semi-network linear coding capacity region $\CCC'_l$ is a bounded rational polytope in $\RR_+^{|\mu|}$. 

(Computability) We apply the same proof for $\CCC_r$ to $\CCC_l'$ using the inequalities in \eqref{parentpolytopel}. 
\end{proof}

We use $\PPP'_l$ to denote the ``parent'' polytope of $\CCC'_l$ defined by \eqref{parentpolytopel}.
\subsection{Algorithms}
We obtain algorithms and heuristics for computing the semi-network linear coding capacity region $\CCC'_l$ from algorithms and heuristics in Section~\ref{subsec:routingalgo} with little modifications; we use the polytope description of $\PPP_l$ instead of $\PPP_r$ and use the ray oracle $\OOO_{Ray}$ for $\CCC'_l$. We denote the resulting algorithms by Algorithms \textsc{\tt VertexEnum\_LCode} and \textsc{\tt FacetEnum\_LCode}. We omit the details of the algorithms. This subsection goes together with next two subsections.

\subsection{Implementations of Exact and Approximate Oracle $\OOO_{Ray}$}
\subsubsection{Algorithms}
We provide the implementations of oracle $\OOO_{Ray}$ for the semi-network linear coding capacity region $\CCC'_l$. As the region $\CCC'_l$ is a rational polytope, it suffices to consider rays with a rational slope. Given the hyperplane description of the polytope $\PPP'_l$,
\[
\begin{array}{llll}
\sum_{W\in \WWW} W(e) \cdot x(W) & \leq & c(e), & \forall e\in \epsilon \\
x &\geq & 0, &
\end{array}
\]
and a ray with a rational slope of the form $\hat{x} = \hat{q}t, t\geq 0$, we would like to compute the rational intersection point of the ray and the boundary of polytope $\CCC_l'$. The intersection point is $\lambda_{max}\hat{q}$ where $\lambda_{max}$ is the optimal value to the linear program:
\begin{equation}\label{lp:lcoding}
\begin{array}{lllll}
\max & \lambda & & & \\
\operatorname{s.t.}& \sum_{W\in \WWW} W(e)\cdot x(W) & \leq & c(e), & \forall e\in \epsilon \\
 &\sum_{i=1}^{k'}\sum_{W\in \WWW_i} [w_i]_j x(W) & \geq & \lambda q_j, & \forall 1\leq j \leq |\mu| \\
 &x, \lambda & \geq &0, & \\
\end{array}
\end{equation}
where $w_1, \ldots, w_{k'}$ are the weight vectors associated with network $\NNN$. Since the coefficients of the linear program are rational, the optimal value $\lambda_{max}$ and the corresponding solution $x$ are rational. We can use any linear programming algorithm to solve \eqref{lp:lcoding} exactly, as in Algorithm \textsc{\tt OracleRayExact\_Route}, and obtain Algorithm \textsc{\tt OracleRayExact\_LCode}. We omit the pseudocode.

Using techniques by Garg and K\"{o}nemann~\cite{garg}, we provide a combinatorial approximation algorithm, Algorithm \textsc{\tt OracleRayApprox\_LCode}, for solving the linear program \eqref{lp:lcoding} approximately. While the linear program~\eqref{lp:lcoding} looks similar to the linear program~\eqref{lp:route}, it is much harder to solve. The algorithm computes a point $\hat{r}$ such that $\lambda_{max}\hat{q}$ is between $\hat{r}$ and $(1+\omega)B\hat{r}$, for some numbers $\omega>0$ and $B\geq 1$. We assume we have oracles $\OOO_{SLinear}$ and $\OOO_{FCover}$ for the following two subproblems related to \eqref{lp:lcoding}: 

\begin{definition}[Minimum Cost Scalar-Linear Network Code Problem]\label{prob:SLinear}
Given a network $\NNN =(\nu, \epsilon, \mu, c, \AAA,S, R)$ with unit edge capacities, a finite field $F$ and a length function $l:\epsilon\rightarrow \RR_+$, compute the minimum cost scalar-linear network code solution for $\NNN$ with respect to $F$, if it exists. The cost of a solution is the sum of lengths of the edges used in the solution. If there is no scalar-linear solution, then report ``unsolvable.''
\end{definition}

Without the minimum cost condition, the above problem reduces to the decidability problem of determining whether or not a network has a scalar-linear solution, which is NP-hard by Theorem 3.2 in Lehman and Lehman~\cite{lehman}. Hence, the above problem is at least as hard as any NP-hard problem. We assume that $\OOO_{SLinear}$ solves the minimum cost scalar-linear network code problem exactly.

\begin{definition}[Fractional Covering with Box Constraints Problem]\label{prob:fcover}
Given an $n\times m$ nonnegative integer matrix $A$, a nonnegative vector $b$, a positive vector $c$ and a nonnegative integer vector $u$, compute
\begin{equation*}
\begin{array}{lllll}
\min & \sum_{j=1}^m c(j)x(j) & & & \\
\operatorname{s.t.}& \sum_j A(i,j) x(j) & \geq & b(i) , & \forall 1 \leq i \leq n \\
 & x(j) & \leq & u(j), & \forall 1\leq j \leq m \\
 &x & \geq &0. & 
\end{array}
\end{equation*}
\end{definition}

In Algorithm \textsc{\tt OracleRayApprox\_LCode}, $\OOO_{FCover}$ solves the fractional covering problem of the following form with an approximation guarantee of $B$:
\begin{equation}\label{lp:ourfcover}
\begin{array}{lllll}
\min & \sum_{i=1}^{k'} y(i) U_i(l) & & & \\
\operatorname{s.t.}& \sum_{i=1}^{k'} [w_i]_j y(i)& \geq & q_j, & \forall 1 \leq j \leq |\mu| \\
 &y(j)& \leq &\lceil q_j \rceil,& \forall 1 \leq j \leq k'\\
 &y & \geq &0, & \\
\end{array}
\end{equation}
where $w_1, \ldots, w_{k'}$ are the weight vectors associated with the network and $U_i(l)$ is the cost of the minimum cost partial scalar-linear solution in $\WWW_i$ with respect to the length function $l$.

\begin{algorithm}
\caption{Algorithm \textsc{\tt OracleRayApprox\_LCode}($\NNN$, $\omega$, $\OOO_{SLinear}$, $\OOO_{FCover}$, $B$, $\hat{q}$)}
\begin{algorithmic}[1]
\STATE Using Algorithm \textsc{\tt DSteinerTreePacking}, compute the (approximate) value $z_i$ to \eqref{lp:singlepack} for each message $m_i$ separately.
\STATE $z = \min_i \frac{z_i}{q_i}$; $scaling\_factor = \frac{|\mu|}{z}$
\STATE $\hat{q} = scaling\_factor \cdot \hat{q}$
\STATE $\eta = \frac{1}{9B} \omega$; $\delta = (|\epsilon|/(1-\eta B))^{-1/\eta B}$
\STATE $N = 2\left\lceil\frac{1}{\eta B} \log_{1+\eta} \frac{|\epsilon|}{1- \eta B} \right\rceil$
\WHILE {\TRUE}
	\STATE $t=0$; $l(e) = \frac{\delta}{c(e)}, \forall e\in \epsilon$
	\FOR {phase $i=1,\ldots, N$}
		\STATE $\gamma = 1$
		\WHILE {$\gamma >0$}
			\STATE Using $\OOO_{SLinear}$, compute $U_i(l)$ for each weight vector $w_i$ and corresponding partial scalar-linear solution $\widetilde{W}_i$ with the minimum cost.
			\STATE Using $\OOO_{FCover}$, solve \eqref{lp:ourfcover} to get the $y(1), \ldots, y(k')$ values.
			\STATE $\widetilde{W} = y(1)\widetilde{W}_1 + \cdots + y(k') \widetilde{W}_{k'}$
			\STATE $s = \max \left\{\frac{\widetilde{W}(e)}{c(e)}: e \textrm{ such that } \widetilde{W}(e) > c(e)\right\}$
			\STATE $\widetilde{W} = \frac{1}{s} \widetilde{W}$
			\STATE $\gamma = \gamma - \frac{1}{s} \gamma$
			\STATE Update $l(e) = l(e)\left(1+\eta \frac{\widetilde{W}(e)}{c(e)}\right)$.
			\IF {$\sum_e l(e)c(e) >1$}
				\STATE Goto Line \ref{lastline2}.
			\ENDIF
		\ENDWHILE
		\STATE $t = t+1$
	\ENDFOR
	\STATE $scaling\_factor = 2 scaling\_factor$; $\hat{q} = 2\hat{q}$
\ENDWHILE
\RETURN $scaling\_factor\cdot t / \log_{1+\eta} \frac{1}{\delta}$ \label{lastline2}
\end{algorithmic}
\end{algorithm}

\subsubsection{Analysis}

The corresponding dual linear program of \eqref{lp:lcoding} is 
\begin{equation}\label{lp:lcodingdual}
\begin{array}{lllll}
\min & \sum_e l(e) c(e) & & & \\
\operatorname{s.t.}& \sum_{e\in \epsilon} W(e) l(e) - \sum_{j=1}^{k} [w_i]_j z(j) & \geq & 0, & \forall 1\leq i\leq k', \forall W\in \WWW_i \\
 &\sum_{j=1}^{k} q_j z(j) & \geq & 1 & \\
 &l, z & \geq &0. & \\
\end{array}
\end{equation}

We rewrite the dual linear program~\eqref{lp:lcodingdual} as two recursively nested linear programs. Consider the following linear program derived from the dual program:
\begin{equation}\label{lp:lcoding2}
\begin{array}{lllll}
\max & \sum_{j=1}^k q_j z(j) & & & \\
\operatorname{s.t.}& \sum_{j=1}^k [w_i]_j z(j) & \leq & U_i(l), & \forall 1\leq i\leq k' \\
 &z & \geq &0, & \\
\end{array}
\end{equation}
where $U_i(l) = \min_{W\in \WWW_i} \sum_e W(e) l(e)$. Let $D(l) = \sum_e l(e)c(e)$ and $\alpha(l)$ be the optimal value of the linear program~\eqref{lp:lcoding2}. Then, solving \eqref{lp:lcodingdual} is equivalent to finding an assignment of lengths to the edges, $l:\epsilon \rightarrow \RR^+$, so as to minimize $\frac{D(l)}{\alpha(l)}$. Let $\beta$ denote the optimal value of \eqref{lp:lcodingdual}, i.e., $\beta = \min_l \frac{D(l)}{\alpha(l)}$. The dual linear program for \eqref{lp:lcoding2} is 

\begin{equation}\label{lp:lcoding2dual}
\begin{array}{lllll}
\min & \sum_{i=1}^{k'} y(i) U_i(l) & & & \\
\operatorname{s.t.}& \sum_{i=1}^{k'} [w_i]_j y(i)& \geq & q_j, & \forall 1 \leq j \leq |\mu| \\
 &y & \geq &0. & \\
\end{array}
\end{equation}
Let $\alpha'(y) = \min_{j:q_j\not= 0} \frac{\sum_{i=1}^{k'} [w_i]_j y(i)}{q_j}$ and $D'_l(y) = \sum_{i=1}^{k'} y(i) U_i(l)$. Without loss of generality, we assume that an optimal solution to \eqref{lp:lcoding2dual} satisfies $y(j)\leq \lceil q_j \rceil$ for all $j$; then \eqref{lp:lcoding2dual} is equivalent to \eqref{lp:ourfcover}. Solving \eqref{lp:lcoding2dual} is equivalent to finding an assignment of values to variables $y$, $y:\{1, \ldots, k'\}\rightarrow  \RR^+$, so as to minimize $\frac{D'_l(y)}{\alpha'(y)}$. Let $\beta'$ be the optimal value of \eqref{lp:lcoding2dual}, i.e., $\beta' = \min_y \frac{D'_l(y)}{\alpha'(y)}$. By linear programming duality, $\alpha(l) = \beta'$. Then,
\[
\beta = \min_l \frac{D(l)}{\alpha(l)} =\min_l \frac{D(l)}{\min_y \frac{D'_l(y)}{\alpha'(y)}}.
\]

First, we consider Algorithm \textsc{\tt OracleRayApprox\_LCode} with the infinite while-loop in line 6 removed, with variable $scaling\_factor$ and line 24 removed, and with the finite for-loop in line 8 replaced with an infinite while-loop.
\begin{theorem}\label{lcoding:step1}
Assume $\beta\geq 1$. For $0< \omega <1$, Algorithm \textsc{\tt OracleRayApprox\_LCode}, with modifications as explained above, returns a $(1+\omega)B$-approximate solution to the linear program~\eqref{lp:lcoding} in at most $\left\lceil \frac{\beta}{\eta B} \log_{1+\eta} \frac{|\epsilon|}{1-\eta B}\right\rceil$ number of phases.
\end{theorem}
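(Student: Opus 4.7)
The plan is to mimic the proof of Theorem~\ref{routing:step1} closely, replacing the single Steiner tree chosen per iteration with the (approximate) nonnegative combination $\widetilde{W}$ of partial scalar-linear solutions produced by the pair $(\OOO_{SLinear}, \OOO_{FCover})$. I will track the dual objective $D(l) = \sum_e l(e) c(e)$ across phases. Let $l_i$ denote the length function at the end of phase $i$ and set $D(i) = D(l_i)$, $\alpha(i) = \alpha(l_i)$. Within a phase, the inner while-loop makes $\gamma$ shrink from $1$ to $0$ via $\gamma \mapsto \gamma - \gamma/s$, and each update scales $l(e)$ by $1 + \eta\, \widetilde{W}(e)/c(e)$ where $\widetilde{W}$ is the rescaled covering solution. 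A direct calculation gives
\begin{equation*}
D(l_{\text{new}}) \;\le\; D(l_{\text{old}}) + \eta\,\tfrac{1}{s}\sum_e \widetilde{W}(e)\,l_{\text{old}}(e).
\end{equation*}

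Because $\widetilde{W} = \sum_i y(i)\widetilde{W}_i$ comes from a $B$-approximate solution to~\eqref{lp:ourfcover}, and because each $\widetilde{W}_i$ attains (exactly) the minimum cost $U_i(l)$ via $\OOO_{SLinear}$, the quantity $\sum_e \widetilde{W}(e) l_{\text{old}}(e) = \sum_i y(i) U_i(l_{\text{old}})$ is at most $B$ times the true optimum of \eqref{lp:lcoding2dual} at $l_{\text{old}}$, which equals $B \cdot \alpha(l_{\text{old}})$ by LP duality between~\eqref{lp:lcoding2} and~\eqref{lp:lcoding2dual}. Summing the telescoping bound over all steps of an iteration (which collectively consume $\gamma=1$ unit) and using monotonicity of $\alpha$ in $l$, I obtain the per-phase recurrence
\begin{equation*}
D(i) \;\le\; D(i-1) + \eta B\,\alpha(i).
\end{equation*}
Combined with the definition $\beta \le D(l)/\alpha(l)$, this yields $D(i) \le D(i-1)/(1 - \eta B/\beta)$. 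Since $D(0) = |\epsilon|\delta$, iterating and using $\beta \ge 1$ produces $D(i) \le \frac{|\epsilon|\delta}{1 - \eta B}\exp\bigl(\frac{\eta B(i-1)}{\beta(1 - \eta B)}\bigr)$, exactly the analogue of the corresponding bound in Theorem~\ref{routing:step1}.

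Let $t$ be the first phase in which the termination test $D(t) \ge 1$ fires. The above bound gives
\begin{equation*}
\frac{\beta}{t-1} \;\le\; \frac{\eta B}{(1 - \eta B)\ln\bigl((1-\eta B)/(|\epsilon|\delta)\bigr)}.
\end{equation*}
For the primal side, the standard Garg--K\"{o}nemann scaling (identical to Claim 5.1 in~\cite{garg}) shows that after dividing the accumulated flow by $\log_{1+\eta}(1/\delta)$, the $t-1$ completed phases yield a feasible simple fractional linear solution with $\lambda > (t-1)/\log_{1+\eta}(1/\delta)$. Thus the ratio $\zeta$ of the dual optimum to this feasible primal objective satisfies
\begin{equation*}
\zeta \;<\; \frac{\beta}{t-1}\log_{1+\eta}\tfrac{1}{\delta} \;\le\; \frac{\eta B}{(1-\eta B)\ln(1+\eta)}\cdot\frac{\ln(1/\delta)}{\ln\bigl((1-\eta B)/(|\epsilon|\delta)\bigr)}.
\end{equation*}
Substituting $\delta = (|\epsilon|/(1-\eta B))^{-1/\eta B}$ makes the last fraction equal $(1-\eta B)^{-1}$, and using $\ln(1+\eta)\ge \eta - \eta^2/2$ gives $\zeta \le (1 - \eta B)^{-3}B$. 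For $\eta = \omega/(9B)$ with $0<\omega<1$, an elementary computation (the same one invoked in Appendix~\ref{app:comp2}, carried out with $\eta B$ in place of $\eta A$) shows $(1 - \eta B)^{-3} \le 1 + \omega$, giving the claimed $(1+\omega)B$-approximation. Finally, weak duality $1 \le \zeta$ combined with $\zeta < \beta\log_{1+\eta}(1/\delta)/(t-1)$ bounds the number of phases by $\bigl\lceil \tfrac{\beta}{\eta B}\log_{1+\eta}\tfrac{|\epsilon|}{1-\eta B}\bigr\rceil$. The main obstacle is the double approximation in the inner loop: I must confirm carefully that $\OOO_{FCover}$'s $B$-approximation composes correctly with the exact minimum costs returned by $\OOO_{SLinear}$, so that the single factor $B$ (rather than $B^2$) propagates into the dual increment bound via the LP duality argument above.
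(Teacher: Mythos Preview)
Your proposal is correct and follows essentially the same approach as the paper's own proof. Both arguments bound the dual increment per step via the $B$-approximation guarantee of $\OOO_{FCover}$ together with LP duality between \eqref{lp:lcoding2} and \eqref{lp:lcoding2dual}, use monotonicity of $\alpha(l)$ to pass to $\alpha(i)$, obtain the per-phase recurrence $D(i)\le D(i-1)+\eta B\,\alpha(i)$, and then invoke the analysis of Theorem~\ref{routing:step1} verbatim with $B$ replacing $A$; the paper is simply terser, writing the per-step scaling as $(\gamma_{i,j-1}-\gamma_{i,j})$ and explicitly deferring everything after the recurrence.
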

\begin{proof}
Let $l_{i,j-1}$ denote the length function $l$ and $\gamma_{i,j-1}$ the variable $\gamma$ at the start of the $j$-th iteration (of the while-loop in line 10) in phase $i$. Let $\widetilde{W}_{i,j}$ denote the fractional solution $\widetilde{W}$ computed in the $j$-th iteration in phase $i$. Let $l_{i,0}$ be the length function at the start of phase $i$, or equivalently, at the end of phase $i-1$. For simplicity, we denote $D(l_{i+1, 0})$ and $\alpha(l_{i+1, 0})$ by $D(i)$ and $\alpha(i)$. Note that for each phase $i$ and iteration $j$,
\begin{align*}
D(l_{i,j}) &= \sum_e l(e) c(e) \\
 &= D(l_{i,j-1}) + \eta \sum_e l_{i,j-1}(e) \widetilde{W}_{i,j} (e)\\
 &= D(l_{i,j-1}) + \eta \sum_e l_{i,j-1}(e) (\gamma_{i,j-1}-\gamma_{i,j}) [y(1)\widetilde{W}_1 + \ldots + y(k')\widetilde{W}_{k'}]_e \\
 &\leq D(l_{i,j-1}) + \eta (\gamma_{i,j-1}-\gamma_{i,j}) B \alpha(l_{i,j-1}) \\
 &\leq D(l_{i,j-1}) + \eta (\gamma_{i,j-1}-\gamma_{i,j}) B \alpha(l_{i+1, 0}),
\end{align*}
where $\widetilde{W}_1, \ldots,\widetilde{W}_{k'}$ and $y(1), \ldots, y(k')$ are the partial scalar-linear solutions and variable $y$ from line 11. Note that we used the fact that $\alpha$ is a nondecreasing function to obtain the last inequality. Summing up the above inequality over the iterations, we obtain
\begin{align*}
D(l_{i+1, 0}) &\leq D(l_{i+1, -1}) + \eta (\gamma_{i+1, -1}-\gamma_{i+1, 0}) B \alpha(l_{i+1,0})\\
 &\leq D(l_{i+1, -2}) + \eta (\gamma_{i+1,-2}-\gamma_{i+1,0})B\alpha(l_{i+1,0}) \\
 &\phantom{XXXXXX} \vdots \\
 &\leq D(l_{i,0}) + \eta B \alpha(l_{i+1,0}), 
\end{align*}
where $l_{i+1,-j}$ (similarly, $\gamma_{i+1, -j}$) denote the length function $l$ (variable $\gamma$) at the start of the $j$-th from the last iteration in phase $i$. It follows that $D(i)\leq D(i-1) + \eta B \alpha(i)$. From here, the analysis is the same as Theorem~\ref{routing:step1}, with $B$ replacing $A$. 
\end{proof}

While the running time of Algorithm \textsc{\tt OracleRayApprox\_LCode} depends on $\beta$ by Theorem~\ref{lcoding:step1}, we can reduce/increase it by scaling the $\hat{q}$ vector/capacities appropriately. We now remove the assumption that $\beta \geq 1$ and analyze the running time of Algorithm \textsc{\tt OracleRayApprox\_LCode} as a whole.

\begin{theorem}\label{lcoding:step2}
For $0<\omega< 1$, Algorithm \textsc{\tt OracleRayApprox\_LCode} computes a $(1+\omega)B$-approx\-imate solution to the linear program~\eqref{lp:lcoding} in time $O(\omega^{-2} (\log A|\mu| + |\epsilon|) B \log |\epsilon| \cdot (T_{FCover} + k' T_{SLinear}))$, where $T_{FCover}$ is the time required to solve the fractional covering problem by $\OOO_{FCover}$ within an approximation guarantee $B$ and $T_{SLinear}$ is the time required to solve the minimum cost scalar-linear network code problem exactly by $\OOO_{SLinear}$.
\end{theorem}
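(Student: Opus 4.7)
The plan is to parallel the proof of Theorem~\ref{routing:step2}, applying Theorem~\ref{lcoding:step1} together with the doubling trick to remove the hypothesis $\beta \geq 1$. First I would verify that the preprocessing in lines 1--3 controls $\beta$. Algorithm \textsc{\tt DSteinerTreePacking} produces $z_i$ with $z_i \leq z_i^* \leq A z_i$, where $z_i^*$ is the single-message maximum fractional Steiner tree packing rate. Since any Steiner tree is a partial scalar-linear solution, simple fractional linear coding subsumes Steiner tree packing, and the capacity-sharing / convex-combination argument from the routing case yields $z/|\mu| \leq \beta \leq A z$ for the unscaled problem. Scaling $\hat{q}$ by $|\mu|/z$ in line 3 therefore places $\beta$ into the range $[1, A|\mu|]$. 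The doubling step in lines 23--24 effectively halves the current $\beta$ per failed round, so by Theorem~\ref{lcoding:step1} the algorithm terminates within $O(\log(A|\mu|))$ rounds, for a total of $O(N \log(A|\mu|))$ phases of line 8.

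The key new step, relative to the routing case, is bounding the inner iterations of the while-loop on line 10 per phase. Each iteration is of one of two kinds: if $s \leq 1$ (equivalently, the residual $\widetilde{W}$ already fits within edge capacities), the iteration drives $\gamma$ to zero and ends the phase; otherwise $s > 1$, the rescaling $\widetilde{W} \leftarrow \widetilde{W}/s$ saturates the edge attaining the maximum of $\widetilde{W}(e)/c(e)$, and the subsequent length update multiplies its $l(e)$ by exactly $1+\eta$. Since each $l(e)c(e)$ starts at $\delta$ and the algorithm terminates as soon as $\sum_e l(e)c(e) > 1$, each edge can be saturated at most $\log_{1+\eta}(1/\delta)$ times. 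The total number of iterations across all phases is therefore $O(N\log(A|\mu|) + |\epsilon|\log_{1+\eta}(1/\delta))$, and each iteration performs $k'$ calls to $\OOO_{SLinear}$ and one call to $\OOO_{FCover}$, for work $T_{FCover} + k'T_{SLinear}$. Substituting $N = O((B/\omega^2)\log|\epsilon|)$ and $\log_{1+\eta}(1/\delta) = O((B/\omega^2)\log|\epsilon|)$, both consequences of $\eta = \Theta(\omega/B)$ and the choice of $\delta$, yields the claimed overall running time.

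The main obstacle I anticipate is justifying the edge-saturation count rigorously: unlike the routing case, where every non-phase-ending iteration saturates a specific edge with a concrete additive capacity decrement, the inner loop here uses multiplicative rescaling by $s$. The crucial observation that makes the argument go through is that after the line-15 rescaling, the edge attaining the maximum has $\widetilde{W}(e)/c(e) = 1$, so the line-17 update deterministically multiplies its $l(e)$ by exactly $1 + \eta$. With this in hand, the remaining bookkeeping — the doubling trick, the phase counting via Theorem~\ref{lcoding:step1}, and the final parameter substitution — transfers essentially verbatim from the proof of Theorem~\ref{routing:step2}, with $A$ replaced by $B$ throughout.
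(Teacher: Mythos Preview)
Your proposal is correct and follows essentially the same approach as the paper's own proof: scale $\hat{q}$ using the single-message packing values to force $1 \leq \beta \leq A|\mu|$, invoke the doubling trick to bound the total number of phases by $O(N\log A|\mu|)$, bound the extra inner iterations by $|\epsilon|\log_{1+\eta}(1/\delta)$ via the edge-saturation argument, account for $k'$ calls to $\OOO_{SLinear}$ and one call to $\OOO_{FCover}$ per iteration, and substitute $\eta = \Theta(\omega/B)$. Your explicit justification that the edge attaining $s$ has its length multiplied by exactly $1+\eta$ after the line-15 rescaling is in fact more detailed than what the paper writes; the paper simply asserts ``in all iterations except the last, we increase the length of an edge by a factor of $1+\eta$'' and defers the rest to Theorem~\ref{routing:step2}.
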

\begin{proof}
Let $z_i^*$ be the exact maximum fractional Steiner tree packing value for message $m_i$ in line 1 and let $z^* = \min_i \frac{z_i^*}{q_i}$. Note that $z^*$ is an upper bound on the maximum rate at which the demands can be satisfied by a simple fractional linear code solution. Since $z_i \leq z_i^* \leq Az_i$, $\frac{z}{|\mu|} \leq \beta \leq A z$. We scale the $\hat{q}$ vector and get $1\leq \beta \leq A|\mu|$. The assumption $\beta\geq 1$ is satisfied, but now $\beta$ could be as large as $A|\mu|$. We employ the doubling trick with $N=\lceil\frac{1}{\eta B} \log_{1+\eta} \frac{|\epsilon|}{1+\eta B} \rceil$. Then, the total number of phases is at most $N\log A|\mu|$. Each phase consists of variably many number of iterations of the while-loop in line 10 and, in all iterations except the last, we increase the length of an edge by a factor of $1+\eta$. Hence, the total number of iterations exceeds the total number of phases by at most $|\epsilon| \log_{1+\eta} \frac{1}{\delta}$. The total number of calls to the oracle $\OOO_{FCover}$ is at most $N\log A|\mu| + |\epsilon|\log_{1+\eta} \frac{1}{\delta}$. For each iteration, we call the oracle $\OOO_{SLinear}$ exactly $k'$ times to compute the $U_i(l)$ values and, hence, the total number of calls to the oracle $\OOO_{SLinear}$ is at most $(N\log A|\mu| + |\epsilon|\log_{1+\eta} \frac{1}{\delta}) k'$. From here, the proof follows Theorem~\ref{routing:step2} closely. 
\end{proof}

\subsection{Implementations of Oracles $\OOO_{SLinear}$ and $\OOO_{FCover}$}

In this subsection, we discuss implementations of oracles $\OOO_{SLinear}$ and $\OOO_{FCover}$.

\subsubsection{Oracle $\OOO_{SLinear}$}

We only consider implementations of exact oracle $\OOO_{SLinear}$ for the minimum cost scalar-linear network code problem (Definition~\ref{prob:SLinear}). The assumption that the finite field $F$ is fixed for the computation of semi-network linear coding capacity region is important; it ensures termination of algorithms for $\OOO_{SLinear}$, given that the decidability of the linear coding problem without a fixed finite field is unknown at this time. 

A brute force approach is to loop through all possible subset of active edges and try all possible combinations of global linear coding vectors (see Section~\ref{sec:globallinear}) on these edges in time $O(2^{|\epsilon|} |F|^{|\mu||\epsilon|})$. For each edge $e=(x,y)$, it takes $O(|\mu| I)$ time to check if the global coding vector on $e$ is the span of global coding vectors on in-edges of $x$, where $I$ is the maximum in-degree of any node. The total running time is $O(2^{|\epsilon|} |F|^{|\mu||\epsilon|} |\mu||\epsilon| I)$ and is exponential in not only in $|\epsilon|$, but also in the number of messages, $|\mu|$. The brute force approach requires $O(|F|^{|\mu||\epsilon|})$ space. 

We propose a faster algorithm of our own that solves the minimum cost scalar-linear network code problem using dynamic programming. First, we relabel nodes so that edges go from a lower-numbered node to a higher-numbered one and arrange the nodes in a line in order. This can be done as the network is acyclic and by a topological sort algorithm. Let $n_1, \ldots, n_{|\nu|}$ be the nodes in order. Second, we create states indexed by a triple ($i$, $\epsilon_i$, $\phi_i$), where
\begin{enumerate}
\item $i$ is an integer, $1\leq i \leq |\nu|-1$,
\item $\epsilon_i$ is the set of edges that have the start node in $\{n_1, \ldots, n_i\}$ and the end node in $\{n_{i+1}, \ldots$, $n_{|\nu|} \}$, and 
\item $\phi_i$ is the global coding vectors on edges in $\epsilon_i$.
\end{enumerate}

In each state $(i, \epsilon_i, \phi_i)$, we store the minimum cost scalar-linear solution, if it exists, with the set of global coding vectors $\phi_i$ on $\epsilon_i$ when the network is restricted to nodes $n_1, \ldots, n_i$. In essence, we consider the restricted network of nodes $n_1, \ldots, n_i$ and compute its minimum cost scalar-linear solution based on the minimum cost scalar-linear solutions found on the restricted network of nodes $n_1, 
\ldots, n_{i-1}$. See Algorithm \textsc{\tt OracleSLinear\_DP} for more details. Note that $valid\_next\_state$ and $valid\_prev\_state$ are linked lists.

\begin{theorem}
Algorithm \textsc{\tt OracleSLinear\_DP} solves the minimum cost scalar-linear network code problem in time $O(|F|^{2|\mu| \Phi} |\nu| |\mu| \Phi^2)$ and space $O(|F|^{|\mu| \Phi})$, where $\Phi = \max_i |\epsilon_i|$. 
\end{theorem}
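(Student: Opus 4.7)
The plan is to prove the complexity bounds by decomposing the argument into three parts: bounding the number of DP states at each level, bounding the cost of a single transition, and bounding the space. First I would observe that at level $i$ the frontier $\epsilon_i$ is determined by the topological order of the nodes, so the only freedom in a state $(i, \epsilon_i, \phi_i)$ comes from the assignment $\phi_i$, which consists of $|\epsilon_i| \leq \Phi$ global coding vectors in $F^{|\mu|}$. Hence the number of admissible $\phi_i$ at level $i$ is at most $|F|^{|\mu||\epsilon_i|} \leq |F|^{|\mu|\Phi}$, summing to $O(|\nu| |F|^{|\mu|\Phi})$ states in total, and the space bound $O(|F|^{|\mu|\Phi})$ follows since the DP recursion only needs the layer currently being computed and the one immediately preceding it.

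Next I would analyze the transition from level $i$ to level $i+1$, which corresponds to ``processing'' node $n_{i+1}$. Passing from $\epsilon_i$ to $\epsilon_{i+1}$ removes the in-edges of $n_{i+1}$ (whose coding vectors are recorded in $\phi_i$) and adds its out-edges (whose coding vectors are recorded in $\phi_{i+1}$). A transition is valid precisely when every coding vector assigned to an out-edge of $n_{i+1}$ lies in the linear span, over $F^{|\mu|}$, of the coding vectors on the in-edges of $n_{i+1}$ together with the canonical basis vectors for messages generated at $n_{i+1}$; and, when $n_{i+1}$ is a receiver, the canonical basis vector of each demanded message lies in that same span. This is a span-membership query over $F^{|\mu|}$ with at most $\Phi$ candidate vectors, which can be decided by a single Gaussian elimination in time $O(|\mu|\Phi^2)$. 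The number of pairs $(\phi_i, \phi_{i+1})$ to examine at level $i$ is at most $|F|^{|\mu||\epsilon_i|} \cdot |F|^{|\mu||\epsilon_{i+1}|} \leq |F|^{2|\mu|\Phi}$, and summing over the $|\nu|$ levels yields the claimed total running time of $O(|F|^{2|\mu|\Phi} |\nu| |\mu| \Phi^2)$. Correctness then follows by induction on $i$, with DP invariant: the value stored at $(i, \epsilon_i, \phi_i)$ equals the minimum cost of any partial scalar-linear code on the subnetwork induced by $\{n_1, \ldots, n_i\}$ realizing $\phi_i$ on $\epsilon_i$; the transition check enforces exactly the local edge-function and demand constraints at $n_{i+1}$, and the final answer is the minimum over valid final-level states.

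The main obstacle will be the correctness argument rather than the counting: one must show that all receiver demands (local to each node) fold into the validity check of the transition corresponding to that node, and that messages generated at non-initial nodes are correctly handled by making the appropriate canonical basis vectors available only at the transition step that processes the generating node. A secondary issue is justifying that the span-membership test really costs $O(|\mu|\Phi^2)$ arithmetic operations over $F$, which is standard Gaussian elimination on an at-most-$|\mu|$-by-$\Phi$ matrix but should be pointed out so that no hidden $|F|$ factor sneaks into the transition cost.
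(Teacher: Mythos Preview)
Your proposal is correct and follows essentially the same approach as the paper's proof: count at most $|F|^{|\mu|\Phi}$ states per level, enumerate all $|F|^{2|\mu|\Phi}$ pairs of consecutive-level states, check consistency in $O(|\mu|\Phi^2)$ time, multiply by $|\nu|$ levels, and use the two-layer sliding window for space. The paper's proof is in fact terser than yours; it simply says one ``check[s] if the global coding vectors in $\phi_i$ follow from those in $\phi_{i-1}$'' without spelling out the span-membership test, the handling of demands and generated messages, or the inductive correctness invariant that you supply.
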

\begin{proof}
Let $\Phi = \max_i |\epsilon_i|$. Then, the number of states with a specific pair of $i$ and $\epsilon_i$ is at most $|F|^{|\mu| \Phi}$ and the time to compute the states of the form $(i,\epsilon_i, \phi_i)$ from the states of the form $(i-1, \epsilon_{i-1}, \phi_{i-1})$ takes $O(|F|^{2|\mu|\Phi}|\mu| \Phi^2)$, by simply looping pairs of the states $(i,\epsilon_i, \phi_i)$ and $(i-1, \epsilon_{i-1}, \phi_{i-1})$ and checking if the global coding vectors in $\phi_i$ follow from those in $\phi_{i-1}$. As $i$ ranges from 1 to $|\nu|-1$, the total running time of the algorithm is $O(|F|^{2|\mu| \Phi} |\nu| |\mu| \Phi^2)$. The total space required is $O(|F|^{|\mu| \Phi})$ if we use the ``sliding window'' trick where we only keep states in two consecutive levels (corresponding to two consecutive values of $i$) at any time. 
\end{proof}

\begin{algorithm}
\caption{Algorithm \textsc{\tt OracleSLinear\_DP}($\NNN$, $l$)}
\begin{algorithmic}[1]
\STATE Sort nodes with a topological sort algorithm: $n_1, \ldots, n_{|\nu|}$.
\STATE $valid\_next\_state \leftarrow \emptyset$; $valid\_prev\_state \leftarrow (0,\emptyset, \emptyset)$, the trivial empty solution.
\FOR {$i=1, \ldots, |\nu|-1$}
	\STATE Update $valid\_prev\_state$ so that only states that satisfy demands on $n_i$ exist in the list.
	\STATE $valid\_next\_state \leftarrow \operatorname{NULL}$
	\FOR {each possible coding vector combination $\phi_{i+1}$ on $\epsilon_{i+1}$}
		\STATE Find, if possible, a state in $valid\_prev\_state$ that leads to a minimum cost solution for $(i+1, \epsilon_{i+1}, \phi_{i+1})$. 
		\STATE If successful, add the state $(i+1, \epsilon_{i+1}, \phi_{i+1})$ to $valid\_next\_state$.
	\ENDFOR
	\STATE $valid\_prev\_start \leftarrow valid\_next\_state$
\ENDFOR
\STATE From $valid\_prev\_state$, pick the minimum cost solution that satisfies all the demands at $n_{|\nu|}$.
\RETURN the minimum cost solution found or ``unsolvable'' if no such solution exists.
\end{algorithmic}
\end{algorithm}

Clearly, the asymptotic behavior of Algorithm \textsc{\tt OracleSLinear\_DP} is much better than that of the brute force approach when $\Phi \ll |\epsilon|$ in the network. We summarize algorithms for oracle $\OOO_{SLinear}$ with Table~\ref{table:slinear}:

\begin{table}[h!]
\centering
\caption{Implementations of oracle $\OOO_{SLinear}$}
\label{table:slinear}
\begin{tabular}{|l|l|l|l|}
\hline 
Algorithm for $\OOO_{SLinear}$ & Approximation Ratio & Time & Space \\
\hline
BruteForce& 1 & $O(2^{|\epsilon|} |F|^{|\mu||\epsilon|} |\mu| |\epsilon| I)$& $O(|F|^{|\mu| |\epsilon|})$\\
Algorithm \textsc{\tt OracleSLinear\_DP} & 1 & $O( |F|^{2|\mu| \Phi} |\nu| |\mu| \Phi^2)$ & $O(|F|^{|\mu| \Phi})$\\
\hline
\end{tabular}
\end{table}

\subsubsection{Oracle $\OOO_{FCover}$}
The oracle $\OOO_{FCover}$ solves the fractional covering problem (Definition~\ref{prob:fcover}). We can solve the problem with a polynomial-time linear program solver such as the ellipsoid algorithm. In our problem of networks, the number of variables of the linear program \eqref{lp:ourfcover} is at most $2^{|\mu|}$ and the number of constraints is $|\mu|$. Hence, a polynomial-time linear program solver will give an exact solution in time polynomial in $2^{|\mu|}$. As the ellipsoid algorithm could be slow in practice and could depend on $2^{|\mu|}$ poorly, the actual running time might not be practical. Fleischer~\cite{fleischer} proposed a combinatorial approximation algorithm for the covering problem that solves within an approximation ratio of $(1+\omega)$ by using $O(\omega^{-2} 2^{|\mu|} \log (c^{\rm T} u))$ calls to an oracle that returns a most violated constraint (where $\omega >0$). Note that Fleischer's algorithm is proposed for nonnegative integer matrix $A$ and nonnegative integer vectors $b$, $c$ and $u$, but the algorithm still works for our formulation of the fractional covering problem. As there are $|\mu|$ constraints and at most $2^{|\mu|}$ variables, the oracle for the most violated constraint can be implemented in time $O(2^{|\mu|} |\mu|)$, and this leads to an algorithm that computes a $(1+\omega)$-approximate solution in time $O(\omega^{-2} 2^{2|\mu|} |\mu| \log (c^{\rm T} u))$. We summarize with Table~\ref{table:fcover}:

\begin{table}[h!]
\centering
\caption{Implementations of oracle $\OOO_{FCover}$}
\label{table:fcover}
\begin{tabular}{|l|l|l|}
\hline 
Algorithm for $\OOO_{cover}$ & Approximation Ratio $B$ & Time \\
\hline
Ellipsoid Algorithm& 1 & $polynomial(2^{|\mu|}, |\mu|)$\\
Fleischer~\cite{fleischer} & $1+\omega$ & $O(\omega^{-2} 2^{2|\mu|} |\mu| \log (c^{\rm T} u))$\\
\hline
\end{tabular}
\end{table}


\section{Examples}
In this section, we provide examples of network capacity regions in the case of two messages. Instead of Algorithm \textsc{\tt FacetEnum\_Route} (or \textsc{\tt FacetEnum\_LCode}), we use Algorithm \textsc{\tt BoundaryTrace2D} as the polytope reconstruction algorithm. Algorithm \textsc{\tt BoundaryTrace2D} is similar to the algorithm given by Cole and Yap~\cite{cole} in that the common main idea is that three collinear boundary points define a face. Algorithm \textsc{\tt BoundaryTrace2D} is different from Cole and Yap~\cite{cole} in that it considers rays that start from the origin. See the pseudocode for the details of Algorithm \textsc{\tt BoundaryTrace2D}. Note $\LLL$ is a linked list. We refer to Appendix~\ref{app:proof1} for a proof of its correctness. The algorithm works both for $\CCC_r$ and $\CCC_l'$ with an appropriate oracle $\OOO_{Ray}$. 

\begin{algorithm}
\caption{Algorithm \textsc{\tt BoundaryTrace2D}($\NNN$, $\OOO_{Ray}$)}
\begin{algorithmic}[1]
\STATE $\LLL\leftarrow \emptyset$
\STATE Using $\OOO_{Ray}$, compute the intersection points on $x=e_i\cdot t, t\geq 0$ for $i=1,2$ and obtain $(x_1, y_1)$ and $(x_2, y_2)$.
\STATE Insert $(x_1, -1)$, $(x_1, y_1)$,$(x_2, y_2)$, and $(-1, y_2)$ onto the head of $\LLL$, in that order.
\STATE $cur\_pointer\leftarrow$ the head of $\LLL$    
\WHILE {there exist three distinct points after $cur\_pointer$}
	\STATE Let $pt1$, $pt2$, $pt3$, and $pt4$ be the four consecutive points starting at $cur\_pointer$.
	\STATE Let line $l_1$ go through $pt1$ and $pt2$, and line $l_2$ go through $pt3$ and $pt4$.
	\IF {intersection point $p$ exists between $l_1$ and $l_2$}
		\STATE Using $\OOO_{Ray}$, compute the boundary point $r$ on $x=pt, t\geq 0$. 
		\IF {$r \not= pt2$ and $r \not= pt3$}
			\STATE Insert point $r$ into $\LLL$ between $pt2$ and $pt3$.
		\ELSE
			\STATE Advance $cur\_pointer$.
		\ENDIF
	\ELSE
		\STATE Advance $cur\_pointer$.
	\ENDIF
\ENDWHILE
\RETURN	$\LLL \cup (0,0)$, except $(x_1,-1)$ and $(-1,y_2)$.
\end{algorithmic}
\end{algorithm}

The example networks are given in Figure~\ref{fig:samplenetworks}. Nodes 1 and 2 are the source nodes and nodes 5 and 6 are the receiver nodes. To compute the exact description of $\CCC_r$ or $\CCC_l'$, we hard-coded the corresponding linear programs and used a linear program solver, linprog, in MATLAB. As the networks are simple, it was easy to enumerate all Steiner trees and partial scalar-linear solutions, and the corresponding linear programs were small. While this approach worked for the particular examples we present, it might not be suitable for bigger or more complicated networks. To get around the numerical issues, we used the tolerance of .05; for instance, two points whose corresponding coordinates differ by at most .05 are considered the same point. For the networks, we assume $\AAA=\{0,1\}$. For the semi-network linear coding capacity regions, we assume the finite field $F$ is $\FF_2$. To obtain approximate intersections points, we simply used an approximate oracle $\OOO_{Ray}$ in place of the linear program solver. We note that the approximate oracle $\OOO_{Ray}$ worked well with Algorithm \textsc{\tt BoundaryTrace2D} and led to its successful termination for these networks, but this may not hold for arbitrary networks in general.

\begin{figure}
	\centering
	\begin{minipage}{2.0in}
	\centering
	\includegraphics[width=1.5in]{./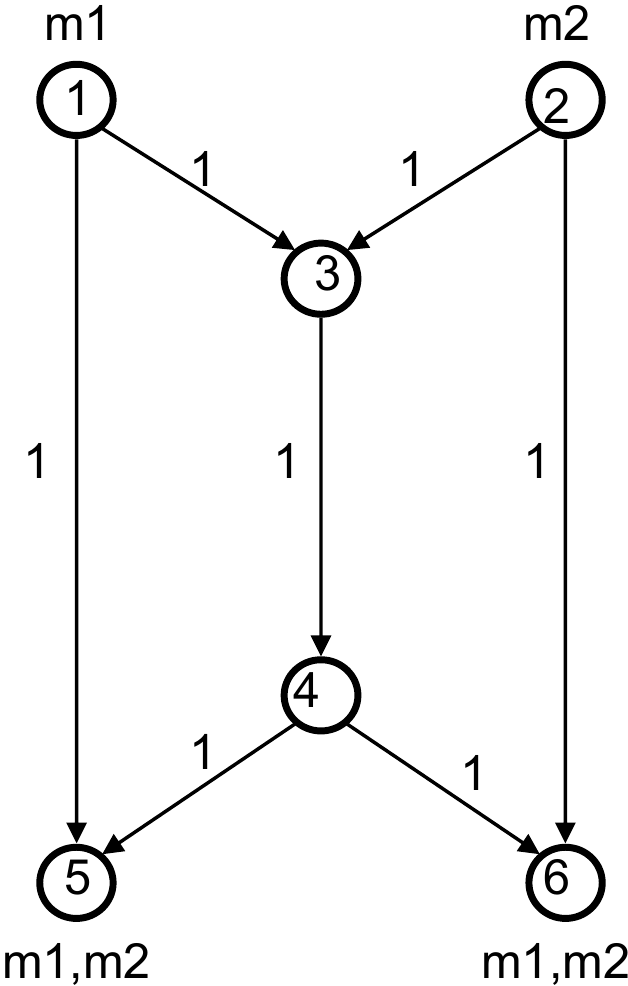}
	\label{sample1:diag}
	\end{minipage}
	\quad
	\begin{minipage}{2.0in}
	\centering
	\includegraphics[width=1.5in]{./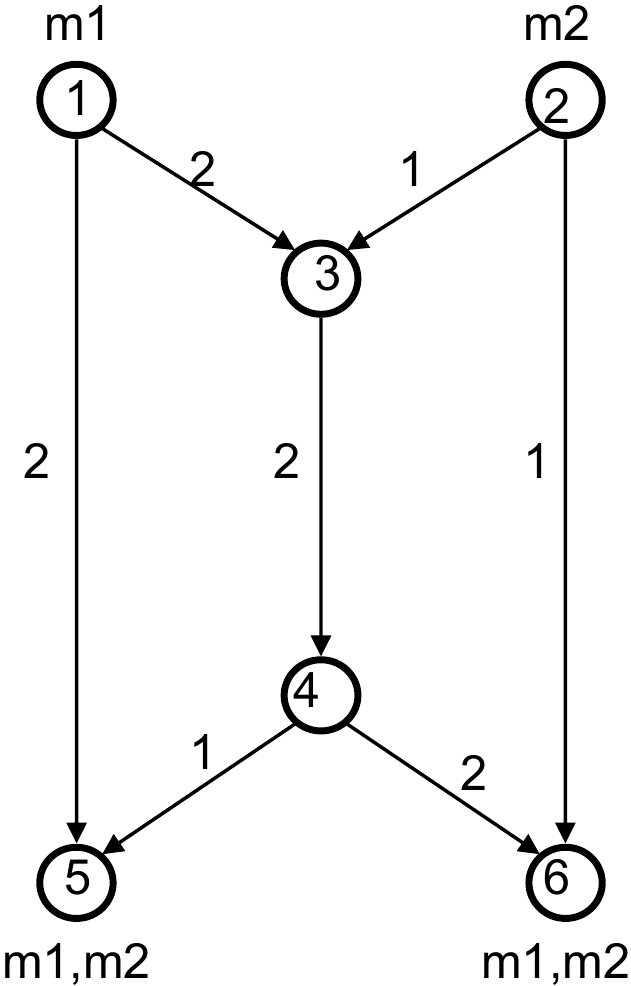}
	\label{sample2:diag}
	\end{minipage}
	\quad
	\begin{minipage}{2.0in}
	\centering
	\includegraphics[width=1.5in]{./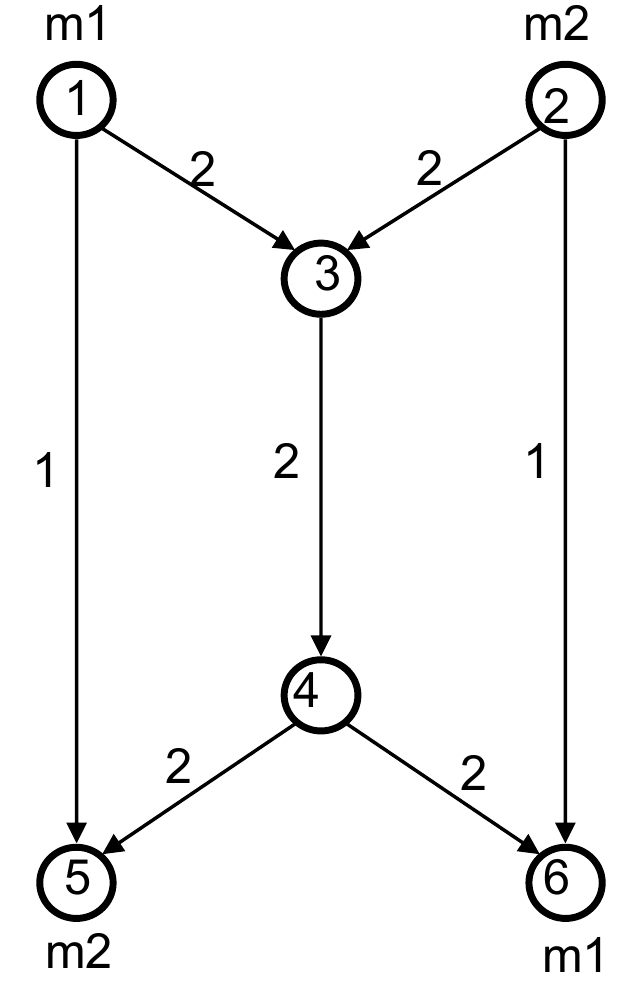}
	\label{sample3:diag}
	\end{minipage}
	\caption{The Butterfly network with unit edge capacities and its variants: $\NNN_1$, $\NNN_2$, and $\NNN_3$.}
	\label{fig:samplenetworks}
\end{figure}

\begin{figure}[!t]
\parbox[!t]{0.45\textwidth }{
\includegraphics[width=2.9in]{./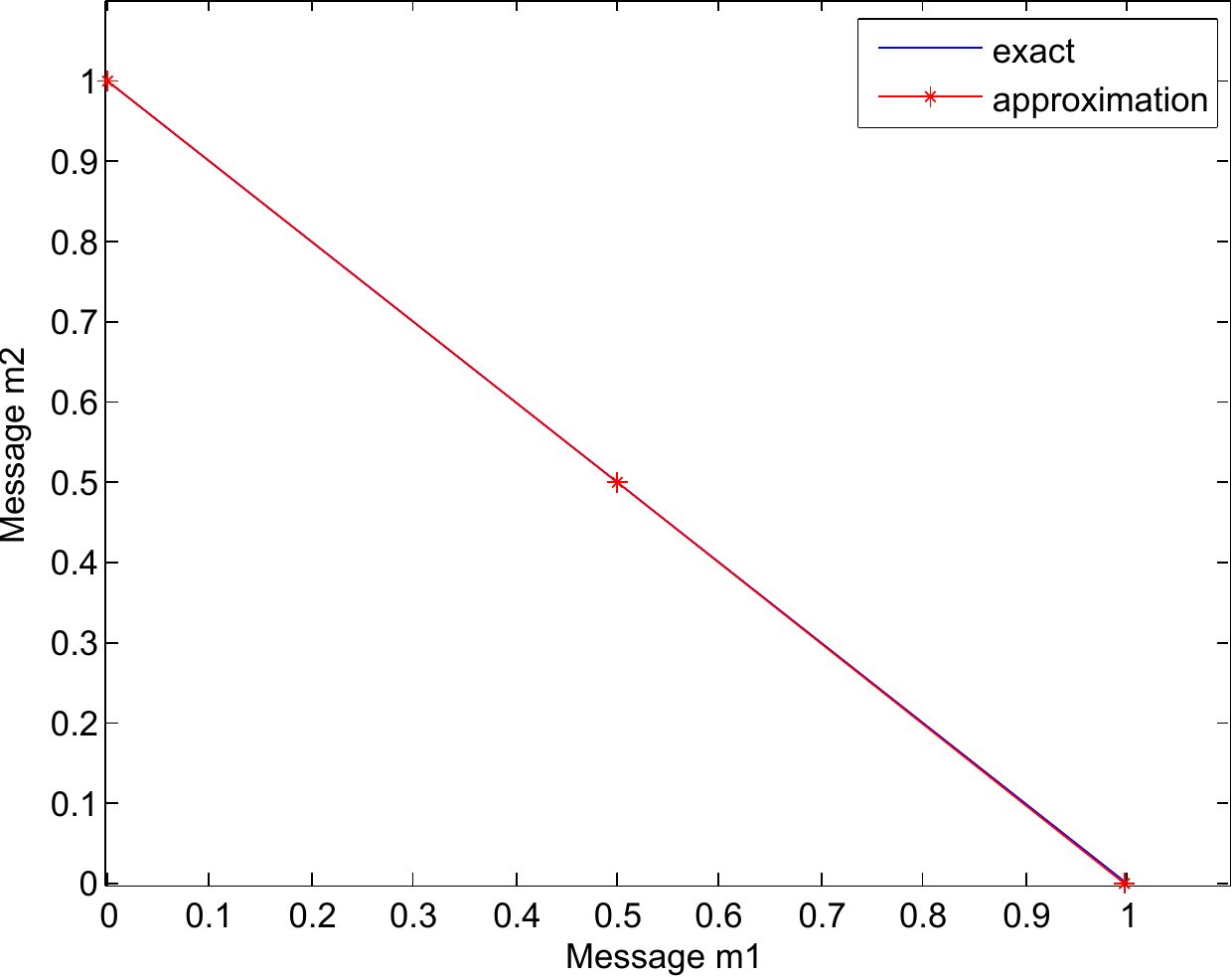}
\captionof{figure}{Network routing capacity region $\CCC_r$ of $\NNN_1$}
\label{sample1:route}
}
\hfill
\parbox[!t]{0.45\textwidth }{
\includegraphics[width=2.9in]{./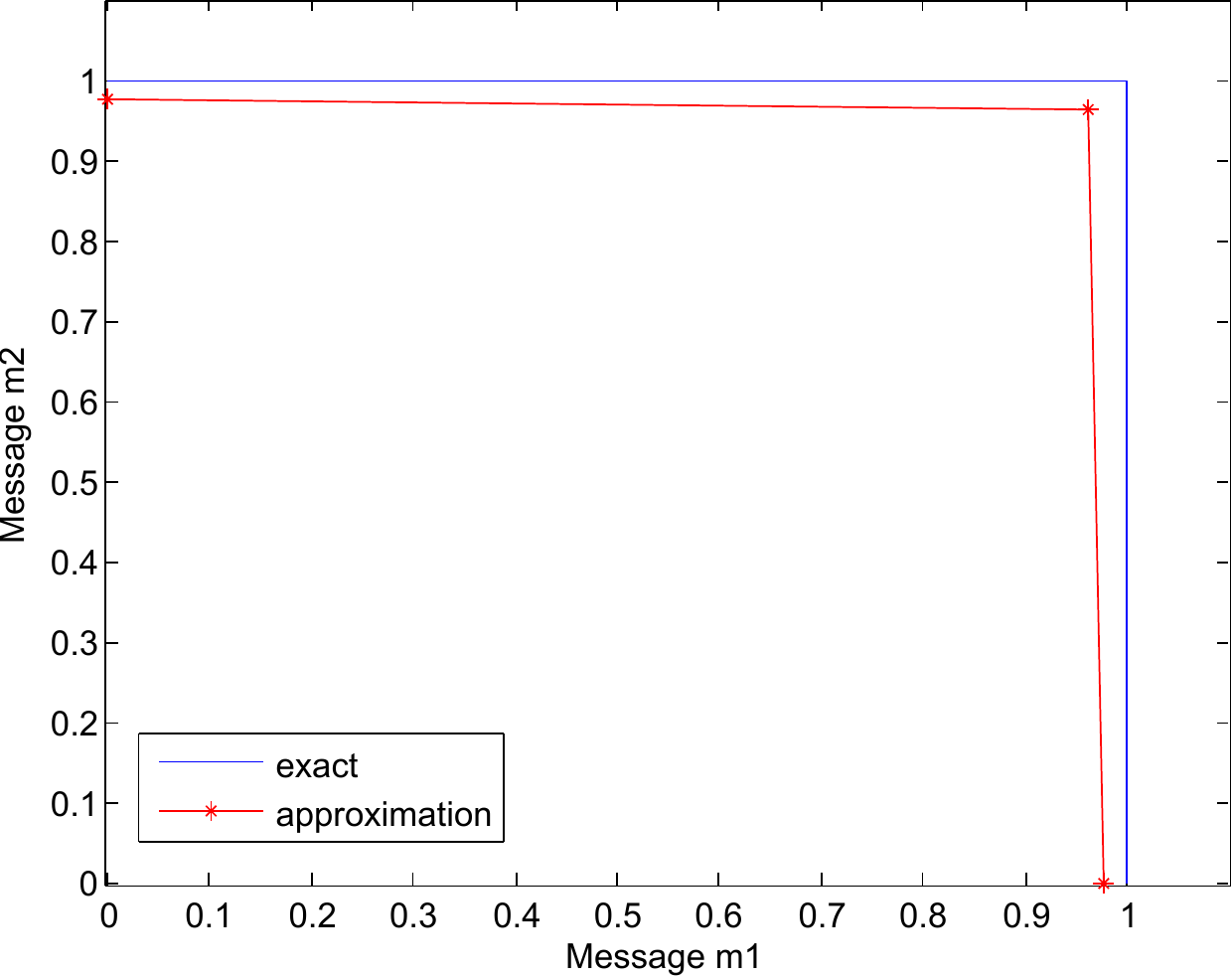}
\captionof{figure}{Semi-network linear coding capacity region $\CCC_l'$ of $\NNN_1$, with respect to $\FF_2$}
\label{sample1:lcoding}
}
\end{figure}

For the network routing capacity region of network $\NNN_1$ in Figure~\ref{sample1:route}, we used $\omega=.5$ in Algorithms \textsc{\tt OracleRayApprox\_Route} and \textsc{\tt DSteinerTreePacking} for $\OOO_{Ray}$ and the algorithm due to Charikar et al.~\cite{charikar} for $\OOO_{DSteiner}$ with the approximation ratio $A=O(\log^2 |\nu|)$. For the semi-network linear coding capacity region of network $\NNN_1$ in Figure~\ref{sample1:lcoding}, we used $\omega=.5$ in Algorithms \textsc{\tt OracleRayApprox\_Route} and \textsc{\tt DSteinerTreePacking} for $\OOO_{Ray}$, the algorithm due to Fleischer~\cite{fleischer} with the approximation ratio $B=1.1$ for $\OOO_{FCover}$, and Algorithm \textsc{\tt OracleSLinear\_DP} for $\OOO_{SLinear}$.

\begin{figure}[!t]
\parbox[!t]{0.45\textwidth }{
\includegraphics[width=2.9in]{./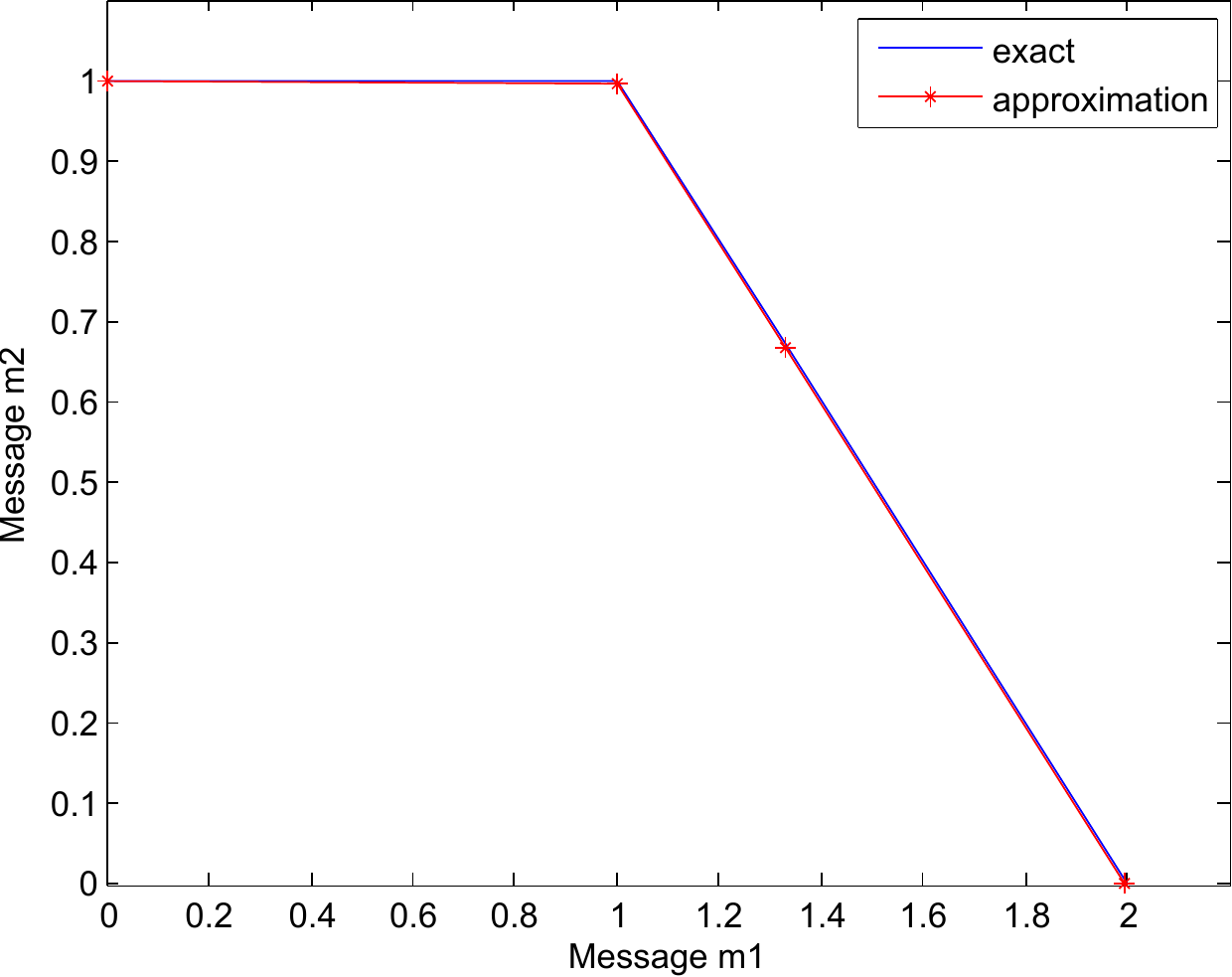}
\captionof{figure}{Network routing capacity region $\CCC_r$ of $\NNN_2$}
\label{sample2:route}
}
\hfill
\parbox[!t]{0.45\textwidth }{
\includegraphics[width=2.9in]{./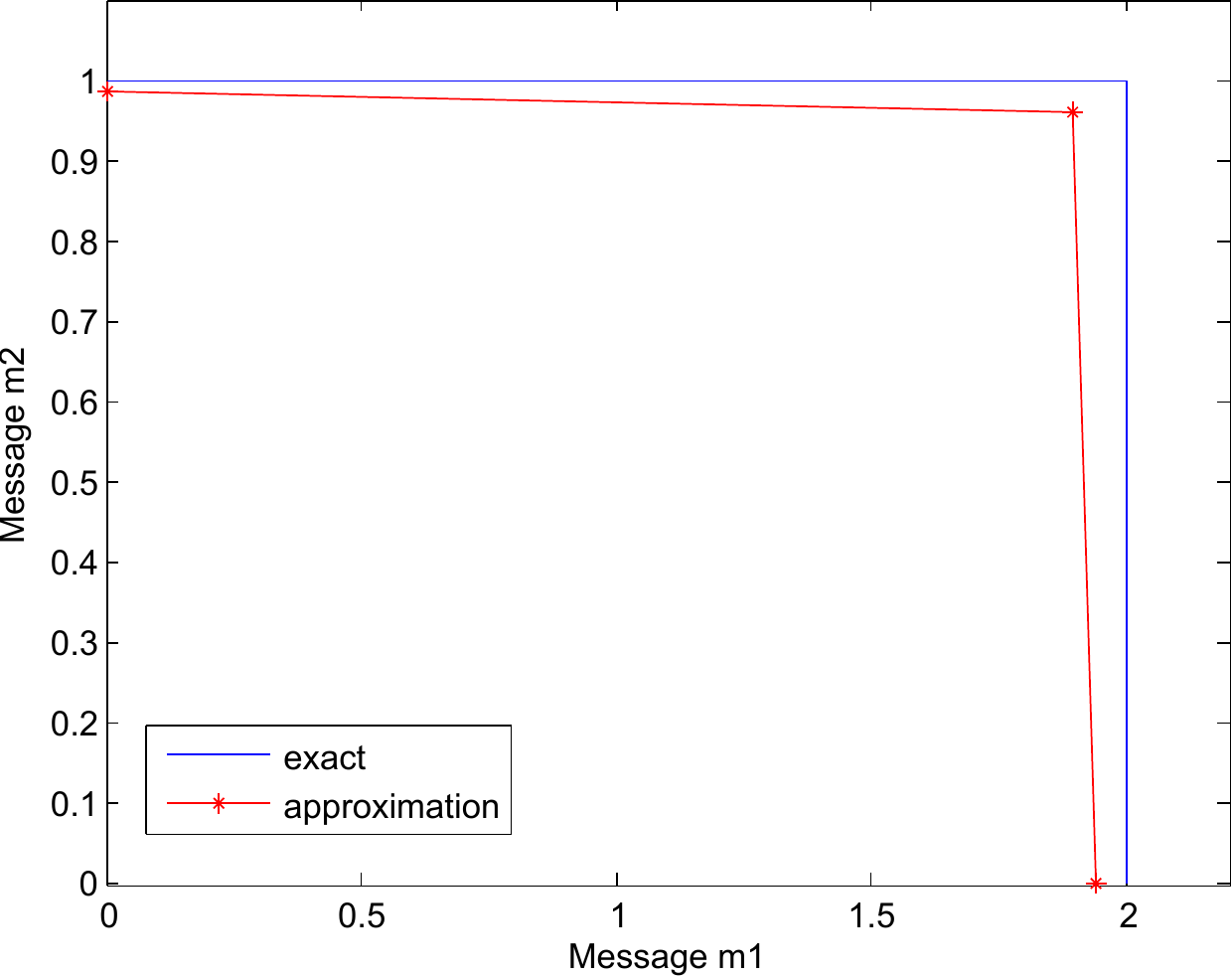}
\captionof{figure}{Semi-network linear coding capacity region $\CCC_l'$ of $\NNN_2$, with respect to $\FF_2$}
\label{sample2:lcoding}
}
\end{figure}

For the network routing capacity region of network $\NNN_2$ in Figure~\ref{sample2:route}, we used $\omega=.5$ in Algorithms \textsc{\tt OracleRayApprox\_Route} and \textsc{\tt DSteinerTreePacking} for $\OOO_{Ray}$ and the algorithm due to Charikar et al.~\cite{charikar} for $\OOO_{DSteiner}$ with the approximation ratio $A=O(\log^2 |\nu|)$. For the semi-network linear coding capacity region of network $\NNN_2$ in Figure~\ref{sample2:lcoding}, we used $\omega=.5$ in Algorithms \textsc{\tt OracleRayApprox\_Route} and \textsc{\tt DSteinerTreePacking} for $\OOO_{Ray}$, the algorithm due to Fleischer~\cite{fleischer} with the approximation ratio $B=1.1$ for $\OOO_{FCover}$, and Algorithm \textsc{\tt OracleSLinear\_DP} for $\OOO_{SLinear}$.

\begin{figure}[!t]
\parbox[!t]{0.45\textwidth }{
\includegraphics[width=2.9in]{./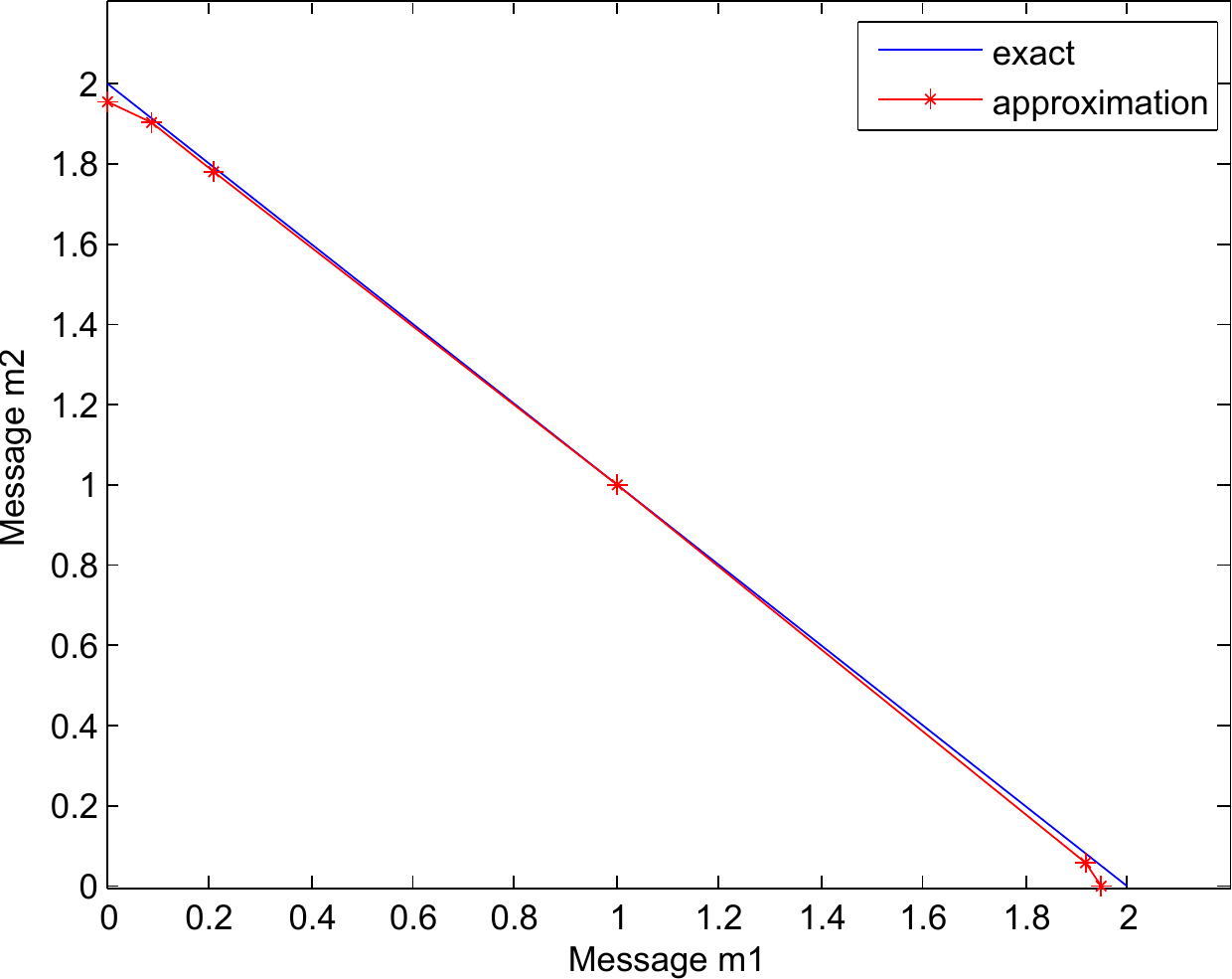}
\captionof{figure}{Network routing capacity region $\CCC_r$ of $\NNN_3$}
\label{sample3:route}
}
\hfill
\parbox[!t]{0.45\textwidth }{
\includegraphics[width=2.9in]{./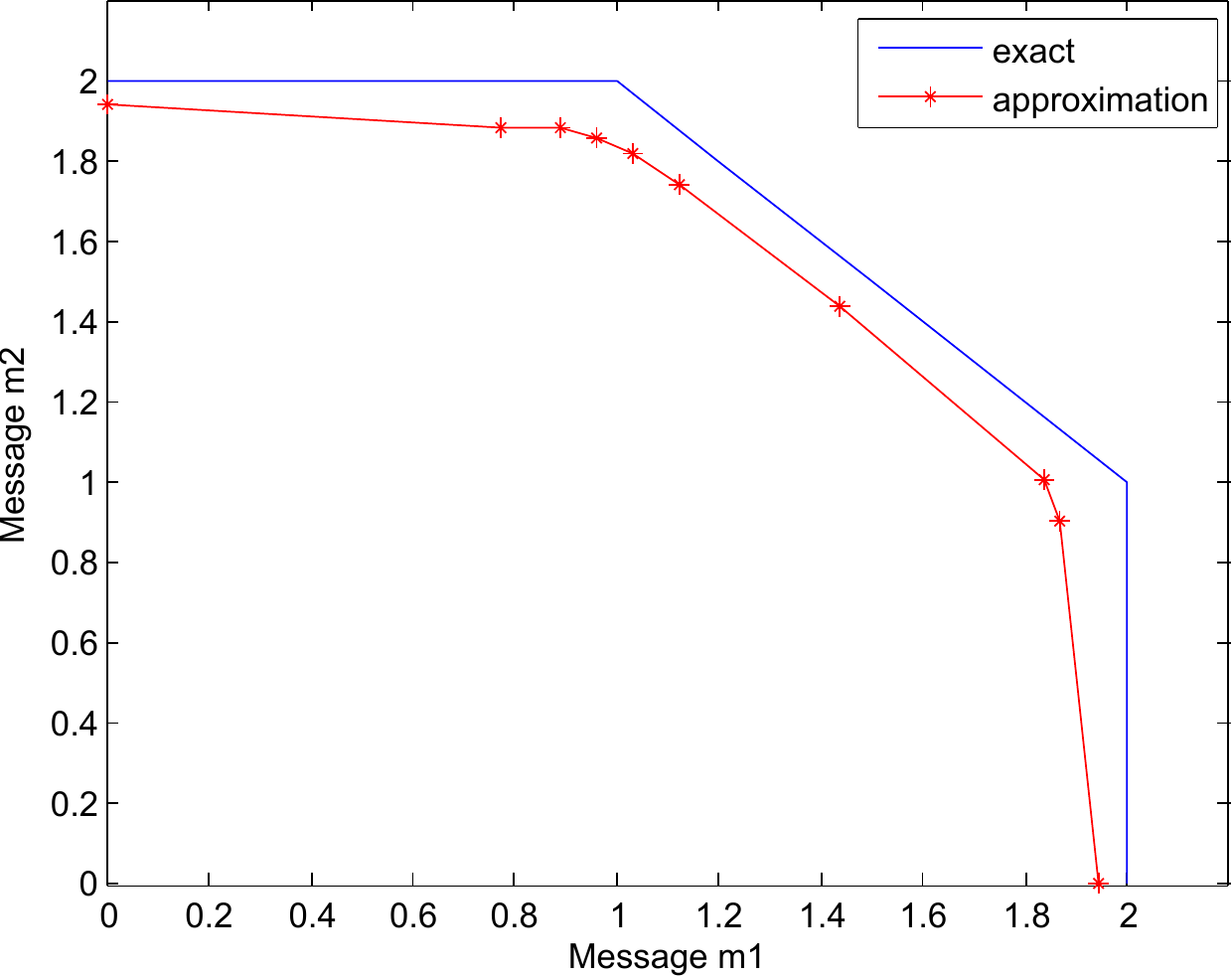}
\captionof{figure}{Semi-network linear coding capacity region $\CCC_l'$ of $\NNN_3$, with respect to $\FF_2$}
\label{sample3:lcoding}
}
\end{figure}

For the network routing capacity region of network $\NNN_3$ in Figure~\ref{sample3:route}, we used $\omega=.5$ in Algorithms \textsc{\tt OracleRayApprox\_Route} and \textsc{\tt DSteinerTreePacking} for $\OOO_{Ray}$ and the brute force algorithm for $\OOO_{DSteiner}$ with the approximation ratio $A=1$. For the semi-network linear coding capacity region of network $\NNN_3$ in Figure~\ref{sample3:lcoding}, we used $\omega=.9$ in Algorithm \textsc{\tt OracleRayApprox\_Route}, $\omega=.5$ in Algorithm \textsc{\tt DSteinerTreePacking}, the algorithm due to Fleischer~\cite{fleischer} with the approximation ratio $B=1.1$ for $\OOO_{FCover}$, and Algorithm \textsc{\tt OracleSLinear\_DP} for $\OOO_{SLinear}$. 

\section{Discussion and Conclusion}
In this chapter, we defined the network capacity region of networks analogously to the rate regions in information theory. In the case of the network routing capacity region, we showed that the region is a rational polytope and provided exact algorithms and approximation heuristics for computing the polytope. In the case of the network linear coding capacity region, we defined an auxiliary polytope, the semi-network linear coding capacity region, that is a rational polytope and that inner bounds the network linear coding capacity region. We provided exact algorithms and approximation heuristics for computing the auxiliary polytope. We noted that the algorithms and heuristics presented in this chapter are not polynomial time schemes. 

Our results have a few straightforward extensions. We can design membership algorithms that given a rate vector, determines whether or not there exists a fractional network code solution that achieves it from algorithms we provided, for the network routing capacity region and semi-network linear coding capacity region. Also, we can compute corresponding approximate solutions $(x(T))_{T\in \TTT}$ for linear program \eqref{lp:route} by storing counters for Steiner trees $\widetilde{T}$ in Algorithm \textsc{\tt OracleRayApprox\_Route}; the same is true for $(x(W))_{W\in \WWW}$.

While our results apply to the networks defined on directed acyclic multigraphs, they generalize to directed networks with cycles and undirected networks straightforwardly. In the computation of the network routing capacity region, we consider minimal fractional routing solutions that can be decomposed into a set of Steiner trees defined appropriately for directed networks with cycles (or undirected networks) and modify the algorithms correspondingly. In the computation of the semi-network linear coding capacity region, we consider simple fractional linear coding solutions that can be decomposed into a set of partial scalar-linear solutions defined appropriately for the networks and modify the algorithms correspondingly.

In connection to Cannons et al.~\cite{cannons:routing}, our work essentially addresses a few problems proposed by Cannons et al.: whether there exists an efficient algorithm for computing the network routing capacity and whether there exists an algorithm for computing the network linear coding capacity. It follows from our work that there exist combinatorial approximation algorithms for computing the network routing capacity and for computing a lower bound of the network linear coding capacity.

We conclude with a few open problems related to our work: determine how good of an inner bound the semi-network linear coding capacity region is to the network linear coding capacity region; design an efficient algorithm, if possible, for computing the linear coding capacity region and network capacity region; design an efficient algorithm, if possible, for the minimum cost scalar-linear network code problem.

\appendix
\chapter{Computations and Proofs}
\section{Computations}
\subsection{Computation in Theorem~\ref{thm:singlepack}}\label{app:comp1}
From Section 2 of Garg and K\"{o}nemann~\cite{garg} and the fact that oracle $\OOO_{DSteiner}$ is an approximate oracle with the approximation guarantee of $A$, it follows that the ratio of dual optimal and primal feasible solutions, $\zeta$, satisfies 
\[
\zeta < (1-\eta)^{-2} A.
\]
We choose $\eta$ appropriately to make sure that the $\zeta< (1+\omega)A$. It suffices to choose $\eta$ such that $(1-\eta)^{-2} \leq 1+\omega$, or equivalently, $(1-\eta)^{-1} \leq (1+\omega)^{1/2}$. By the Taylor Series Theorem, for $0< \omega <1$, we have 
\[
1+\frac{1}{2}\omega - \frac{1}{8} \omega^2 \leq (1+\omega)^{1/2}.
\]
Note that for $0< \eta \leq \frac{1}{2}$, 
\[
(1-\eta)^{-1} = 1 + \eta + \eta^2 + \ldots = 1 + \eta \frac{1}{1-\eta} \leq 1+ 2\eta.
\]
Then, for $\eta = \frac{3}{16} \omega$, we have that $0 < \omega<1$ implies $0< \eta\leq \frac{1}{2}$ and that
\[
(1-\eta)^{-1} \leq 1+ 2\eta \leq 1+\frac{3}{8}\omega \leq 1+\frac{1}{2}\omega - \frac{1}{8} \omega^2 \leq (1+\omega)^{1/2}.
\]

\subsection{Computation in Theorem~\ref{routing:step2}}\label{app:comp2}
We want to choose $\eta$ appropriately so that $(1-\eta A)^{-3} \leq  (1+\omega)$, or equivalently, that $(1-\eta A)^{-1} \leq  (1+\omega)^{1/3}$. By the Taylor Series Theorem, for $0<\omega <1$, we have 
\[
1+\frac{1}{3}\omega - \frac{1}{9} \omega^2 \leq (1+\omega)^{1/3}.
\]
Note that for $0< \eta A \leq \frac{1}{2}$, 
\[
(1-\eta A)^{-1} = 1 + \eta A + (\eta A)^2 + \ldots = 1 + \eta A \frac{1}{1-\eta A} \leq 1+ 2\eta A.
\]
Then, for $\eta = \frac{1}{9 A} \omega$, we have that $0< \omega < 1$ implies $0< \eta A\leq \frac{1}{2}$ and that
\[
(1-\eta A)^{-1} \leq 1+ 2\eta A \leq 1+\frac{2}{9}\omega \leq 1+\frac{1}{3}\omega - \frac{1}{9} \omega^2 \leq (1+\omega)^{1/3}.
\]

\section{Proofs}
\subsection{Proof for Algorithm \textsc{\tt BoundaryTrace2D}}\label{app:proof1}

Without loss of generality, we prove that Algorithm \textsc{\tt BoundaryTrace2D} is correct for the computation of the network routing capacity region $\CCC_r$. Note that $\LLL$ is a linked list of computed boundary points (and the two auxiliary points $(x_1, -1)$ and $(-1, y_2)$) ordered clockwise. When two lines intersect, we mean that the lines intersect in exactly one point. When a line goes through a line segment, we mean that the line intersects the line segment in exactly one point. 

\begin{theorem}\label{thm:bdtrace}
Algorithm \textsc{\tt BoundaryTrace2D} calls the oracle $\OOO_{Ray}$ $O(n)$ times where $n$ is the number of edges in the polygon $\CCC_r$.
\end{theorem}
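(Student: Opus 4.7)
The plan is to bound the number of oracle calls by separately bounding how often an iteration of the while loop inserts a new boundary point into $\LLL$ versus how often it advances \textit{cur\_pointer}. Since each iteration of the while loop invokes $\OOO_{Ray}$ at most once (only when the intersection point $p$ exists), the total number of oracle calls is at most the total number of iterations, and each iteration either inserts a point or advances the pointer.

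First I would argue the insertion bound. Each point inserted into $\LLL$ in line 11 is obtained by the oracle as a boundary point of $\CCC_r$ lying on a ray from the origin; because the condition $r \ne pt_2$ and $r \ne pt_3$ is checked before insertion, each such $r$ is distinct from the points already in $\LLL$. Moreover, any boundary point returned by $\OOO_{Ray}$ that is not collinear with consecutive existing neighbors must coincide with a vertex of the polygon $\CCC_r$ (since the boundary consists of straight edges, a non-collinear point can only occur at a corner). As $\CCC_r$ has exactly $n$ vertices, at most $n$ distinct vertices can ever be inserted, so there are $O(n)$ insertion iterations and therefore $O(n)$ oracle calls of this type. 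In particular, the size of $\LLL$ remains $O(n)$ throughout the execution (the four initial entries plus $O(n)$ insertions).

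Next I would bound the advancement iterations. The pointer \textit{cur\_pointer} is initialized at the head of $\LLL$ and only ever moves forward by one position, never resetting. The loop terminates as soon as fewer than three distinct points remain after \textit{cur\_pointer}, so the total number of advancement steps across the entire execution is at most $|\LLL|$ at termination, which I just showed is $O(n)$. Combining the two bounds gives a total of $O(n)$ iterations and hence $O(n)$ oracle calls.

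The main obstacle I anticipate is the geometric justification that inserted points genuinely correspond to polygon vertices, so that the insertion count is bounded by $n$ rather than by something larger. Concretely, one has to verify that when the two secant lines through $(pt_1,pt_2)$ and $(pt_3,pt_4)$ intersect at a point $p$ off the boundary, the ray from the origin through $p$ crosses $\bd \CCC_r$ at a point lying strictly between $pt_2$ and $pt_3$ along the boundary, and that this new point is a vertex of the polygon (since otherwise the collinearity test $r = pt_2$ or $r = pt_3$ would have fired). This is where the "three collinear boundary points define a face" observation from Cole and Yap enters: once four consecutive points in $\LLL$ all lie on two edges of $\CCC_r$, the inserted boundary test correctly certifies the edge rather than introducing a spurious vertex, which is exactly what allows the pointer to advance past a completed segment without wasting oracle calls.
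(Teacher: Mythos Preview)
Your decomposition into ``insertion iterations'' and ``advancement iterations'' is natural, and the advancement bound is fine once the insertion bound is established. The problem is the insertion bound itself: your central claim---that every point $r$ inserted in line~11 must be a vertex of $\CCC_r$---is false.

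The test in line~10 is \emph{not} a collinearity test between $r$ and its neighbors; it only checks whether $r$ coincides with $pt_2$ or $pt_3$. Consider the very first pass through the while loop: the four points are $(x_1,-1)$, $(x_1,0)$, $(0,y_2)$, $(-1,y_2)$, so $l_1$ is the vertical line $x=x_1$ and $l_2$ is the horizontal line $y=y_2$, meeting at $p=(x_1,y_2)$. The ray through $p$ hits $\bd\CCC_r$ at some point $r$ which in general lies in the \emph{interior} of an edge of $\CCC_r$, not at a vertex. It is certainly distinct from $pt_2$ and $pt_3$, so it is inserted. More generally, whenever $pt_1,pt_2$ lie on one edge and $pt_3,pt_4$ lie on another non-adjacent edge, the new point $r$ will land somewhere on an intermediate edge and typically not at a corner. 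So you cannot bound the number of insertions by the number of vertices.

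The paper closes this gap with a different combinatorial lemma: for each edge $e$ of $\CCC_r$, at most three boundary points are ever computed in the \emph{interior} of $e$ (excluding its endpoints). Once an edge already carries three interior points $p_1,p_2,p_3$ in $\LLL$, any window of four consecutive points either contains at most two of them---so the new ray lands outside $e$---or contains all three, in which case the intersection point $p$ coincides with $pt_2$ (or $pt_3$) and no new point is inserted. This bounds $|\LLL|$ by $O(n)$ directly (at most $3$ interior points per edge plus the vertices), and then, as you observed, each point in $\LLL$ can be charged with at most a constant number of oracle calls. Your overall accounting scheme (insert vs.\ advance) is compatible with this corrected insertion bound; only the justification for why insertions are $O(n)$ needs to be replaced.
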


\begin{proof}
We first show that, for each edge $e$ of $\CCC_r$, the number of distinct boundary points computed in the interior of the edge (excluding the vertices) is at most 3 throughout the execution of Algorithm \textsc{\tt BoundaryTrace2D}. Let edge $e$ have 3 distinct boundary points computed in its interior: $p_1$, $p_2$, and $p_3$, ordered clockwise. Then, any 4 consecutive boundary points in $\LLL$ cannot produce a ray in line 9 that goes through the interior of edge $e$. If the set of 4 consecutive points contains at most 2 of $p_1$, $p_2$, and $p_3$, then, clearly, the ray does not go through the interior of $e$. If the set of 4 consecutive points contains all 3 points $p_1, p_2$, and $p_3$ and a point before $p_1$ in $\LLL$, then the intersection point $p$ in line 8 is exactly $pt2$ and no new boundary point is created. Similarly, it can be shown in other remaining cases that no new boundary point is introduced.

Note that, for each boundary point $b$ in $\LLL$, there can be at most 3 calls to oracle $\OOO_{Ray}$ associated with it; a call to compute the boundary point for the first time, a call if $b$ appears as $pt2$ in line 6 and as the boundary point $r$ in line 9, and a call if $b$ appears as $pt3$ in line 6 and as the boundary point $r$ in line 9. As there are $n-2$ edges and $n-1$ vertices on the outer boundary of $\CCC_r$ and $O(1)$ calls to $\OOO_{Ray}$ for each boundary point computed, the statement follows.
\end{proof}

The correctness of \textsc{\tt BoundaryTrace2D} follows from the fact that each vertex on the outer boundary of $\CCC_r$ is computed by oracle $\OOO_{Ray}$ and that after enough boundary points have been computed, the $cur\_pointer$ in the algorithm will advance to termination. It is easy to see that all vertices of the polygon $\CCC_r$ are included in the returned list $\LLL$. Assume a vertex $v$ is missed in $\LLL$ and the algorithm terminated successfully. Assume that $pt1$, $pt2$, $pt3$ and $pt4$ are the four consecutive points in the resulting list such that the line segment between the origin and $v$ and the line segment between $pt2$ and $pt3$ intersect. Note that $pt2$ and $pt3$ are on different edges of $\CCC_r$. Then, it is easy to see that we have the ray in line 9 going through the interior of the line segment between $pt2$ and $pt3$ and, hence, a new boundary point would have been added. Therefore, $cur\_pointer$ should not have advanced to termination, and this contradicts that the vertex $v$ is missing from $\LLL$. 

\clearpage
\newpage

\begin{singlespace}
\bibliography{main}
\bibliographystyle{plain}
\end{singlespace}

\end{document}